\documentclass[12pt]{article}
\usepackage{amssymb, amsmath}
\usepackage{amsfonts}
\usepackage{color}
\usepackage{mathrsfs}
\usepackage{cite}

\oddsidemargin -0mm\evensidemargin -0mm\topmargin -12mm
\textheight 654pt\textwidth  458pt

\newcommand{\ret}{\nonumber \\}

\newcommand{\Section}[1]%
{\section{#1}\setcounter{equation}{0}%
\setcounter{theorem}{0}}
\newtheorem{theorem}{Theorem}
\newtheorem{lemma}[theorem]{Lemma}
\newtheorem{coro}[theorem]{Corollary}
\newtheorem{pro}[theorem]{Proposition}

\newenvironment{proof}[1]%
{\par\noindent{\em #1:\ }}%
{~\rule{2mm}{2mm}\par\bigskip}
\def\re{\mathbb{R}}

\def\ze{\mathbb{Z}}

\begin{document}
\newpage\thispagestyle{empty}
{\topskip 2cm
\begin{center}
{\Large\bf Nambu-Goldstone Modes\\  
\bigskip
for Superconducting Lattice Fermions\\}
\bigskip\bigskip
{\Large Tohru Koma\\}
\bigskip\medskip
{\small Department of Physics, Gakushuin University (retired), 
Mejiro, Toshima-ku, Tokyo 171-8588, JAPAN\\}
\smallskip

\end{center}
\vfil
\noindent
{\bf Abstract:} \\
We present a lattice model for superconducting fermions whose nearest-neighbour two-body interactions 
are a Bardeen-Cooper-Schrieffer-type pairing on the hypercubic lattice $\ze^d$ with the dimension $d\ge 3$. 
Although these effective interactions between two electrons are believed to be caused by electron-phonon interactions, 
we assume that the interactions between two electrons are of short range without phonons.
For the model, we prove the existence of the long-range order of the superconductivity at low temperatures, 
and also prove the existence of a gapless excitation above the infinite-volume ground state. 
Namely, there appears a Nambu-Goldstone mode which is associated with the U(1) symmetry breaking. 
The corresponding Nambu-Goldstone boson is exactly a Cooper pair, which consists of spin up and down electrons. 
\par\noindent
\bigskip
\hrule
\bigskip


\vfil}\newpage
\tableofcontents
\newpage
\Section{Introduction}

The aim of this paper is to prove the existence of phase transitions at non-zero low temperatures 
for interacting fermion systems. In particular, we focus on superconductivity phase transitions, 
which have been often described by using Bardeen-Cooper-Schrieffer(BCS) theory \cite{BCS}  
as a microscopic effect caused by a condensation of Cooper pairs.
Within the BCS theory, the electron-phonon interaction induces an effective interaction 
that binds spin up and down electrons into a paired state. 

In the present paper, we assume that the effective interactions between two electrons are of short range without phonons. 
More precisely, those are assumed to be a nearest-neighbour two-body interaction 
which is a Bardeen-Cooper-Schrieffer-type pairing \cite{BCS} on the hypercubic lattice $\ze^d$ with 
the dimensions $d\ge 3$. Namely, the electrons hop on the lattice $\ze^d$, being affected by the BCS-type pairing interactions. 
For this lattice model, we prove the existence of the long-range order of the superconductivity at low temperatures, 
and also prove the existence of a gapless excitation above the infinite-volume ground state. 
Namely, there appears a Nambu-Goldstone mode \cite{Nambu,NJL,Goldstone,GSW} which is associated with the U(1) symmetry breaking. 
The corresponding Nambu-Goldstone boson is exactly a Cooper pair, which consists of spin up and down electrons. 
Thus, our results justify the BCS picture for superconductivity, although the excitations exhibit a gapless spectrum 
above the symmetry-breaking ground state.

In order to prove these statements, we basically rely on the method of reflection positivity \cite{FSS,DLS,FILS1,FILS,KLS,KLS2,JP}. 
However, we have to deal with the reflection positivity for fermion systems.  
Although there have been many applications of the reflection positivity to various fermion systems so far, 
it is not so easy to realize the reflection positivity for a fermion system. 
The reflection positivity on the spin degrees of freedom of fermions was initiated by Lieb \cite{Lieb}, 
and has been applied to many fermion systems \cite{KuboKishi,UTS,Tian1,ShenQiu,SQT,YS,Shen,Shen2,Tian,YoshidaKatsura,FL,Miyao}. 
On the other hand, the application of the usual real space reflection positivity is restricted to 
the flux phase problem \cite{LiebFlux,MN} and one-dimensional systems \cite{LN,CJLP,MOT}. 
Further, by relying on the properties of Majorana fermions, the reflection positivity yields some results 
\cite{JP2,WHXX,YoshidaKatsura}.  

In the present paper, we realize the real space reflection positivity for the fermion system mentioned above with 
a certain hopping Hamiltonian on the hypercubic lattice $\ze^d$ with the dimension $d\ge 3$ by relying 
the Majorana representation \cite{JP} of the fermions. 

\bigskip\bigskip

\noindent
{\bf Acknowledgements:} I would like to thank Yasuhiro Tada for valuable discussions. 
The present paper is an answer to his question whether the methods developed for 
the quantum antiferromagnets \cite{KomaTasaki1,KomaTasaki2,Koma1,Koma2,Koma3} are applicable to superconducting fermion models.

\Section{Model and main results}

We first describe our model and the precise statements of our main results. 

Consider a $d$-dimensional finite hypercubic lattice, 
\begin{equation}
\label{Lambda}
\Lambda:=\{-L+1,-L+2,\ldots,-1,0,1,\ldots,L-1,L\}^d\subset\mathbb{Z}^d,
\end{equation}
with a positive integer $L$ and $d\ge 1$. For each lattice site $x\in\Lambda$, 
we introduce a fermion operator $a_{x,\sigma}$ with spin degrees of freedom $\sigma=\uparrow,\downarrow$. 
The fermion operators satisfy the anticommutation relations, 
\begin{equation}
\{a_{x,\sigma},a_{x',\sigma'}^\dagger\}=\delta_{x,x'}\delta_{\sigma,\sigma'}\quad 
\mbox{and}\quad \{a_{x,\sigma},a_{x',\sigma'}\}=0,
\end{equation}
for $x,x'\in\Lambda$ and $\sigma,\sigma'=\uparrow,\downarrow$. The number operator is given by 
\begin{equation}
n_{x,\sigma}:=a_{x,\sigma}^\dagger a_{x,\sigma}
\end{equation}
for $x\in\Lambda$ and $\sigma=\uparrow,\downarrow$. 

The Hamiltonian which we consider is given by 
\begin{equation}
\label{HamB}
H^{(\Lambda)}(B)=H_{\rm hop}^{(\Lambda)}+H_{\rm int}^{(\Lambda)}-BO^{(\Lambda)}
\end{equation}
which consists of three terms, hopping, two-body-interaction and symmetry-breaking-field Hamiltonians. 
The order parameter $O^{(\Lambda)}$ of superconductivity in the third term is given by 
\begin{equation}
\label{O}
O^{(\Lambda)}:=\sum_{x\in\Lambda}(-1)^{x^{(1)}+\cdots+x^{(d)}}
i(a_{x,\uparrow}^\dagger a_{x,\downarrow}^\dagger-a_{x,\downarrow}a_{x,\uparrow}), 
\end{equation}
where we have written $x=(x^{(1)},x^{(2)},\ldots,x^{(d)})$ for the $d$-dimensional coordinates, and $B\in\re$ is 
the symmetry breaking field. 
The interaction Hamiltonian $H_{\rm int}^{(\Lambda)}$ of the second term is given by  
\begin{eqnarray}
\label{Hint}   
H_{\rm int}^{(\Lambda)}&=&g\sum_{\{x,y\}\subset\Lambda:|x-y|=1}(a_{x,\uparrow}^\dagger a_{x,\downarrow}^\dagger 
a_{y,\downarrow}a_{y,\uparrow}+a_{y,\uparrow}^\dagger a_{y,\downarrow}^\dagger a_{x,\downarrow}a_{x,\uparrow})\ret
&+&g'\sum_{\{x,y\}\subset\Lambda:|x-y|=1}
(n_{x,\uparrow}+n_{x,\downarrow}-1)(n_{y,\uparrow}+n_{y,\downarrow}-1) 
\end{eqnarray}
with the coupling constants, $g>0$ and $g'\ge 0$. The summand of the first sum in the right-hand side is  
a nearest-neighbour two-body interaction which is a Bardeen-Cooper-Schrieffer-type pairing \cite{BCS}. 
The second one is also a nearest-neighbour two-body interaction which is a repulsive Coulomb interaction. 
For the interaction Hamiltonian, we impose the periodic boundary conditions in all the directions of the lattice $\Lambda$. 

The hopping Hamiltonian $H_{\rm hop}^{(\Lambda)}$ of the first term is slightly complicated to realize 
the reflection positivity \cite{JP} for the fermions. It is written in the form,  
\begin{equation}
\label{Hhop}
H_{\rm hop}^{(\Lambda)}=\sum_{i=1}^d H_{{\rm hop},i}^{(\Lambda)},
\end{equation}
with   
\begin{eqnarray}
\label{Hhop1}
H_{\rm hop,1}^{(\Lambda)}&=&i\kappa \sum_{\sigma=\uparrow,\downarrow}\sum_{\substack{x\in \Lambda \\ :\; x^{(1)}\ne L}} 
\left[(a_{x,\sigma})^\dagger a_{x+e_1,\sigma}-(a_{x+e_1,\sigma})^\dagger a_{x,\sigma}\right]\ret
&-&i\kappa \sum_{\sigma=\uparrow,\downarrow}\sum_{x^{(2)},\ldots,x^{(d)}}
\left[(a_{x_1^{+},\sigma})^\dagger a_{x_1^{-},\sigma}-(a_{x_1^{-},\sigma})^\dagger a_{x_1^{+},\sigma}\right]
\end{eqnarray}
for the hopping in the first direction and 
\begin{eqnarray}
\label{Hhop2}
H_{{\rm hop},i}^{(\Lambda)}&=&i\kappa \sum_{\sigma=\uparrow,\downarrow}\sum_{\substack{x\in\Lambda \\ :\; x^{(i)}\ne L}}
(-1)^{x^{(1)}+\cdots+x^{(i-1)}}\left[(a_{x,\sigma})^\dagger a_{x+e_i,\sigma}-(a_{x+e_i,\sigma})^\dagger a_{x,\sigma}\right]\ret
&-&i\kappa \sum_{\sigma=\uparrow,\downarrow}\sum_{x^{(1)},\ldots,x^{(i-1)},x^{(i+1)},\ldots,x^{(d)}}(-1)^{x^{(1)}+\cdots+x^{(i-1)}}
\left[(a_{x_i^{+},\sigma})^\dagger a_{x_i^{-},\sigma}-(a_{x_i^{-},\sigma})^\dagger a_{x_i^{+},\sigma}]\right]\ret
\end{eqnarray}
in the $i$-th direction for $i=2,\ldots,d$, 
where $e_i$ is the unit vector whose $i$-th component is $1$, $\kappa\in\re$ is the hopping amplitude, 
and the boundary sites $x_i^+,x_i^-$ of the lattice $\Lambda$ are given by 
$$
x_i^{+}=(x^{(1)},\ldots,x^{(i-1)},L,x^{(i+1)},\ldots,x^{(d)})
$$
and
$$ 
x_i^{-}=(x^{(1)},\ldots,x^{(i-1)},-L+1,x^{(i+1)},\ldots,x^{(d)}).
$$ 
We have imposed the anti-periodic boundary conditions for the hopping Hamiltonian $H_{\rm hop}^{(\Lambda)}$. 
In addition, the fluxes through unit squares \cite{LiebFlux} are all equal to $\pi$. 
The $\pi$ flux condition together with the anti-periodic boundary conditions satisfies 
the conditions for the canonical flux configuration in \cite{MN}. These are crucial to realize the reflection positivity 
for the present system. 

We write 
\begin{equation}
\label{TEV}
\langle\cdots\rangle_{\beta,B}^{(\Lambda)}:=\frac{1}{Z_{\beta,B}^{(\Lambda)}}{\rm Tr}\; (\cdots)e^{-\beta H^{(\Lambda)}(B)}
\end{equation}
for the thermal expectation value, where $Z_{\beta,B}^{(\Lambda)}:={\rm Tr}\; e^{-\beta H^{(\Lambda)}(B)}$ 
and $\beta\ge 0$ is the inverse temperature. The long-range order of the superconductivity is given by 
\begin{equation}
m_{\rm LRO}^{(\Lambda)}:=
\frac{1}{|\Lambda|}\sqrt{\langle [O^{(\Lambda)}]^2\rangle_{\beta,0}^{(\Lambda)}}
\end{equation}
with the external symmetry-breaking field $B=0$. We write 
\begin{equation}
m_{\rm LRO}:=\lim_{\Lambda\nearrow\ze^d}m_{\rm LRO}^{(\Lambda)}.
\end{equation}

\begin{theorem}
\label{existenceLRO}
Let $d\ge 3$, and set the external symmetry-breaking field to be $B=0$. 
Then, there exist small positive numbers $\hat{\kappa}$ and 
$\hat{g}'$, and  a large positive number $\beta_{\rm c}$ such that $m_{\rm LRO}>0$ for 
$|\kappa|/g\le \hat{\kappa}$ and $g'/g\le \hat{g}'$, and $\beta\ge \beta_{\rm c}$. 
Namely, the long-range order of the superconductivity exists on this parameter region with the strong coupling $g$. 
Here, $\hat{\kappa}$ and $\hat{g}'$ can be chosen to be independent of the model parameters, $\kappa, g, g'$, 
but $\beta_{\rm c}$ depends on the model parameters. 
\end{theorem}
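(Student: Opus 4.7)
The plan is to adapt the classical Dyson-Lieb-Simon / Fr\"ohlich-Simon-Spencer infrared-bound method \cite{DLS,FSS,FILS1,FILS} to the present fermion system, using the Majorana representation of \cite{JP} to realize real-space reflection positivity. The chain of implications will be: reflection positivity across each coordinate hyperplane bisecting nearest-neighbor bonds $\Rightarrow$ Gaussian domination for the local pair operator $B_x:=a_{x,\downarrow}a_{x,\uparrow}$ $\Rightarrow$ infrared bound on the Duhamel two-point function of the staggered pair field $\tilde B_x:=(-1)^{|x|}B_x$ $\Rightarrow$ positive long-range order at the staggered momentum via the Dyson-Lieb-Simon sum rule, which is summable because $d\ge 3$. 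In the familiar pseudo-spin language this is just the DLS argument for an easy-plane antiferromagnet \cite{DLS,KLS,KomaTasaki1}, with the strong-coupling restriction $|\kappa|/g,\,g'/g\ll 1$ playing the role of the easy-plane anisotropy.

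First I would rewrite $H^{(\Lambda)}(0)$ in Majorana variables and verify reflection positivity across any hyperplane $P$ bisecting nearest-neighbor bonds: each $P$-crossing term must admit a decomposition $\sum_j c_j\,M_j\otimes \theta(M_j)$ with $c_j\ge 0$, where $\theta$ is the reflection acting on the Majorana algebra and $M_j$ are even elements on one side of $P$. For the hopping Hamiltonian \eqref{Hhop} this is exactly the point of the construction: the $\pi$-flux through every plaquette, the anti-periodic boundary conditions, and the staggered phases $(-1)^{x^{(1)}+\cdots+x^{(i-1)}}$ together realize the canonical flux configuration of \cite{MN} for the induced Majorana hopping, which forces the correct signs of the $c_j$. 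The BCS pair-hopping and Coulomb terms are single-bond quartics with coefficients $g>0$ and $g'\ge 0$, and in the Majorana basis they factorize as manifestly symmetric left-right products, preserving reflection positivity. Once this is established, the standard Gaussian-domination variational inequality combined with the Falk-Bruch inequality yields the infrared bound
\begin{equation*}
\bigl(\tilde B_{-k}^\dagger,\tilde B_k\bigr)_{\beta,0}^{(\Lambda)} \;\le\; \frac{\mathrm{const}}{g\,\hat E(k)},
\end{equation*}
where $\hat E(k)=\sum_{i=1}^d(1-\cos k_i)$ is the lattice dispersion generated by the pair-hopping structure of $H_{\rm int}^{(\Lambda)}$ and $(\,\cdot\,,\,\cdot\,)$ is the Duhamel inner product.

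Combining this with the uniform double-commutator estimate $\|[\tilde B_{-k},[H^{(\Lambda)}(0),\tilde B_k]]\|\le C(g+|\kappa|+g')$, the Dyson-Lieb-Simon sum rule yields
\begin{equation*}
\frac{1}{|\Lambda|}\sum_{k\ne 0}\langle\{\tilde B_{-k},\tilde B_k^\dagger\}\rangle_{\beta,0}^{(\Lambda)}
\;\le\; \frac{1}{|\Lambda|}\sum_{k\ne 0}\mathrm{const}\,\coth\!\Bigl(\tfrac{\beta}{2}g\,\hat E(k)\Bigr),
\end{equation*}
which for $d\ge 3$ is controlled by a convergent Brillouin-zone integral that can be made arbitrarily small by taking $\beta$ large and $|\kappa|/g,\,g'/g$ small. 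The equal-time sum rule $|\Lambda|^{-1}\sum_k\langle\{\tilde B_{-k},\tilde B_k^\dagger\}\rangle=|\Lambda|^{-1}\sum_x\langle\{B_x,B_x^\dagger\}\rangle$, combined with a chessboard lower bound on the on-site pair density (which again follows from reflection positivity once $g$ dominates), then leaves positive spectral weight at $k=0$, i.e.\ $m_{\rm LRO}^2=\lim_{\Lambda\nearrow\ze^d}|\Lambda|^{-2}\langle[O^{(\Lambda)}]^2\rangle_{\beta,0}^{(\Lambda)}>0$.

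The hard part will be the very first step: exhibiting reflection positivity for the full quartic BCS Hamiltonian in the Majorana representation. Unlike the spin case, the reflection $\theta$ must act on Majorana operators in order to respect the anticommutation structure across $P$, and the decomposition of the pair-hopping as a manifestly non-negative $\sum_j c_j\,M_j\otimes\theta(M_j)$ requires a careful bookkeeping of Majorana sign factors in which the $\pi$-flux, the anti-periodic boundary conditions, and the staggered hopping phases in \eqref{Hhop1}--\eqref{Hhop2} have to be mutually compatible with the chosen $\theta$ on every class of $P$-crossing bonds. The secondary difficulty is the quantitative lower bound on the on-site pair density in the presence of the repulsive Coulomb term; this is what forces the smallness condition $g'/g\le\hat g'$ and is handled by a standard chessboard estimate once reflection positivity has been secured.
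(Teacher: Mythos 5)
Your overall pipeline (gauge transformations plus the Majorana representation of \cite{JP} to get real-space reflection positivity, Gaussian domination, the infrared bound on the Duhamel function of the staggered pair field, and the Falk--Bruch inequality) is the same as the paper's, and you correctly identify the reflection positivity of the hopping term as the delicate point. The gap is in the last step. You propose to close the argument with the plain equal-time sum rule $|\Lambda|^{-1}\sum_k\langle\{\tilde B_{-k},\tilde B_k^\dagger\}\rangle=|\Lambda|^{-1}\sum_x\langle\{B_x,B_x^\dagger\}\rangle=|\Lambda|^{-1}\sum_x\langle[\Gamma_x^{(1)}]^2\rangle$ together with a lower bound on the on-site pair density. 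The paper works this route out explicitly (Section 7) and shows it \emph{fails} here: spin-space reflection positivity plus particle--hole symmetry give only $\langle[\Gamma_x^{(1)}]^2\rangle\ge 1/2$ (half the value $1$ available for the spin-$1/2$ XY model), while the zero-temperature part of the momentum integral on the right-hand side is \emph{not} small --- it is of order $\{G_d[d(d+1)]^{1/2}\}^{1/2}/2$ with $\lim dG_d\ge 1$, hence itself at least $1/2$. So nothing positive is left over for the condensate, and your claim that the Brillouin-zone integral ``can be made arbitrarily small by taking $\beta$ large and $|\kappa|/g$, $g'/g$ small'' is incorrect: only the explicitly $1/\beta$ piece shrinks; the $\sqrt{\mathfrak{c}_p/E_{p+Q}}$ piece is an order-one constant, and the whole question is whether that constant beats the left-hand side.

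The paper's resolution is the Kennedy--Lieb--Shastry weighted sum rule \cite{KLS,KLS2}: one multiplies $\langle\hat\Gamma_p^{(1)}\hat\Gamma_{-p}^{(1)}\rangle$ by $-\frac{1}{d}\sum_i\cos p^{(i)}$ before summing, so that the left-hand side becomes the nearest-neighbour interaction energy $\mathcal{E}_1^{(\Lambda)}$ rather than the on-site density. This quantity is bounded below by $1/2-\tilde\delta(\beta)/(dg)-|\kappa|/g$ via a variational (product-state) estimate of the ground-state energy together with the vanishing of the entropy term at large $\beta$ (Appendix~B), and the weighting simultaneously replaces $G_d$ by the constant $I_d=\bigl[\int dp\,(\{-\frac1d\sum\cos p^{(i)}\}_+)^2/(1+\frac1d\sum\cos p^{(i)})\bigr]^{1/2}$, which satisfies $I_d\le I_3\approx 0.68<1$ for all $d\ge 3$. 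Closing your proof requires replacing your final two displayed steps by this weighted sum rule, the energy lower bound, and the estimate $I_d<1$; the chessboard bound on the on-site pair density that you flag as a ``secondary difficulty'' cannot do the job. (A minor additional point: the smallness of $g'/g$ enters the paper's argument only through the extra contribution of the Coulomb term to the double commutator $\mathfrak{c}_p$, not through the pair-density bound.)
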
 

\noindent
{\it Remark:} When $\kappa=0$ and $g'=0$, the present model corresponds to the strong-coupling limit for the BCS-type model 
with the short-range interaction. The mean-field-type models were dealt in \cite{Thouless,MHK}. 
\medskip

The ground state in the finite volume is given by 
\begin{equation}
\omega_{B,g'}^{(\Lambda)}(\cdots):=\lim_{\beta\nearrow\infty}\langle\cdots\rangle_{\beta,B}^{(\Lambda)}. 
\end{equation}
We write 
\begin{equation}
\omega_{0,g'}(\cdots):={\rm weak}^\ast\mbox{-}\lim_{B\searrow 0}{\rm weak}^\ast\mbox{-}\lim_{\Lambda\nearrow\ze^d}
\omega_{B,g'}^{(\Lambda)}(\cdots). 
\end{equation}
For the parameters $\kappa, g, g'$ in the parameter region of Theorem~\ref{existenceLRO}, this is 
the symmetry-breaking infinite-volume ground state. Actually, there appears a spontaneous magnetization in the sense 
that \cite{KomaTasaki1} 
\begin{equation}
\frac{1}{|\Lambda|}\omega_{0,g'}(O^{(\Lambda)})\ge \mu >0
\end{equation}
for a large $\Lambda$, where $\mu$ is a positive constant. 
 
\begin{theorem}
Let $d\ge 3$. Suppose that the model parameters, $\kappa$ and $g'$, satisfy $|\kappa|/g\le\hat{\kappa}$ and 
$0<g'/g\le \hat{g}'$, where $\hat{\kappa}$ and $\hat{g}'$ are given in Theorem~\ref{existenceLRO}.  
Then, there exists a gapless quasi-local excitation above the infinite-volume ground state $\omega_{0,g'}(\cdots)$. 
Namely, a Nambu-Goldstone mode appears above the ground state.  
\end{theorem}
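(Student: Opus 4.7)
The plan is to adapt the Koma-Tasaki strategy \cite{KomaTasaki2,Koma1,Koma2} for proving gapless Nambu-Goldstone modes in symmetry-broken quantum lattice systems to the present fermion model. The broken continuous symmetry is the global U(1) phase rotation $a_{x,\sigma}\mapsto e^{i\theta}a_{x,\sigma}$, whose generator on the finite lattice is the total particle number $N^{(\Lambda)}=\sum_{x,\sigma}n_{x,\sigma}$. Direct computation gives $[N^{(\Lambda)},H_{\rm hop}^{(\Lambda)}]=[N^{(\Lambda)},H_{\rm int}^{(\Lambda)}]=0$ while $[N^{(\Lambda)},O^{(\Lambda)}]\ne 0$, so at $B=0$ the U(1) symmetry is exact, but it is spontaneously broken in $\omega_{0,g'}$ by Theorem~\ref{existenceLRO} and the spontaneous-magnetization estimate following it. In accordance with the abstract, the natural local Goldstone field is the Cooper-pair bilinear $\Delta_x:=a_{x,\uparrow}^\dagger a_{x,\downarrow}^\dagger$.

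The technical engine is a finite-volume Bogoliubov inequality of the schematic form
\begin{equation}
\langle\{N_k^\dagger,N_k\}\rangle_{\beta,B}^{(\Lambda)}\cdot\langle[(O^{(\Lambda)})^\dagger,[H^{(\Lambda)}(B),O^{(\Lambda)}]]\rangle_{\beta,B}^{(\Lambda)}\;\ge\;\frac{2}{\beta}\bigl|\langle[N_k,(O^{(\Lambda)})^\dagger]\rangle_{\beta,B}^{(\Lambda)}\bigr|^2,
\end{equation}
where $N_k:=\sum_{x\in\Lambda}e^{ik\cdot x}(n_{x,\uparrow}+n_{x,\downarrow})$ is a Fourier component of the conserved density. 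Because $N^{(\Lambda)}$ commutes with $H_{\rm hop}^{(\Lambda)}+H_{\rm int}^{(\Lambda)}$ and the interaction is of nearest-neighbour range, the double commutator on the left-hand side is $O(k^2|\Lambda|)$ for small $k$; the right-hand side, evaluated with an appropriate choice of $k$, is bounded below by a constant times $\mu^2|\Lambda|^2$ thanks to the long-range order of Theorem~\ref{existenceLRO}. Combining the two estimates, the spectral weight of $O^{(\Lambda)}$ must be supported on excitation energies that vanish with $L$, producing a finite-volume tower of low-lying states whose energies tend to zero in the thermodynamic limit.

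From this finite-volume tower I would construct a genuine gapless quasi-local excitation above $\omega_{0,g'}$ via the GNS framework as in \cite{KomaTasaki2,Koma2}. Concretely, I would form a sequence $F_n$ of local operators supported on cubes of side $n$ and built from $\Delta_x$, averaged so as to single out the appropriate low-momentum Goldstone component, and show that in the GNS representation $(\pi,\mathcal{H},\Omega)$ of $\omega_{0,g'}$ the vectors $\pi(F_n)\Omega$ have norms bounded below by a positive constant while their energy expectations $\langle\pi(F_n)\Omega,\,H\pi(F_n)\Omega\rangle$ tend to zero. The first property follows from the long-range order together with clustering of $\omega_{0,g'}$, itself a consequence of the extremality of $\omega_{0,g'}$ among symmetry-broken ground states; the second follows by transporting the finite-volume Bogoliubov estimate above through the limits.

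The main obstacle is controlling the order of the three limits $\beta\nearrow\infty$, $\Lambda\nearrow\ze^d$, $B\searrow 0$ that define $\omega_{0,g'}$: the Bogoliubov bound and the long-range-order bound are natural at finite $\beta$, finite $\Lambda$ and $B=0$, but they must be shown to survive the subsequent $B\searrow 0$ limit in such a way that they continue to control the GNS energy expectations above $\omega_{0,g'}$. A secondary technical point is that, because $\Delta_x$ is a fermion bilinear, commutators with $H_{\rm int}^{(\Lambda)}$ produce quartic terms whose estimates rely on the finite range of the interaction, on the bound $g'/g\le\hat{g}'$ from Theorem~\ref{existenceLRO}, and on the fact that the BCS pairing term itself commutes with $N^{(\Lambda)}$, which ensures that the dangerous contributions are subleading in $k$.
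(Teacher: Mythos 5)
Your overall strategy --- exploit the broken U(1) generated by the particle number, pair the smeared conserved charge with the order parameter in a Bogoliubov-type inequality, play the $O(k^2|\Lambda|)$ double commutator of the density against the $O(|\Lambda|^2)$ order-parameter term, and then build a quasi-local excitation in the GNS representation --- is indeed the skeleton of the paper's argument (there the smeared charge is $\Gamma^{(3)}[\tilde{h}']$ with Martin's tent-shaped profile $h'$, the double-commutator bound is $\mathcal{K}_1R^{d-2}+\mathcal{K}_2|B|R^d$, and the commutator with the local order parameter $\mathcal{A}_R$ produces the spontaneous magnetization). But there is a genuine gap at the one point where model-specific input is required. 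Whichever version of the inequality you use, the charge enters through a fluctuation or susceptibility term --- $\langle\{N_k,N_k^\dagger\}\rangle$ in your finite-$\beta$ Bogoliubov inequality, or $\omega_{B,g'}^{(\Lambda)}\bigl(\mathcal{C}(1-P_0^{(\Lambda)})[\mathcal{H}^{(\Lambda)}(B,g')]^{-1}\mathcal{C}^\ast\bigr)$ in the ground-state (Kennedy--Lieb--Shastry) version that is actually needed to say something about $\omega_{0,g'}$ --- and this term must be shown to be of order $|\Lambda|$ (respectively $R^d$) rather than $|\Lambda|^2$; nothing in your proposal bounds it. The paper obtains precisely this bound, (\ref{IRBomega}), from a \emph{second} application of reflection positivity and Gaussian domination, this time with the source $h'$ coupled to the repulsive Coulomb term $g'\sum\Gamma_x^{(3)}\Gamma_y^{(3)}$; the resulting estimate carries the factor $1/(4g')$, and this is the sole reason the theorem assumes $g'>0$ strictly. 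Your proposal uses $g'$ only as a small perturbation controlled by $\hat{g}'$ and never invokes its strict positivity, so it cannot produce this estimate and cannot account for the hypothesis $0<g'/g$.

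A second, smaller issue: the finite-temperature Bogoliubov inequality degenerates as $\beta\to\infty$ (the paper uses it only for a side remark on power-law decay at $T>0$), and even at $T=0$ the trial vector $\mathcal{A}_R\Phi_0$ can carry most of its weight on the quasi-degenerate tower of low-lying states that collapse onto the ground-state sector in the infinite-volume limit, so a small energy expectation does not by itself exhibit a genuine excitation. The paper handles this with the $[\mathcal{H}^{(\Lambda)}(B,g')]^{\epsilon/2}$ regularization in the trial state (\ref{triallowenergy}), the correction $\varkappa(\epsilon)$ in Lemma~\ref{lem:BogolyIneq} with $\epsilon$ chosen as a function of $R$, and finally an energy filter $\hat{\chi}$ supported in $(0,\Delta E_1)$ to manufacture the quasi-local operator $\tau_{*\chi,0}(\mathcal{A}_R)$; your GNS sketch would need an equivalent mechanism. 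Also, as written your displayed inequality has the roles of $N_k$ and $O^{(\Lambda)}$ interchanged relative to the estimates you then describe: the double commutator that is $O(k^2|\Lambda|)$ is that of the conserved density, not of $O^{(\Lambda)}$, and the operator whose commutator with $N_k$ reproduces the order parameter must be the transverse component built from $\Gamma_x^{(1)}$, as in the paper's choice of $\mathcal{A}_R$.
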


\noindent
Due to a technical reason, we need the assumption of the strict positivity of $g'$, i.e., $g'>0$.  
When $g'=0$, the existence of the Nambu-Goldstone mode is expected to hold as well. 

\Section{Local order parameters and interactions}

We will consider first the case with $g'=0$, i.e., without the Coulomb repulsion, for simplicity. 
As a preliminary step, we express the interaction Hamiltonian $H_{\rm int}^{(\Lambda)}$ in terms of the local order parameters 
in this section. In particular, an analogy \cite{BCS} to quantum antiferromagnets is very useful to apply the method of 
reflection positivity \cite{JP} to the present fermion system. 

We write 
\begin{equation}
\Gamma_x^{(+)}:=a_{x,\uparrow}^\dagger a_{x,\downarrow}^\dagger
\quad \mbox{and} \quad 
\Gamma_x^{(-)}:=a_{x,\downarrow}a_{x,\uparrow}.
\end{equation}
Clearly, one has $(\Gamma_x^{(-)})^\dagger=\Gamma_x^{(+)}$. In terms of these, we can write 
\begin{equation}
\label{HintGamma}
H_{\rm int}^{(\Lambda)}=g\sum_{\{x,y\}\subset\Lambda:|x-y|=1}[\Gamma_x^{(+)}\Gamma_y^{(-)}
+\Gamma_x^{(-)}\Gamma_y^{(+)}]
\end{equation}
for the interaction Hamiltonian (\ref{Hint}) with $g'=0$. Further, we introduce 
\begin{equation}
\Gamma_x^{(1)}:=\Gamma_x^{(+)}+\Gamma_x^{(-)}
\quad \mbox{and}\quad \Gamma_x^{(2)}:=i[\Gamma_x^{(+)}-\Gamma_x^{(-)}].
\end{equation}
Then, the order parameter $O^{(\Lambda)}$ of (\ref{O}) is written 
$$
O^{(\Lambda)}=\sum_{x\in\Lambda}(-1)^{x^{(1)}+\cdots+x^{(d)}}\Gamma_x^{(2)}.
$$

Since one has $\Gamma_x^{(1)}\Gamma_y^{(1)}+\Gamma_x^{(2)}\Gamma_y^{(2)}
=2[\Gamma_x^{(+)}\Gamma_y^{(-)}+\Gamma_x^{(-)}\Gamma_y^{(+)}]$, the interaction Hamiltonian of (\ref{HintGamma}) 
is written 
\begin{eqnarray}
\label{HintGamma12}
H_{\rm int}^{(\Lambda)}&=&\frac{g}{2}\sum_{\{x,y\}\subset\Lambda:|x-y|=1}
[\Gamma_x^{(1)}\Gamma_y^{(1)}+\Gamma_x^{(2)}\Gamma_y^{(2)}]\ret
&=&\frac{g}{4}\sum_{\{x,y\}\subset\Lambda:|x-y|=1}
\left\{[\Gamma_x^{(1)}+\Gamma_y^{(1)}]^2-[\Gamma_x^{(1)}]^2-[\Gamma_y^{(1)}]^2\right\}\ret
&-&\frac{g}{4}\sum_{\{x,y\}\subset\Lambda:|x-y|=1}
\left\{[\Gamma_x^{(2)}-\Gamma_y^{(2)}]^2-[\Gamma_x^{(2)}]^2-[\Gamma_y^{(2)}]^2\right\}\ret
&=&\frac{g}{4}\sum_{\{x,y\}\subset\Lambda:|x-y|=1}
[\Gamma_x^{(1)}+\Gamma_y^{(1)}]^2-\frac{g}{4}\sum_{\{x,y\}\subset\Lambda:|x-y|=1}[\Gamma_x^{(2)}-\Gamma_y^{(2)}]^2\ret
&-&\frac{dg}{2}\sum_{x\in\Lambda}\left\{[\Gamma_x^{(1)}]^2-[\Gamma_x^{(2)}]^2\right\}.
\end{eqnarray}

Let $\{h_m(x)\;|\;x\in\Lambda,m=1,2,\ldots,d\}$ be $d$ real-valued functions on the lattice $\Lambda$. 
We define 
\begin{eqnarray}
\label{Hinth}
H_{\rm int}^{(\Lambda)}(h)&:=&\frac{g}{4}\sum_{x\in\Lambda}\sum_{m=1}^d
[\Gamma_x^{(1)}+\Gamma_{x+e_m}^{(1)}+(-1)^{x^{(1)}+\cdots+x^{(d)}}h_m(x)]^2\ret
&-&\frac{g}{4}\sum_{x\in\Lambda}\sum_{m=1}^d[\Gamma_x^{(2)}-\Gamma_{x+e_m}^{(2)}]^2
-\frac{dg}{2}\sum_{x\in\Lambda}\left\{[\Gamma_x^{(1)}]^2-[\Gamma_x^{(2)}]^2\right\},
\end{eqnarray}
and 
\begin{equation}
H^{(\Lambda)}(B,h):=H_{\rm hop}^{(\Lambda)}+H_{\rm int}^{(\Lambda)}(h)-BO^{(\Lambda)}, 
\end{equation}
where $e_m$ is the unit vector whose $m$-th component is $1$. Clearly, $H_{\rm int}^{(\Lambda)}(0)=H_{\rm int}^{(\Lambda)}$ 
and $H^{(\Lambda)}(B,0)=H^{(\Lambda)}(B)$. 

We will also use the following relations:  
\begin{eqnarray}
\label{GammaSq}
[\Gamma_x^{(1)}]^2&=&(a_{x,\uparrow}^\dagger a_{x,\downarrow}^\dagger+a_{x,\downarrow}a_{x,\uparrow})
(a_{x,\uparrow}^\dagger a_{x,\downarrow}^\dagger+a_{x,\downarrow}a_{x,\uparrow})\ret
&=&a_{x,\uparrow}^\dagger a_{x,\downarrow}^\dagger a_{x,\downarrow}a_{x,\uparrow}
+a_{x,\downarrow}a_{x,\uparrow}a_{x,\uparrow}^\dagger a_{x,\downarrow}^\dagger\ret
&=&n_{x,\uparrow}n_{x,\downarrow}+(1-n_{x,\downarrow})(1-n_{x,\uparrow})\ret
&=&2n_{x,\uparrow}n_{x,\downarrow}-n_{x,\uparrow}-n_{x,\downarrow}+1
\end{eqnarray}
and 
\begin{eqnarray}
[\Gamma_x^{(1)},\Gamma_x^{(2)}]&=&i[\Gamma_x^{(+)}+\Gamma_x^{(-)},\Gamma_x^{(+)}-\Gamma_x^{(-)}]\ret
&=&-i[\Gamma_x^{(+)},\Gamma_x^{(-)}]+i[\Gamma_x^{(-)},\Gamma_x^{(+)}]\ret
&=&2i[\Gamma_x^{(-)},\Gamma_x^{(+)}]\ret
&=&2i(a_{x,\downarrow}a_{x,\uparrow}a_{x,\uparrow}^\dagger a_{x,\downarrow}^\dagger 
-a_{x,\uparrow}^\dagger a_{x,\downarrow}^\dagger a_{x,\downarrow}a_{x,\uparrow})\ret
&=&2i[(1-n_{x,\downarrow})(1-n_{x,\uparrow})-n_{x,\uparrow}n_{x,\downarrow}]=2i\Gamma_x^{(3)},
\end{eqnarray}
where we have written  
\begin{equation}
\Gamma_x^{(3)}:=1-n_{x,\uparrow}-n_{x,\downarrow}.
\end{equation}
Similarly, one has 
\begin{equation}
[\Gamma_x^{(2)},\Gamma_x^{(3)}]=2i\Gamma_x^{(1)}
\quad \mbox{and}\quad [\Gamma_x^{(3)},\Gamma_x^{(1)}]=2i\Gamma_x^{(2)}.
\end{equation}
Clearly, these commutation relations are the same as those between the spin-$1/2$ operators.  
The relation between $\Gamma_x^{(1)}$ and $\Gamma_x^{(2)}$ is given by 
\begin{equation}
e^{-i\pi \Gamma_x^{(3)}/4}\Gamma_x^{(1)}e^{i\pi\Gamma_x^{(3)}/4}=\Gamma_x^{(2)}
\quad \mbox{or}\quad 
e^{-i\pi \Gamma_x^{(3)}/4}\Gamma_x^{(2)}e^{i\pi\Gamma_x^{(3)}/4}=-\Gamma_x^{(1)}.
\end{equation}
The derivation is given as (\ref{U1rotation}) in Appendix~\ref{appendix:U(1)rotation}. 
Namely, the operator $\Gamma_x^{(3)}$ is the generator of the U(1) rotation. 
Therefore, the operator $\sum_{x\in\Lambda}(-1)^{x^{(1)}+\cdots+x^{(d)}}\Gamma_x^{(1)}$ is also an order parameter.

\Section{Gauge transformations}
\label{GaugeTrans}

The present system does not have usual translational invariance because of the anti-periodic boundary conditions 
for the hopping terms $H_{\rm hop}^{(\Lambda)}$ in the Hamiltonian $H^{(\Lambda)}(B)$. 
In addition, the hopping amplitudes are different between the spatial directions. 
However, we can change the locations of the bonds having the opposite sign of the hopping amplitudes 
by using a gauge transformation. We can also interchange the hopping amplitudes between the spatial directions 
by a gauge transformation. 
Thus, in the sense of the gauge equivalence, the present system has translational invariance and 
direction-independence of the hopping amplitudes.  
In this section, we show these two properties. 

\subsection{Gauge transformations for the boundary condition}
\label{GaugeTransBC}

Consider a unitary operator, 
$$
U_{{\rm BC},1}^{(\Lambda)}(L\rightarrow \ell)
:=\prod_{\sigma=\uparrow,\downarrow}\prod_{\substack{x\in\Lambda \\:\; \ell\le x^{(1)}\le L}}
e^{i\pi n_{x\sigma}},
$$
which yields the transformation,  
$$
(U_{{\rm BC},1}^{(\Lambda)}(L\rightarrow \ell))^\dagger a_{x,\sigma}U_{\rm BC}^{(\Lambda)}(L\rightarrow \ell)
=-a_{x,\sigma},
$$
for $x$ satisfying $\ell\le x^{(1)}\le L$. Therefore, this transformation $U_{{\rm BC},1}^{(\Lambda)}(L\rightarrow \ell)$ changes 
the locations of the bonds having the opposite sign of the hopping amplitudes from the bonds $\{x_1^-,x_1^+\}$ to 
the bonds $\{(\ell-1,x^{(2)},\ldots,x^{(d)}),(\ell,x^{(2)},\ldots,x^{(d)})\}$ along the $x^{(1)}$-direction. 
Clearly, the rest of the terms in the Hamiltonian do not change under this transformation, 
and similar transformations are also possible in all the other directions. 
We write $U_{{\rm BC},i}^{(\Lambda)}(L\rightarrow \ell)$ for the unitary transformation in the $x^{(i)}$-direction, 
$i=1,2,\ldots,d$.  

We write $\mathcal{T}_m$ for the lattice shift transformation by two lattice units which is defined by 
$\mathcal{T}_m(a_{x,\sigma})=a_{x+2e_m,\sigma}$ and $\mathcal{T}_m(a_{x,\sigma}^\dagger)=a_{x+2e_m,\sigma}^\dagger$ 
for $m=1,2,\ldots,d$. Consider 
\begin{equation}
\langle \Gamma_x^{(j)}\Gamma_y^{(j)}\rangle_{\beta,B}^{(\Lambda)} \quad \mbox{for \ } j=1,2, 
\end{equation}
where $\langle \cdots\rangle_{\beta,B}^{(\Lambda)}$ is given by (\ref{TEV}). One has 
\begin{eqnarray}
\langle \Gamma_x^{(j)}\Gamma_y^{(j)}\rangle_{\beta,B}^{(\Lambda)}
&=&{\rm Tr}\; \mathcal{T}_m(\Gamma_x^{(j)}\Gamma_y^{(j)})e^{-\beta \mathcal{T}_m(H^{(\Lambda)}(B))}/Z_{\beta,B}^{(\Lambda)}\ret
&=&{\rm Tr}\; \Gamma_{x+2e_m}^{(j)}\Gamma_{y+2e_m}^{(j)}e^{-\beta \mathcal{T}_m(H^{(\Lambda)}(B))}/Z_{\beta,B}^{(\Lambda)}.
\end{eqnarray}
Clearly, the boundary condition of the Hamiltonian $\mathcal{T}_m(H^{(\Lambda)}(B))$ in the right-hand side 
is different from that of the Hamiltonian $H^{(\Lambda)}(B)$. But, we can change the boundary condition by using 
the unitary transformation $U_{{\rm BC},m}^{(\Lambda)}(L\rightarrow \ell)$, which does not change $\Gamma_x^{(j)}$.  
Therefore, we have 
\begin{equation}
\label{transinvGamma}
\langle \Gamma_x^{(j)}\rangle_{\beta,B}^{(\Lambda)}=\langle \Gamma_{x+2e_m}^{(j)}\rangle_{\beta,B}^{(\Lambda)}
\end{equation}
and 
\begin{equation}
\label{transinvGammaGamma}
\langle \Gamma_x^{(j)}\Gamma_y^{(j)}\rangle_{\beta,B}^{(\Lambda)}
=\langle \Gamma_{x+2e_m}^{(j)}\Gamma_{y+2e_m}^{(j)}\rangle_{\beta,B}^{(\Lambda)}
\end{equation}
for $m=1,2,\ldots,d$ and $j=1,2$. We will use this lattice shift invariance later.

\subsection{Gauge transformations for the hoping amplitudes} 
\label{GaugeTransHA}

The hopping amplitudes in the hopping Hamiltonian $H_{\rm hop}^{(\Lambda)}$ are asymmetric in all the directions. 
But we can also interchange the amplitudes as follows: 
We first introduce a unitary transformation, 
\begin{equation}
\label{defU3}
U_{\rm HA}^{(\Lambda)}(i,j):=\prod_{\sigma=\uparrow,\downarrow}
\prod_{\substack{x\in\Lambda \\ :\; x^{(i)}={\rm odd} \ {\rm and} \ x^{(j)}={\rm odd}}}
e^{i\pi n_{x,\sigma}},\quad \mbox{for \ } i\ne j. 
\end{equation}
This transformation changes only two hopping amplitudes in the $x^{(i)}$ and $x^{(j)}$ directions. 
The two hopping terms at each bond are written 
\begin{equation}
\label{ithhopamp}
(-1)^{x^{(1)}+\cdots+x^{(i-1)}}(a_{x,\sigma}^\dagger a_{x+e_i,\sigma}-a_{x+e_i,\sigma}^\dagger a_{x,\sigma})
\end{equation}
and 
\begin{equation}
\label{jthhopamp}
(-1)^{x^{(1)}+\cdots+x^{(j-1)}}(a_{x,\sigma}^\dagger a_{x+e_j,\sigma}-a_{x+e_j,\sigma}^\dagger a_{x,\sigma})
\end{equation}
except for the constant factor $\pm i\kappa$. 

Consider first the former (\ref{ithhopamp}). When $x^{(j)}={\rm even}$,   
it does not change under the transformation $U_{\rm HA}^{(\Lambda)}(i,j)$ by definition. 
On the other hand, when $x^{(j)}={\rm odd}$, one of $x^{(i)}$ and $x^{(i)}+e_i^{(i)}$ is odd, and 
the other is even. Therefore, one has 
\begin{eqnarray}
\label{HACi}
& &(U_{\rm HA}^{(\Lambda)}(i,j))^\dagger 
(-1)^{x^{(1)}+\cdots+x^{(i-1)}}(a_{x,\sigma}^\dagger a_{x+e_i,\sigma}-a_{x+e_i,\sigma}^\dagger a_{x,\sigma})
U_{\rm HA}^{(\Lambda)}(i,j)\ret
&=&
(-1)^{x^{(1)}+\cdots+x^{(i-1)}}(-1)^{x^{(j)}}(a_{x,\sigma}^\dagger a_{x+e_i,\sigma}-a_{x+e_i,\sigma}^\dagger a_{x,\sigma}).
\end{eqnarray}

Next, consider the latter (\ref{jthhopamp}). When $x^{(i)}={\rm even}$, it does not change 
under the transformation $U_{\rm HA}^{(\Lambda)}(i,j)$ by definition as well.
On the other hand, when $x^{(i)}={\rm odd}$, one of $x^{(j)}$ and $x^{(j)}+e_j^{(j)}$ is odd, and 
the other is even. Therefore, one has 
\begin{eqnarray}
\label{HACj}
& &(U_{\rm HA}^{(\Lambda)}(i,j))^\dagger 
(-1)^{x^{(1)}+\cdots+x^{(j-1)}}(a_{x,\sigma}^\dagger a_{x+e_j,\sigma}-a_{x+e_j,\sigma}^\dagger a_{x,\sigma})
U_{\rm HA}^{(\Lambda)}(i,j)\ret
&=&
(-1)^{x^{(1)}+\cdots+x^{(j-1)}}(-1)^{x^{(i)}}(a_{x,\sigma}^\dagger a_{x+e_j,\sigma}-a_{x+e_j,\sigma}^\dagger a_{x,\sigma}). 
\end{eqnarray}

By using these two relations, we can interchange the roles of the hopping amplitudes 
in the $x^{(1)}$ and $x^{(j)}$ directions as follows: We define 
\begin{equation}
U_{\rm HA}^{(\Lambda)}(j\rightarrow 1):=U_{\rm HA}^{(\Lambda)}(j,j-1)U_{\rm HA}^{(\Lambda)}(j,j-1)U_{\rm HA}^{(\Lambda)}(j,j-2)
\cdots U_{\rm HA}^{(\Lambda)}(j,2)U_{\rm HA}^{(\Lambda)}(j,1).
\end{equation}
Then, we have 
\begin{eqnarray}
& &(U_{\rm HA}^{(\Lambda)}(j\rightarrow 1))^\dagger 
(-1)^{x^{(1)}+\cdots+x^{(i-1)}}(a_{x,\sigma}^\dagger a_{x+e_i,\sigma}-a_{x+e_i,\sigma}^\dagger a_{x,\sigma})
U_{\rm HA}^{(\Lambda)}(j\rightarrow 1)\ret
&=&
(-1)^{x^{(1)}+\cdots+x^{(i-1)}}(-1)^{x^{(j)}}(a_{x,\sigma}^\dagger a_{x+e_i,\sigma}-a_{x+e_i,\sigma}^\dagger a_{x,\sigma})
\end{eqnarray}
for $i<j$ from the above (\ref{HACi}), and  
\begin{eqnarray}
& &(U_{\rm HA}^{(\Lambda)}(j\rightarrow 1))^\dagger 
(-1)^{x^{(1)}+\cdots+x^{(j-1)}}(a_{x,\sigma}^\dagger a_{x+e_j,\sigma}-a_{x+e_j,\sigma}^\dagger a_{x,\sigma})
U_{\rm HA}^{(\Lambda)}(j\rightarrow 1)\ret
&=&(a_{x,\sigma}^\dagger a_{x+e_j,\sigma}-a_{x+e_j,\sigma}^\dagger a_{x,\sigma}) 
\end{eqnarray}
{from} (\ref{HACj}). Since the rest of the terms in the Hamiltonian $H^{(\Lambda)}$ do not change under 
this transformation $U_{\rm HA}^{(\Lambda)}(j\rightarrow 1)$, these are the desired results. 
Thus, in the sense of this gauge equivalence, the hopping amplitudes are equivalent in all the directions. 

In other words, the transformation $U_{\rm HA}^{(\Lambda)}(j\rightarrow 1)$ induces the permutation of the coordinate, 
$$
\begin{pmatrix}
x^{(1)} \\
x^{(2)} \\
\vdots \\
x^{(j-1)} \\
x^{(j)} \\
\end{pmatrix}
\rightarrow 
\begin{pmatrix}
x^{(j)} \\
x^{(1)} \\
x^{(2)} \\
\vdots \\
x^{(j-1)} \\
\end{pmatrix}.
$$
We write $\mathcal{P}_{j\rightarrow 1}$ for the transformation of this permutation. 
Then, one has 
\begin{eqnarray*}
\sum_{x\in \Lambda}\langle \Gamma_x^{(1)}\Gamma_{x+e_1}^{(1)}\rangle_{\beta,B}^{(\Lambda)}
&=&\sum_{x\in\Lambda}{\rm Tr}\; \mathcal{P}_{j\rightarrow 1}(\Gamma_x^{(1)}\Gamma_{x+e_1}^{(1)})
e^{-\beta\mathcal{P}_{j\rightarrow 1}(H^{(\Lambda)})}/Z_{\beta,B}^{(\Lambda)}\ret
&=&\sum_{x\in\Lambda}{\rm Tr}\; \Gamma_x^{(1)}\Gamma_{x+e_j}^{(1)}
e^{-\beta\mathcal{P}_{j\rightarrow 1}(H^{(\Lambda)})}/Z_{\beta,B}^{(\Lambda)}.
\end{eqnarray*}
Since the inverse of $U_{\rm HA}^{(\Lambda)}(j\rightarrow 1)$ does not change $\Gamma_x^{(1)}$, one obtains 
\begin{equation}
\label{direcIndep}
\sum_{x\in \Lambda}\langle \Gamma_x^{(1)}\Gamma_{x+e_1}^{(1)}\rangle_{\beta,B}^{(\Lambda)}
=\sum_{x\in \Lambda}\langle \Gamma_x^{(1)}\Gamma_{x+e_j}^{(1)}\rangle_{\beta,B}^{(\Lambda)}
\end{equation}
for any $j$. We will use this relation later. 

\Section{Reflection positivity -- Real space}
\label{Sec:RPRS}

Our goal in this section is to prove the Gaussian domination bound (\ref{GaussDomi}) 
in Theorem~\ref{theorem:GaussDomi} below 
by using the method of the reflection positivity \cite{JP} for the present system. 

Since the sidelengths of the cube $\Lambda$ of (\ref{Lambda}) are all the even integer $2L$, 
the lattice $\Lambda$ is invariant under a reflection $\vartheta$ in a plane $\Pi$ 
which is normal to a coordinate direction and intersects no sites in $\Lambda$, i.e., $\vartheta(\Lambda)=\Lambda$. 
Clearly, the lattice $\Lambda$ can be decomposed into two parts $\Lambda=\Lambda_-\cup\Lambda_+$, 
where $\Lambda_\pm$ denote the set of the sites on the $\pm$ side of the plane $\Pi$. 
The reflection $\vartheta$ maps $\Lambda_\pm$ into $\Lambda_\mp$, i.e., $\vartheta(\Lambda_\pm)=\Lambda_\mp$. 
Let $\Omega$ be a subset of $\Lambda$, i.e., $\Omega\subset\Lambda$. 
We write $\mathfrak{A}(\Omega)$ for the algebra generated by $a_{x,\sigma}$ and $a_{x',\sigma'}^\dagger$ for $x,x'\in\Omega$, 
$\sigma,\sigma'=\uparrow,\downarrow$. 
We also write $\mathfrak{A}=\mathfrak{A}(\Lambda)$, and $\mathfrak{A}_\pm=\mathfrak{A}(\Lambda_\pm)$. 

Following \cite{JP}, we consider an anti-linear representation of the reflection $\vartheta$ on the fermion 
Hilbert space, which is also denoted by $\vartheta$. 
The anti-linear map, $\vartheta:\mathfrak{A}_\pm\rightarrow\mathfrak{A}_\mp$, is defined by 
\begin{equation}
\label{defvartheta}
\vartheta(a_{x,\sigma})=a_{\vartheta(x),\sigma}\quad \mbox{and} \quad 
\vartheta(a_{x,\sigma}^\dagger)=a_{\vartheta(x),\sigma}^\dagger. 
\end{equation}
Here, we have used the fact that the fermion operators $a_{x,\sigma}$ have a real representation. 
For $\mathcal{A},\mathcal{B}\in\mathfrak{A}$, 
\begin{equation}
\vartheta(\mathcal{A}\mathcal{B})=\vartheta(\mathcal{A})\vartheta(\mathcal{B})
\quad \mbox{and}\quad \vartheta(\mathcal{A})^\dagger=\vartheta(\mathcal{A}^\dagger). 
\end{equation}

Let us introduce Majorana fermion operators,
\begin{equation}
\label{xi}
\xi_{x,\sigma}:=a_{x,\sigma}^\dagger + a_{x,\sigma}
\end{equation}
and 
\begin{equation}
\label{eta}
\eta_{x,\sigma}:=i(a_{x,\sigma}^\dagger - a_{x,\sigma}),
\end{equation}
for $x,x'\in\Lambda$, $\sigma,\sigma'=\uparrow,\downarrow$. 
These satisfy the anticommutation relations, 
\begin{equation}
\{\xi_{x,\sigma},\xi_{x',\sigma'}\}=2\delta_{x,x'}\delta_{\sigma,\sigma'},
\end{equation}
\begin{equation}
\{\eta_{x,\sigma},\eta_{x',\sigma'}\}=2\delta_{x,x'}\delta_{\sigma,\sigma'},
\end{equation}
and 
\begin{equation}
\{\xi_{x,\sigma},\eta_{x',\sigma'}\}=0.
\end{equation}
By the definition (\ref{defvartheta}) of the anti-linear map $\vartheta$ of the reflection, one has 
\begin{equation}
\vartheta(\xi_{x,\sigma})=\xi_{\vartheta(x),\sigma}
\quad \mbox{and}\quad \vartheta(\eta_{x,\sigma})=-\eta_{\vartheta(x),\sigma},
\end{equation}
and 
\begin{equation}
\xi_{x,\sigma}^\dagger=\xi_{x,\sigma},\quad \mbox{and}\quad \eta_{x,\sigma}^\dagger=\eta_{x,\sigma}.
\end{equation}
The fermion operators are written as 
\begin{equation}
\label{MajoranaRep}
a_{x,\sigma}=\frac{1}{2}(\xi_{x,\sigma}+i\eta_{x,\sigma}), 
\quad a_{x,\sigma}^\dagger=\frac{1}{2}(\xi_{x,\sigma}-i\eta_{x,\sigma}).
\end{equation}

Since we have imposed the periodic boundary conditions for the lattice $\Lambda$, 
there are two planes which divide the lattice $\Lambda$ 
into the two halves, $\Lambda_-$ and $\Lambda_+$. We denote the two planes collectively by $\tilde{\Pi}$. 
Consider first the reflection in the plane $\Pi$ normal to the $x^{(1)}$ direction. 
Since we can change the locations of the bonds with the opposite sign of the hopping amplitudes 
due to the anti-periodic boundary conditions by using a gauge transformation as shown in Sec.~\ref{GaugeTransBC}, 
we can take 
\begin{equation}
\Lambda_-=\{x|-L+1\le x^{(1)}\le 0\}\quad \mbox{and}\quad \Lambda_+=\{x|\; 1\le x^{(1)}\le L\}
\end{equation}
without loss of generality. 
Then, the bonds crossing the plane $\tilde{\Pi}$ are given by 
$$
\{(0,x^{(2)},\ldots,x^{(d)}),(1,x^{(2)},\ldots,x^{(d)})\}
$$ 
and 
$$
\{(-L+1,x^{(2)},\ldots,x^{(d)}),(L,x^{(2)},\ldots,x^{(d)})\}
$$ 
for $x^{(2)},\ldots,x^{(d)}\in\{-L+1,-L+2,\ldots,-1,0,1,\ldots,L-1,L\}$.

We introduce some unitary transformations as follows: 
\begin{equation}
\label{defU1j}
U_{1,j}^{(\Lambda)}
:=\prod_{\substack{x\in\Lambda,\; \sigma=\uparrow,\downarrow \\ :\;x^{(j)}={\rm even}}}e^{(i\pi/2)n_{x,\sigma}}
\quad \mbox{for \ } j=2,3,\ldots,d, 
\end{equation}
and 
\begin{equation}
\label{defU1}
U_1^{(\Lambda)}:=\prod_{j=2}^d U_{1,j}^{(\Lambda)}. 
\end{equation}
Since one has 
\begin{equation}
e^{-(i\pi/2)n_{x,\sigma}}a_{x,\sigma}e^{(i\pi/2)n_{x,\sigma}}=ia_{x,\sigma},
\end{equation}
we have 
\begin{equation}
(U_{1,j}^{(\Lambda)})^\dagger a_{x,\sigma}U_{1,j}^{(\Lambda)}=
\begin{cases}
ia_{x,\sigma}, &  \mbox{for}\ x^{(j)}={\rm even}; \\
a_{x,\sigma}, & \mbox{for}\ x^{(j)}={\rm odd}. 
\end{cases}
\end{equation}

Next, we introduce \cite{FILS} 
\begin{equation}
\label{uxsigma}
u_{x,\sigma}:=\left[\prod_{\substack{y\in\Lambda,\; \sigma'=\uparrow,\downarrow \\ :\; y\ne x\; {\rm or}\; \sigma'\ne \sigma}}
(-1)^{n_{y,\sigma'}}\right](a_{x,\sigma}^\dagger+a_{x,\sigma}). 
\end{equation}
Then, 
\begin{equation}
(u_{x,\sigma})^\dagger a_{y,\sigma'}u_{x,\sigma}=
\begin{cases}
a_{x,\sigma}^\dagger, & \mbox{for}\; y=x \ \mbox{and} \ \sigma'=\sigma;\\
a_{y,\sigma'}, & \mbox{otherwise}.
\end{cases}
\end{equation}
By using these operators, we further introduce \cite{FILS} 
\begin{equation}
\label{defU2}
U_{\rm odd}^{(\Lambda)}:=\prod_{\sigma=\uparrow,\downarrow}\prod_{x\in\Lambda_{\rm odd}}u_{x,\sigma},
\end{equation}
where 
\begin{equation}
\Lambda_{\rm odd}:=\{x |\; x^{(1)}+x^{(2)}+\cdots +x^{(d)}={\rm odd}\}.
\end{equation}
Immediately, 
\begin{equation}
\label{U2}
(U_{\rm odd}^{(\Lambda)})^\dagger a_{x,\sigma}U_{\rm odd}^{(\Lambda)}=
\begin{cases}
a_{x,\sigma}^\dagger, & \mbox{for}\; x\in \Lambda_{\rm odd};\\
a_{x,\sigma}, & \mbox{otherwise}.  
\end{cases}
\end{equation}
We write 
\begin{equation}
\label{U}
\tilde{U}_1^{(\Lambda)}:=U_1^{(\Lambda)}U_{\rm odd}^{(\Lambda)}.
\end{equation}

Note that 
\begin{equation}
(U_{1,j}^{(\Lambda)})^\dagger H_{{\rm hop},i}^{(\Lambda)}U_{1,j}^{(\Lambda)}
=H_{{\rm hop},i}^{(\Lambda)}
\end{equation}
for $i\ne j$ because the term, 
$(a_{x,\sigma}^\dagger a_{x+e_i,\sigma}-a_{x+e_i,\sigma}^\dagger a_{x,\sigma})$, does not change under 
the transformation from $x^{(j)}=(x+e_i)^{(j)}=x^{(j)}+e_i^{(j)}=x^{(j)}$. 
When $i=j$, we have 
\begin{eqnarray}
& &(U_{1,j}^{(\Lambda)})^\dagger H_{{\rm hop},j}^{(\Lambda)}U_{1,j}^{(\Lambda)}\ret
&=&\kappa \sum_{\sigma=\uparrow,\downarrow}\sum_{\substack{x\in\Lambda \\ :\; x^{(j)}\ne L}}
(-1)^{x^{(1)}+\cdots +x^{(j)}}(a_{x,\sigma}^\dagger a_{x+e_j,\sigma}+a_{x+e_j,\sigma}^\dagger a_{x,\sigma})\ret
&-&\kappa \sum_{\sigma=\uparrow,\downarrow}\sum_{x^{(1)},\ldots,x^{(j-1)},x^{(j+1)},\ldots,x^{(d)}}
(-1)^{x^{(1)}+\cdots +x^{(j-1)}+L}
(a_{x_j^+,\sigma}^\dagger a_{x_j^-,\sigma}+a_{x_j^-,\sigma}^\dagger a_{x_j^+,\sigma})\ret 
\end{eqnarray}
{from} the expression (\ref{Hhop2}) and the definitions, $x_j^{+}=(x^{(1)},\ldots,x^{(j-1)},L,x^{(j+1)},\ldots,x^{(d)})$ 
and $x_j^-=(x^{(1)},\ldots,x^{(j-1)},-L+1,x^{(j+1)},\ldots,x^{(d)})$. 
{From} these observations, one has 
\begin{equation}
\label{Hhop1invU1}
(U_1^{(\Lambda)})^\dagger H_{{\rm hop},1}^{(\Lambda)}U_1^{(\Lambda)}=H_{{\rm hop},1}^{(\Lambda)}
\end{equation}
and 
\begin{eqnarray}
& &(U_{1}^{(\Lambda)})^\dagger H_{{\rm hop},j}^{(\Lambda)}U_{1}^{(\Lambda)}\ret
&=&\kappa \sum_{\sigma=\uparrow,\downarrow}\sum_{\substack{x\in\Lambda \\ :\; x^{(j)}\ne L}}
(-1)^{x^{(1)}+\cdots +x^{(j)}}(a_{x,\sigma}^\dagger a_{x+e_j,\sigma}+a_{x+e_j,\sigma}^\dagger a_{x,\sigma})\ret
&-&\kappa \sum_{\sigma=\uparrow,\downarrow}\sum_{x^{(1)},\ldots,x^{(j-1)},x^{(j+1)},\ldots,x^{(d)}}
(-1)^{x^{(1)}+\cdots +x^{(j-1)}+L}
(a_{x_j^+,\sigma}^\dagger a_{x_j^-,\sigma}+a_{x_j^-,\sigma}^\dagger a_{x_j^+,\sigma})\ret 
\end{eqnarray}
for $j=2,\ldots,d$.  
Further, by using (\ref{defU1}), (\ref{U2}) and (\ref{U}), we have 
\begin{eqnarray}
\tilde{H}_{{\rm hop},j}^{(\Lambda)}&:=&(\tilde{U}_1^{(\Lambda)})^\dagger H_{{\rm hop},j}^{(\Lambda)}\tilde{U}_1^{(\Lambda)}\ret
&=&\kappa \sum_{\sigma=\uparrow,\downarrow}\sum_{\substack{x\in\Lambda \\ :\; x^{(j)}\ne L}}
(-1)^{x^{(j+1)}+\cdots +x^{(d)}}(a_{x,\sigma}^\dagger a_{x+e_j,\sigma}^\dagger+a_{x+e_j,\sigma} a_{x,\sigma})\ret
&-&\kappa \sum_{\sigma=\uparrow,\downarrow}\sum_{x^{(1)},\ldots,x^{(j-1)},x^{(j+1)},\ldots,x^{(d)}}
(-1)^{x^{(j+1)}+\cdots +x^{(d)}}
(a_{x_j^+,\sigma}^\dagger a_{x_j^-,\sigma}^\dagger+a_{x_j^-,\sigma} a_{x_j^+,\sigma})\ret
\end{eqnarray}
for $j=2,3,\ldots,d-1$, and  
\begin{eqnarray}
\tilde{H}_{{\rm hop},d}^{(\Lambda)}&:=&(\tilde{U}_1^{(\Lambda)})^\dagger H_{{\rm hop},d}^{(\Lambda)}\tilde{U}_1^{(\Lambda)}\ret
&=&\kappa \sum_{\sigma=\uparrow,\downarrow}\sum_{\substack{x\in\Lambda \\ :\; x^{(d)}\ne L}}
(a_{x,\sigma}^\dagger a_{x+e_d,\sigma}^\dagger+a_{x+e_d,\sigma} a_{x,\sigma})\ret
&-&\kappa \sum_{\sigma=\uparrow,\downarrow}\sum_{x^{(1)},\ldots,x^{(d-1)}}
(a_{x_d^+,\sigma}^\dagger a_{x_d^-,\sigma}^\dagger+a_{x_d^-,\sigma} a_{x_d^+,\sigma}).\ret
\end{eqnarray}
These Hamiltonians can be decomposed into two parts, 
\begin{equation}
\label{tildeHhop2decomp}
\tilde{H}_{{\rm hop},j}^{(\Lambda)}=\tilde{H}_{{\rm hop},j,-}^{(\Lambda)}+\tilde{H}_{{\rm hop},j,+}^{(\Lambda)}
\quad \mbox{for \ } j=2,\ldots,d, 
\end{equation}
where the two terms satisfy $\tilde{H}_{{\rm hop},j,\pm}^{(\Lambda)}\in \mathfrak{A}_\pm$ and 
$\vartheta(\tilde{H}_{{\rm hop},j,-}^{(\Lambda)})=\tilde{H}_{{\rm hop},j,+}^{(\Lambda)}$. 

Next, consider the hopping Hamiltonian $H_{\rm hop,1}^{(\Lambda)}$ of (\ref{Hhop1}). 
By using the Majorana representation (\ref{MajoranaRep}), each hopping can be written  
\begin{equation}
\label{hoppingMajoranaRep}
i\kappa(a_{x,\sigma}^\dagger a_{y,\sigma}-a_{y,\sigma}^\dagger a_{x,\sigma})
=\frac{i\kappa}{2}(\xi_{x,\sigma}\xi_{y,\sigma}+\eta_{x,\sigma}\eta_{y,\sigma}).
\end{equation}
{From} (\ref{xi}), (\ref{eta}) and (\ref{U2}), one has 
\begin{equation}
(U_{\rm odd}^{(\Lambda)})^\dagger \xi_{x,\sigma}U_{\rm odd}^{(\Lambda)}=\xi_{x,\sigma}, 
\end{equation}
and 
\begin{equation}
(U_{\rm odd}^{(\Lambda)})^\dagger \eta_{x,\sigma}U_{\rm odd}^{(\Lambda)}=
\begin{cases}
-\eta_{x,\sigma}, & \mbox{for}\ x\in\Lambda_{\rm odd};\\
\eta_{x,\sigma},  & \mbox{otherwise}.
\end{cases}
\end{equation}
Combining these with (\ref{hoppingMajoranaRep}), we obtain 
\begin{equation}
(U_{\rm odd}^{(\Lambda)})^\dagger i\kappa(a_{x,\sigma}^\dagger a_{y,\sigma}-a_{y,\sigma}^\dagger a_{x,\sigma})U_{\rm odd}^{(\Lambda)}
=\frac{i\kappa}{2}(\xi_{x,\sigma}\xi_{y,\sigma}-\eta_{x,\sigma}\eta_{y,\sigma})
\end{equation}
for $x,y$ satisfying $|x-y|=1$. This yields  
\begin{eqnarray}
\label{tildeHhop1}
\tilde{H}_{\rm hop,1}^{(\Lambda)}:=(\tilde{U}_1^{(\Lambda)})^\dagger H_{\rm hop,1}^{(\Lambda)}\tilde{U}_1^{(\Lambda)}
&=&\frac{i\kappa}{2} \sum_{\sigma=\uparrow,\downarrow}\sum_{\substack{x\in \Lambda \\ :\; x^{(1)}\ne L}} 
(\xi_{x,\sigma}\xi_{x+e_1,\sigma}-\eta_{x,\sigma}\eta_{x+e_1,\sigma})\ret
&-&\frac{i\kappa}{2} \sum_{\sigma=\uparrow,\downarrow}\sum_{x^{(2)},\ldots,x^{(d)}}
(\xi_{x_1^+,\sigma}\xi_{x_1^-,\sigma}-\eta_{x_1^+,\sigma}\eta_{x_1^-,\sigma}),
\end{eqnarray}
where we have used (\ref{U}) and (\ref{Hhop1invU1}). 
This Hamiltonian can be decomposed into three parts, 
\begin{equation}
\label{tildeHhop1decomp}
\tilde{H}_{\rm hop,1}^{(\Lambda)}=\tilde{H}_{\rm hop,1,-}^{(\Lambda)}+\tilde{H}_{\rm hop,1,0}^{(\Lambda)}
+\tilde{H}_{\rm hop,1,+}^{(\Lambda)},
\end{equation}
where 
\begin{equation}
\tilde{H}_{\rm hop,1,-}^{(\Lambda)}
:=\frac{i\kappa}{2} \sum_{\sigma=\uparrow,\downarrow}\sum_{\substack{x\in \Lambda_- \\ :\; x+e_1\in\Lambda_-}} 
(\xi_{x,\sigma}\xi_{x+e_1,\sigma}-\eta_{x,\sigma}\eta_{x+e_1,\sigma}),
\end{equation}
\begin{equation}
\tilde{H}_{\rm hop,1,+}^{(\Lambda)}
:=\frac{i\kappa}{2} \sum_{\sigma=\uparrow,\downarrow}\sum_{\substack{x\in \Lambda_+ \\ :\; x+e_1\in\Lambda_+}} 
(\xi_{x,\sigma}\xi_{x+e_1,\sigma}-\eta_{x,\sigma}\eta_{x+e_1,\sigma}),
\end{equation}
and 
\begin{multline}
\label{tildeHhop10}
\tilde{H}_{\rm hop,1,0}^{(\Lambda)}
:=\frac{i\kappa}{2}\sum_{\sigma=\uparrow,\downarrow}\sum_{x^{(2)},\ldots,x^{(d)}}
[(\xi_{x_1^0,\sigma}\xi_{x_1^1,\sigma}-\eta_{x_1^0,\sigma}\eta_{x_1^1,\sigma}) 
+(\xi_{x_1^-,\sigma}\xi_{x_1^+,\sigma}-\eta_{x_1^-,\sigma}\eta_{x_1^+,\sigma})]\\
=\frac{i\kappa}{2}\sum_{\sigma=\uparrow,\downarrow}\sum_{x^{(2)},\ldots,x^{(d)}}
[\xi_{x_1^0,\sigma}\vartheta(\xi_{x_1^0,\sigma})+\eta_{x_1^0,\sigma}\vartheta(\eta_{x_1^0,\sigma}) 
+\xi_{x_1^-,\sigma}\vartheta(\xi_{x_1^-,\sigma})+\eta_{x_1^-,\sigma}\vartheta(\eta_{x_1^-,\sigma})],
\end{multline}
where we have written $x_1^0=(0,x^{(2)},\ldots,x^{(d)})$ and 
$x_1^1=(1,x^{(2)},\ldots,x^{(d)})$, and used the relation $\vartheta(\eta_{x,\sigma})=-\eta_{\vartheta(x),\sigma}$. 
One can show 
\begin{equation}
\label{tildeHhop1reflec}
\tilde{H}_{\rm hop,1,\pm}^{(\Lambda)}\in\mathfrak{A}_\pm \quad \mbox{and}\quad 
\vartheta(\tilde{H}_{\rm hop,1,-}^{(\Lambda)})=\tilde{H}_{\rm hop,1,+}^{(\Lambda)}.
\end{equation}

We write  
\begin{equation}
\tilde{H}_{\rm hop}^{(\Lambda)}:=(\tilde{U}_1^{(\Lambda)})^\dagger H_{\rm hop}^{(\Lambda)}\tilde{U}_1^{(\Lambda)}.
\end{equation}
Then, from (\ref{tildeHhop2decomp}), (\ref{tildeHhop1decomp}) and (\ref{tildeHhop1reflec}), 
one notices that this Hamiltonian $\tilde{H}_{\rm hop}^{(\Lambda)}$ 
can be decomposed into three parts, 
\begin{equation}
\label{tildeHhopdecomp}
\tilde{H}_{\rm hop}^{(\Lambda)}=\tilde{H}_{\rm hop,-}^{(\Lambda)}+\tilde{H}_{\rm hop,+}^{(\Lambda)}
+\tilde{H}_{\rm hop,1,0}^{(\Lambda)},
\end{equation}
where the first and second terms satisfy 
\begin{equation}
\vartheta(\tilde{H}_{\rm hop,-}^{(\Lambda)})=\tilde{H}_{\rm hop,+}^{(\Lambda)}.
\end{equation}

In order to deal with the interaction Hamiltonian $H_{\rm int}^{(\Lambda)}(h)$ of (\ref{Hinth}), we define
\begin{equation}
H_{{\rm int},j}^{(\Lambda)}(h)
:=\frac{g}{4}\sum_{x\in\Lambda}[\Gamma_x^{(1)}+\Gamma_{x+e_j}^{(1)}+(-1)^{x^{(1)}+\cdots +x^{(d)}}h_j(x)]^2
-\frac{g}{4}\sum_{x\in\Lambda}[\Gamma_x^{(2)}-\Gamma_{x+e_j}^{(2)}]^2
\end{equation} 
for $j=1,2,\ldots,d$. Note that 
\begin{equation}
(U_{1,j}^{(\Lambda)})^\dagger \Gamma_x^{(n)}U_{1,j}^{(\Lambda)}=-(-1)^{x^{(j)}}\Gamma_x^{(n)}
\end{equation}
and  
\begin{equation}
(U_{1,j}^{(\Lambda)})^\dagger \Gamma_{x+e_i}^{(n)}U_{1,j}^{(\Lambda)}=
\begin{cases}
-(-1)^{x^{(j)}}\Gamma_{x+e_i}^{(n)} & \mbox{for \ } i\ne j;\\
(-1)^{x^{(j)}}\Gamma_{x+e_i}^{(n)} & \mbox{for \ } i=j;
\end{cases}
\end{equation}
for $j=2,3,\ldots,d$ and $n=1,2$. These yield 
\begin{equation}
(U_1^{(\Lambda)})^\dagger \Gamma_x^{(n)}U_1^{(\Lambda)}=(-1)^{d-1}(-1)^{x^{(2)}+\cdots+x^{(d)}}\Gamma_x^{(n)}
\end{equation}
and
\begin{equation}
(U_1^{(\Lambda)})^\dagger \Gamma_{x+e_1}^{(n)}U_1^{(\Lambda)}=(-1)^{d-1}(-1)^{x^{(2)}+\cdots+x^{(d)}}\Gamma_{x+e_1}^{(n)}
\end{equation}
for $n=1,2$; and 
\begin{equation}
(U_1^{(\Lambda)})^\dagger \Gamma_{x+e_i}^{(n)}U_1^{(\Lambda)}=(-1)^{d}(-1)^{x^{(2)}+\cdots+x^{(d)}}\Gamma_{x+e_i}^{(n)}
\end{equation}
for $i=2,3,\ldots,d$ and $n=1,2$. Therefore, we have  
\begin{eqnarray}
(U_1^{(\Lambda)})^\dagger H_{\rm int,1}^{(\Lambda)}(h)U_1^{(\Lambda)}&=&\frac{g}{4}
\sum_{x\in\Lambda}[\Gamma_x^{(1)}+\Gamma_{x+e_1}^{(1)}+(-1)^{d-1}(-1)^{x^{(1)}}h_1(x)]^2\ret
&-&\frac{g}{4}\sum_{x\in\Lambda}[\Gamma_x^{(2)}-\Gamma_{x+e_1}^{(2)}]^2
\end{eqnarray}
and
\begin{eqnarray}
(U_1^{(\Lambda)})^\dagger H_{{\rm int},j}^{(\Lambda)}(h)U_1^{(\Lambda)}&=&\frac{g}{4}\sum_{x\in\Lambda}
[\Gamma_x^{(1)}-\Gamma_{x+e_j}^{(1)}+(-1)^{d-1}(-1)^{x^{(1)}}h_j(x)]^2\ret
&-&\frac{g}{4}\sum_{x\in\Lambda}[\Gamma_x^{(2)}+\Gamma_{x+e_j}^{(2)}]^2
\end{eqnarray}
for $j=2,3,\ldots,d$. Further, we have 
\begin{eqnarray}
\label{tildeU1Hint1}
(\tilde{U}_1^{(\Lambda)})^\dagger H_{\rm int,1}^{(\Lambda)}(h)\tilde{U}_1^{(\Lambda)}&=&\frac{g}{4}
\sum_{x\in\Lambda}[\Gamma_x^{(1)}-\Gamma_{x+e_1}^{(1)}+(-1)^{d-1}(-1)^{x^{(2)}+\cdots+x^{(d)}}h_1(x)]^2\ret
&-&\frac{g}{4}\sum_{x\in\Lambda}[\Gamma_x^{(2)}-\Gamma_{x+e_1}^{(2)}]^2
\end{eqnarray}
and
\begin{eqnarray}
\label{tildeU1Hintj}
(\tilde{U}_1^{(\Lambda)})^\dagger H_{{\rm int},j}^{(\Lambda)}(h)\tilde{U}_1^{(\Lambda)}&=&\frac{g}{4}\sum_{x\in\Lambda}
[\Gamma_x^{(1)}+\Gamma_{x+e_j}^{(1)}+(-1)^{d-1}(-1)^{x^{(2)}+\cdots+x^{(d)}}h_j(x)]^2\ret
&-&\frac{g}{4}\sum_{x\in\Lambda}[\Gamma_x^{(2)}+\Gamma_{x+e_j}^{(2)}]^2,
\end{eqnarray}
for $j=2,3,\ldots,d$, where we have used $\tilde{U}_1^{(\Lambda)}=U_1^{(\Lambda)}U_{\rm odd}^{(\Lambda)}$, and 
\begin{equation}
(U_{\rm odd}^{(\Lambda)})^\dagger \Gamma_x^{(1)}U_{\rm odd}^{(\Lambda)}= 
(-1)^{x^{(1)}+\cdots+x^{(d)}}\Gamma_x^{(1)}
\end{equation}
and 
\begin{equation}
(U_{\rm odd}^{(\Lambda)})^\dagger \Gamma_x^{(2)}U_{\rm odd}^{(\Lambda)}=\Gamma_x^{(2)}.
\end{equation}

We write 
\begin{equation}
\tilde{H}_{\rm int}^{(\Lambda)}(h):=(\tilde{U}_1^{(\Lambda)})^\dagger H_{\rm int}^{(\Lambda)}(h)\tilde{U}_1^{(\Lambda)}.
\end{equation}
Similarly to the case of the hopping Hamiltonian, this Hamiltonian can be decomposed into three parts, 
\begin{equation}
\label{tildeHintdecomp}
\tilde{H}_{\rm int}^{(\Lambda)}(h)=\tilde{H}_{\rm int,-}^{(\Lambda)}(h)+\tilde{H}_{\rm int,+}^{(\Lambda)}(h)
+\tilde{H}_{\rm int,0}^{(\Lambda)}(h),
\end{equation}
where the first two terms satisfy 
\begin{equation}
\tilde{H}_{\rm int,\pm}^{(\Lambda)}(h)\in\mathfrak{A}_\pm
\end{equation}
and the third term is given by 
\begin{eqnarray}
\label{tildeHint0}
\tilde{H}_{\rm int,0}^{(\Lambda)}(h)&:=&\frac{g}{4}\sum_{x^{(2)},\ldots,x^{(d)}}
[\Gamma_{x_1^0}^{(1)}-\Gamma_{x_1^1}^{(1)}+\tilde{h}_1(x_1^0)]^2
+\frac{g}{4}\sum_{x^{(2)},\ldots,x^{(d)}}[\Gamma_{x_1^+}^{(1)}-\Gamma_{x_1^-}^{(1)}+\tilde{h}_1(x_1^+)]^2\ret
&-&\frac{g}{4}\sum_{x^{(2)},\ldots,x^{(d)}}[\Gamma_{x_1^0}^{(2)}-\Gamma_{x_1^1}^{(2)}]^2
-\frac{g}{4}\sum_{x^{(2)},\ldots,x^{(d)}}[\Gamma_{x_1^+}^{(2)}-\Gamma_{x_1^-}^{(2)}]^2,
\end{eqnarray}
where 
\begin{equation}
\tilde{h}_1(x):=(-1)^{d-1}(-1)^{x^{(2)}+\cdots+x^{(d)}}h_1(x).
\end{equation}

In the same way, one has 
\begin{equation}
(\tilde{U}_1^{(\Lambda)})^\dagger O^{(\Lambda)}\tilde{U}_1^{(\Lambda)}=(-1)^{d-1}\sum_{x\in\Lambda_-}(-1)^{x^{(1)}}\Gamma_x^{(2)}
+(-1)^{d-1}\sum_{x\in\Lambda_+}(-1)^{x^{(1)}}\Gamma_x^{(2)}.
\end{equation}
Therefore, one obtains
\begin{equation}
\label{tildeHSBFdecomp}
\tilde{H}_{\rm SBF}^{(\Lambda)}(B):=-B\sum_{x\in\Lambda}(\tilde{U}_1^{(\Lambda)})^\dagger O^{(\Lambda)}\tilde{U}_1^{(\Lambda)} 
=\tilde{H}_{\rm SBF,-}^{(\Lambda)}(B)+\tilde{H}_{\rm SBF,+}^{(\Lambda)}(B),
\end{equation}
where 
\begin{equation}
\tilde{H}_{\rm SBF,\pm}^{(\Lambda)}(B):=(-1)^{d}B\sum_{x\in\Lambda_\pm}(-1)^{x^{(1)}}\Gamma_x^{(2)}.
\end{equation}
The two terms in the right-hand side satisfy 
\begin{equation}
\tilde{H}_{\rm SBF,\pm}^{(\Lambda)}(B)\in\mathfrak{A}_\pm\quad \mbox{and}\quad 
\vartheta(\tilde{H}_{\rm SBF,-}^{(\Lambda)}(B))=\tilde{H}_{\rm SBF,+}^{(\Lambda)}(B),
\end{equation}
where we have used $\vartheta(\Gamma_x^{(2)})=-\Gamma_{\vartheta(x)}^{(2)}$.

{From} (\ref{tildeHhopdecomp}), (\ref{tildeHintdecomp})and (\ref{tildeHSBFdecomp}), we have 
\begin{eqnarray}
\label{tildeHdecomp}
\tilde{H}^{(\Lambda)}(B,h)&:=&(\tilde{U}_1^{(\Lambda)})^\dagger H^{(\Lambda)}(B,h)\tilde{U}_1^{(\Lambda)}\ret
&=&\tilde{H}_-^{(\Lambda)}(B,h)+\tilde{H}_+^{(\Lambda)}(B,h)+\tilde{H}_0^{(\Lambda)}(h),
\end{eqnarray}
where $\tilde{H}_\pm^{(\Lambda)}(B,h)\in\mathfrak{A}_\pm$, and 
\begin{equation}
\tilde{H}_0^{(\Lambda)}(h):=\tilde{H}_{\rm hop,1,0}^{(\Lambda)}+\tilde{H}_{\rm int,0}^{(\Lambda)}(h).
\end{equation}

\begin{pro} The following inequality is valid: 
\label{pro:tildeHBh1d}
\begin{eqnarray}
\label{PRSchwarz}
\left\{{\rm Tr} \exp[-\beta \tilde{H}^{(\Lambda)}(B,h)]\right\}^2
&\le&{\rm Tr} \exp[-\beta(\tilde{H}_-^{(\Lambda)}(B,h)+\vartheta(\tilde{H}_-^{(\Lambda)}(B,h))+\tilde{H}_0^{(\Lambda)}(0))]\ret
&\times&{\rm Tr} \exp[-\beta(\vartheta(\tilde{H}_+^{(\Lambda)}(B,h))+\tilde{H}_+^{(\Lambda)}(B,h)+\tilde{H}_0^{(\Lambda)}(0)  )].\ret
\end{eqnarray}
\end{pro}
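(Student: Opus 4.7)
The plan is to invoke the fermionic reflection-positivity (RP) Cauchy--Schwarz inequality of Jaffe--Pedrizzetti~\cite{JP} for the trace, applied to the decomposition~(\ref{tildeHdecomp}) already in hand: $\tilde H(B,h)=\tilde H_-(B,h)+\tilde H_+(B,h)+\tilde H_0(h)$ with $\tilde H_\pm\in\mathfrak A_\pm$ and $\tilde H_0$ collecting every term that straddles the reflection plane. This is the starting point required by the JP theorem.

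First I would verify that every cross-plane bilinear sitting in $\tilde H_0$ has the sign and algebraic structure that JP requires. From~(\ref{tildeHhop10}) the hopping cross terms are $\tfrac{i\kappa}{2}\bigl(\xi\,\vartheta(\xi)+\eta\,\vartheta(\eta)\bigr)$ with $\xi,\eta\in\mathfrak A_-$ real Majorana operators; this is the canonical Hermitian Majorana bilinear for which JP established RP-positivity---the factor $i$ is what compensates the anti-commutation of $\xi$ with $\vartheta(\xi)$. Expanding the squares in~(\ref{tildeHint0}) and using $\vartheta(\Gamma^{(1)}_x)=\Gamma^{(1)}_{\vartheta(x)}$ together with $\vartheta(\Gamma^{(2)}_x)=-\Gamma^{(2)}_{\vartheta(x)}$, the interaction contributes cross bilinears $-\tfrac g2\,\Gamma^{(1)}\vartheta(\Gamma^{(1)})$ and $-\tfrac g2\,\Gamma^{(2)}\vartheta(\Gamma^{(2)})$, both of the $-A\,\vartheta(A)$ form required.

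Second I would handle the $h$-dependence at the plane, which is the only thing that distinguishes the stated inequality (with $\tilde H_0(0)$ on the right) from a textbook application of RP (which would leave $\tilde H_0(h)$ on the right). The trick is to rewrite each $h$-dependent boundary square as
\[
[\Gamma^{(1)}_x-\vartheta(\Gamma^{(1)}_x)+\tilde h\,]^2=[C'_x-\vartheta(C''_x)]^2,
\]
with $C'_x:=\Gamma^{(1)}_x+\tilde h/2$ and $C''_x:=\Gamma^{(1)}_x-\tilde h/2$ both in $\mathfrak A_-$ (the identity uses that $\tilde h$ is real and $\vartheta$ anti-linear). The cross bilinear is then $-2\,C'_x\vartheta(C''_x)$, a non-symmetric pairing of two operators in $\mathfrak A_-$. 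Applying the JP Cauchy--Schwarz with test element $F=\mathbf 1$ bounds $[\mathrm{Tr}\,e^{-\beta\tilde H(B,h)}]^2$ by a product of two traces in which this bilinear is replaced by $-2\,C'_x\vartheta(C'_x)$ on the ``minus-minus'' factor and by $-2\,C''_x\vartheta(C''_x)$ on the ``plus-plus'' factor; re-completing the squares in each factor produces $[\Gamma^{(1)}_x-\vartheta(\Gamma^{(1)}_x)]^2$ plus a $c$-number that is identical on the two sides and therefore drops out of the inequality. The middle Hamiltonian collapses to $\tilde H_0(0)$, while the bulk $h$-dependence in $\tilde H_\pm(B,h)$ is untouched.

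The main obstacle is fermionic positivity itself. Unlike in spin or Bose systems, the factor operators in a Majorana cross bilinear anti-commute with their reflected counterparts, so positivity of the sesquilinear form $(A,B)\mapsto\mathrm{Tr}(A\,\vartheta(B)\,e^{-\beta\tilde H(B,h)})$ on $\mathfrak A_-$ is not automatic. This is precisely what the JP Majorana theorem delivers, but only for a canonical flux configuration. The $\pi$ flux through every plaquette combined with the antiperiodic boundary conditions---both built into $H_{\rm hop}^{(\Lambda)}$ and further processed in Section~\ref{GaugeTrans}---together with the unitary $\tilde U_1^{(\Lambda)}$ of~(\ref{U}), are exactly what make every hopping cross term come out with the uniform sign $+i\kappa\,\xi\,\vartheta(\xi)$ required by JP. Once this compatibility is invoked, the rest of the argument is the completion-of-squares manipulation just sketched.
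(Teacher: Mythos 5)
Your overall architecture matches the paper's: both start from the decomposition (\ref{tildeHdecomp}), both must neutralize the boundary $h$-dependence so that $\tilde{H}_0^{(\Lambda)}(0)$ appears on the right, and both ultimately reduce to a Cauchy--Schwarz inequality for the form $(X,Y)\mapsto{\rm Tr}\,X\vartheta(Y)$ on $\mathfrak{A}_-$. Your device for the boundary field, writing $[\Gamma^{(1)}_x-\vartheta(\Gamma^{(1)}_x)+\tilde h]^2=[C'_x-\vartheta(C''_x)]^2$ with $C'_x=\Gamma^{(1)}_x+\tilde h/2$ and $C''_x=\Gamma^{(1)}_x-\tilde h/2$ and then recompleting squares in each diagonal factor, is genuinely different from the paper's and is sound: the paper instead uses the Gaussian identity $e^{-\hat D^2}=\int\frac{dk}{\sqrt{4\pi}}e^{-k^2/4}e^{ik\hat D}$, under which $\tilde h_1$ enters only as a scalar phase $e^{i\Theta(\tilde k,\tilde h_1)}$ attached to $K_+(\tilde k,B,h)$, and that phase cancels against its complex conjugate in the diagonal Schwarz terms. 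Either route disposes of the field correctly.

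The gap is in the step you outsource. What is needed is not the positivity ${\rm Tr}\,F\vartheta(F)e^{-H}\ge 0$ of \cite{JP} applied ``with test element $F=\mathbf 1$''; it is a generalized Schwarz inequality comparing three different Gibbs weights, for a crossing Hamiltonian $\tilde H_0^{(\Lambda)}$ that mixes (i) even-parity quartic cross terms $\Gamma^{(1)}\vartheta(\Gamma^{(1)})$, $\Gamma^{(2)}\vartheta(\Gamma^{(2)})$ (and your non-symmetric $C'\vartheta(C'')$) with (ii) odd-parity Majorana cross bilinears $\frac{i\kappa}{2}\gamma\vartheta(\gamma)$. No theorem of \cite{JP} applies verbatim to this mixture, and establishing the inequality for it is essentially the whole content of the paper's proof: one Trotterizes; decouples the quartic boundary squares by the Gaussian identities so that they become integrals over products $K_-(\tilde k)K_+(\tilde k)$ with $K_\pm\in\mathfrak A_\pm$ of even fermion parity; expands the Majorana boundary hopping and observes that terms with an odd number of insertions $\gamma\vartheta(\gamma)$ have vanishing trace; and, for an even number $2m$ of insertions, uses the even parity of $K_\pm$ together with $\vartheta(\gamma_1)\gamma_2=-\gamma_2\vartheta(\gamma_1)$ to reorganize each term into $X_-\vartheta(Y_-)$ with $X_-,Y_-\in\mathfrak A_-$ and overall positive coefficient $(\beta\kappa/2M)^{2m}$. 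Only after this bookkeeping is the form $\langle\!\langle\cdot,\cdot\rangle\!\rangle$ positive semidefinite and Cauchy--Schwarz available. Your proposal asserts the outcome of this computation (``the factor $i$ is what compensates the anti-commutation'') without performing it; in particular it never addresses why the odd-insertion terms vanish, nor why the even-parity and odd-parity cross terms can be interleaved along the Trotter product without spoiling the signs. As written, the argument is therefore incomplete at its central step.
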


\begin{proof}{Proof} We use Lie product formula 
\begin{equation}
e^{\mathcal{A}_1+\mathcal{A}_2+\cdots+\mathcal{A}_n}
=\lim_{M\rightarrow\infty}[(1+\mathcal{A}_1/M)e^{\mathcal{A}_2/M}\cdots e^{\mathcal{A}_n/M}]^M
\end{equation}
for matrices $\mathcal{A}_1, \mathcal{A}_2,\ldots, \mathcal{A}_n$. From the decomposition (\ref{tildeHdecomp}) of 
the Hamiltonian $\tilde{H}^{(\Lambda)}(B,h)$, we have 
\begin{equation}
\exp[-\beta \tilde{H}^{(\Lambda)}(B,h)]=\lim_{M\nearrow \infty}(I_M)^M,
\end{equation}
where 
\begin{equation}
I_M:=\left(1-\frac{\beta}{M}\tilde{H}_{\rm hop,1,0}^{(\Lambda)}\right)J_M
\end{equation}
with 
\begin{eqnarray}
J_M&:=&
\left\{\prod_{(x^{(2)},\ldots,x^{(d)})}
\exp\left[-\frac{\beta g}{4M}(\Gamma_{x_1^0}^{(1)}-\Gamma_{x_1^1}^{(1)}+\tilde{h}_1(x_1^0))^2\right]\right\}
\ret&\times&
\left\{\prod_{(x^{(2)},\ldots,x^{(d)})}\exp\left[-\frac{\beta g}{4M}(\Gamma_{x_1^+}^{(1)}-\Gamma_{x_1^-}^{(1)}
+\tilde{h}_1(x_1^+))^2\right]\right\}\ret 
&\times&
\left\{\prod_{(x^{(2)},\ldots,x^{(d)})}\exp\left[\frac{\beta g}{4M}(\Gamma_{x_1^0}^{(2)}-\Gamma_{x_1^1}^{(2)})^2\right]\right\}
\left\{\prod_{(x^{(2)},\ldots,x^{(d)})}
\exp\left[\frac{\beta g}{4M}(\Gamma_{x_1^+}^{(2)}-\Gamma_{x_1^-}^{(2)})^2\right]\right\}\ret
&\times&\exp\left[-\frac{\beta}{M}\tilde{H}_-^{(\Lambda)}(B,h)\right] \exp\left[-\frac{\beta}{M}\tilde{H}_+^{(\Lambda)}(B,h)\right].
\end{eqnarray}
We have also used the expression (\ref{tildeHint0}) of $\tilde{H}_{\rm int,0}^{(\Lambda)}(h)$. 

By using the operator identity, \cite{DLS}
\begin{equation}
e^{-\hat{D}^2}=\int_{-\infty}^{+\infty}\frac{dk}{\sqrt{4\pi}}\; e^{-k^2/4}e^{ik\hat{D}},
\end{equation}
for a real hermitian $\hat{D}$, and the assumption of the positivity $g>0$ for the coupling constant $g$, 
one has 
\begin{eqnarray}
& &\exp\left[-\frac{\beta g}{4M}(\Gamma_{x_1^0}^{(1)}-\Gamma_{x_1^1}^{(1)}+\tilde{h}_1(x_1^0))^2\right]\ret
&=&\int_{-\infty}^{+\infty}\frac{dk}{\sqrt{4\pi}}\; e^{-k^2/4}
\exp\left[ik\sqrt{\frac{\beta g}{4M}}(\Gamma_{x_1^0}^{(1)}-\Gamma_{x_1^1}^{(1)}+\tilde{h}_1(x_1^0))\right]\ret
&=&\int_{-\infty}^{+\infty}\frac{dk}{\sqrt{4\pi}}\; e^{-k^2/4}
\exp\left[ik\sqrt{\frac{\beta g}{4M}}\Gamma_{x_1^0}^{(1)}\right]
\exp\left[-ik\sqrt{\frac{\beta g}{4M}}\Gamma_{x_1^1}^{(1)}\right]\exp\left[ik\sqrt{\frac{\beta g}{4M}}\tilde{h}_1(x_1^0)\right].\ret
\end{eqnarray}
Similarly, by using \cite{DLS} 
\begin{equation}
e^{\hat{F}^2}=\int_{-\infty}^{+\infty}\frac{dk}{\sqrt{4\pi}}\; e^{-k^2/4}e^{k\hat{F}}
\end{equation}
for a pure imaginary hermitian $\hat{F}$, one obtains 
\begin{eqnarray}
\exp\left[\frac{\beta g}{4M}(\Gamma_{x_1^0}^{(2)}-\Gamma_{x_1^1}^{(2)})^2\right]
&=&\int_{-\infty}^{+\infty}\frac{dk}{\sqrt{4\pi}}\; e^{-k^2/4}
\exp\left[k\sqrt{\frac{\beta g}{4M}}(\Gamma_{x_1^0}^{(2)}-\Gamma_{x_1^1}^{(2)})\right]\ret
&=&\int_{-\infty}^{+\infty}\frac{dk}{\sqrt{4\pi}}\; e^{-k^2/4}
\exp\left[k\sqrt{\frac{\beta g}{4M}}\Gamma_{x_1^0}^{(2)}\right]\exp\left[-k\sqrt{\frac{\beta g}{4M}}\Gamma_{x_1^1}^{(2)}\right].\ret
\end{eqnarray}
These identities yield 
\begin{equation}
J_M=\int d\mu(\tilde{k})\; K_-(\tilde{k},B,h)K_+(\tilde{k},B,h)
\end{equation}
with 
\begin{equation}
\label{K-}
K_-(\tilde{k},B,h):=A_-(\tilde{k})\exp\left[-\frac{\beta}{M}\tilde{H}_-^{(\Lambda)}(B,h)\right],
\end{equation}
\begin{equation}
\label{K+}
K_+(\tilde{k},B,h):=A_+(\tilde{k})\exp\left[-\frac{\beta}{M}\tilde{H}_+^{(\Lambda)}(B,h)\right]
\exp\left[i\Theta(\tilde{k},\tilde{h}_1)\right],
\end{equation}
where we have written 
\begin{eqnarray}
\int d\mu(\tilde{k})
&:=&\prod_{(x^{(2)},\ldots,x^{(d)})}\int_{-\infty}^{+\infty}\frac{dk^{(1)}(x_1^0)}{\sqrt{4\pi}}e^{-[k^{(1)}(x_1^0)]^2/4}
\int_{-\infty}^{+\infty}\frac{dk^{(1)}(x_1^+)}{\sqrt{4\pi}}e^{-[k^{(1)}(x_1^+)]^2/4}\ret
&\times&\int_{-\infty}^{+\infty}\frac{dk^{(2)}(x_1^0)}{\sqrt{4\pi}}e^{-[k^{(2)}(x_1^0)]^2/4}
\int_{-\infty}^{+\infty}\frac{dk^{(2)}(x_1^+)}{\sqrt{4\pi}}e^{-[k^{(2)}(x_1^+)]^2/4},
\end{eqnarray}
\begin{eqnarray}
A_-(\tilde{k})&:=&\left\{\prod_{(x^{(2)},\ldots,x^{(d)})}
\exp\left[ik^{(1)}(x_1^0)\sqrt{\frac{\beta g}{4\pi}}\Gamma_{x_1^0}^{(1)}\right]
\exp\left[-ik^{(1)}(x_1^+)\sqrt{\frac{\beta g}{4\pi}}\Gamma_{x_1^-}^{(1)}\right]\right\}\ret
&\times& \left\{\prod_{(x^{(2)},\ldots,x^{(d)})}
\exp\left[k^{(2)}(x_1^0)\sqrt{\frac{\beta g}{4\pi}}\Gamma_{x_1^0}^{(2)}\right]
\exp\left[-k^{(2)}(x_1^+)\sqrt{\frac{\beta g}{4\pi}}\Gamma_{x_1^-}^{(2)}\right]\right\}, 
\end{eqnarray}
and 
\begin{equation}
\exp\left[i\Theta(\tilde{k},\tilde{h}_1)\right]:=\prod_{(x^{(2)},\ldots,x^{(d)})}
\exp\left[ik^{(1)}(x_1^0)\tilde{h}_1(x_1^0)+ik^{(1)}(x_1^+)\tilde{h}_1(x_1^+)\right]; 
\end{equation}
$A_+(\tilde{k})$ is given by  
\begin{equation}
A_+(\tilde{k})=\vartheta(A_-(\tilde{k}))\in\mathfrak{A}_+.
\end{equation}
Here, we have also used the following relations:  
\begin{equation}
\vartheta(\Gamma_{x_1^0}^{(1)})=\Gamma_{x_1^1}^{(1)},\ \vartheta(\Gamma_{x_1^-}^{(1)})=\Gamma_{x_1^+}^{(1)}, \
\vartheta(\Gamma_{x_1^0}^{(2)})=-\Gamma_{x_1^1}^{(2)},\ \vartheta(\Gamma_{x_1^-}^{(2)})=-\Gamma_{x_1^+}^{(2)}. 
\end{equation}
Clearly, one has 
\begin{equation}
K_\pm(\tilde{k},B,h)\in \mathfrak{A}_\pm.
\end{equation}

Note that 
\begin{eqnarray}
(I_M)^M&=&\int d\mu(\tilde{k}_1)\int d\mu(\tilde{k}_2)\cdots\int d\mu(\tilde{k}_M)
\left(1-\frac{\beta}{M}\tilde{H}_{\rm hop,1,0}^{(\Lambda)}\right)K_-(\tilde{k}_1,B,h)K_+(\tilde{k}_1,B,h)\ret
&\times&\left(1-\frac{\beta}{M}\tilde{H}_{\rm hop,1,0}^{(\Lambda)}\right)K_-(\tilde{k}_2,B,h)K_+(\tilde{k}_2,B,h)\times\cdots \ret
\cdots&\times&\left(1-\frac{\beta}{M}\tilde{H}_{\rm hop,1,0}^{(\Lambda)}\right)K_-(\tilde{k}_M,B,h)K_+(\tilde{k}_M,B,h).
\end{eqnarray}
By using the expression (\ref{tildeHhop10}) of $\tilde{H}_{\rm hop,1,0}^{(\Lambda)}$, 
we expand the integrand so that each term has the following form: 
\begin{eqnarray}
\label{expandeach}
& &\left(\frac{-i\beta\kappa}{2M}\right)^n K_-(\tilde{k}_1,B,h)K_+(\tilde{k}_1,B,h)\cdots 
K_-(\tilde{k}_{\ell_1},B,h)K_+(\tilde{k}_{\ell_1},B,h)\gamma_1\vartheta(\gamma_1)\ret
&\times&K_-(\tilde{k}_{\ell_1+1},B,h)K_+(\tilde{k}_{\ell_1+1},B,h)\cdots K_-(\tilde{k}_{\ell_2},B,h)K_+(\tilde{k}_{\ell_2},B,h)
\gamma_2\vartheta(\gamma_2)\ret
&\times&K_-(\tilde{k}_{\ell_2+1},B,h)K_+(\tilde{k}_{\ell_2+1},B,h)\cdots K_-(\tilde{k}_{\ell_n},B,h)K_+(\tilde{k}_{\ell_n},B,h)
\gamma_n\vartheta(\gamma_n)\ret
&\times&K_-(\tilde{k}_{\ell_n+1},B,h)K_+(\tilde{k}_{\ell_n+1},B,h)\cdots K_-(\tilde{k}_M,B,h)K_+(\tilde{k}_M,B,h),
\end{eqnarray}
where $\gamma_\ell\in \mathfrak{A}_-$ takes $\xi_{x_1^0,\sigma}, \eta_{x_1^0,\sigma}, 
\xi_{x_1^-,\sigma},\eta_{x_1^-,\sigma}$.  
When $n={\rm odd}$, the trace of this term is vanishing \cite{JP}. Therefore, it is sufficient to consider 
the case with $n={\rm even}$. Since $K_\pm(\tilde{k},B,h)$ has even fermion parity and $K_\pm(\tilde{k},B,h)\in\mathfrak{A}_\pm$, 
one has 
\begin{eqnarray}
& &K_-(\tilde{k}_1,B,h)K_+(\tilde{k}_1,B,h)\cdots 
K_-(\tilde{k}_{\ell_1},B,h)K_+(\tilde{k}_{\ell_1},B,h)\gamma_1\vartheta(\gamma_1)\ret
&\times&K_-(\tilde{k}_{\ell_1+1},B,h)K_+(\tilde{k}_{\ell_1+1},B,h)\cdots K_-(\tilde{k}_{\ell_2},B,h)K_+(\tilde{k}_{\ell_2},B,h)
\gamma_2\vartheta(\gamma_2)\ret
&=&K_-(\tilde{k}_1,B,h)\cdots K_-(\tilde{k}_{\ell_1},B,h)\gamma_1
K_+(\tilde{k}_1,B,h)\cdots K_+(\tilde{k}_{\ell_1},B,h)\vartheta(\gamma_1)\ret
&\times&K_-(\tilde{k}_{\ell_1+1},B,h)\cdots K_-(\tilde{k}_{\ell_2},B,h)\gamma_2 
K_+(\tilde{k}_{\ell_1+1},B,h)\cdots K_+(\tilde{k}_{\ell_2},B,h)\vartheta(\gamma_2)\ret
&=&(-1)K_-(\tilde{k}_1,B,h)\cdots K_-(\tilde{k}_{\ell_1},B,h)\gamma_1
K_-(\tilde{k}_{\ell_1+1},B,h)\cdots K_-(\tilde{k}_{\ell_2},B,h)\gamma_2\ret
&\times&K_+(\tilde{k}_1,B,h)\cdots K_+(\tilde{k}_{\ell_1},B,h)\vartheta(\gamma_1)
K_+(\tilde{k}_{\ell_1+1},B,h)\cdots K_+(\tilde{k}_{\ell_2},B,h)\vartheta(\gamma_2),\ret
\end{eqnarray}
where the factor $(-1)$ is obtained by the anti-commutation relation $\vartheta(\gamma_1)\gamma_2=-\gamma_2\vartheta(\gamma_1)$ 
for the Majorana fermions. Clearly, this right-hand side has the form $X_-X_+$ with $X_\pm\in\mathfrak{A}_\pm$. 
Therefore, the term of (\ref{expandeach}) can be written into the form, 
\begin{equation}
\left(\frac{\beta\kappa}{2M}\right)^{2m} X_-(1)X_+(1)X_-(2)X_+(2)\cdots X_-(m)X_+(m),
\end{equation}
where we have written $n=2m$ with the integer $m\ge 0$, and $X_\pm(j)\in\mathfrak{A}_\pm$ for $j=1,2,\ldots,m$. 
Since $X_\pm(j)$ has even fermion parity, one has 
\begin{eqnarray}
& &\left(\frac{\beta\kappa}{2M}\right)^{2m} X_-(1)X_+(1)X_-(2)X_+(2)\cdots X_-(m)X_+(m)\ret
&=&\left(\frac{\beta\kappa}{2M}\right)^{2m} X_-(1)X_-(2)\cdots X_-(m)X_+(1)X_+(2)\cdots X_+(m). 
\end{eqnarray}
Therefore, ${\rm Tr}(I_M)^M$ can be written in the form, 
\begin{equation}
\label{TrIMM}
{\rm Tr}(I_M)^M=\int d\mu(\tilde{k}_1)\int d\mu(\tilde{k}_2)\cdots\int d\mu(\tilde{k}_M)\sum_j {\rm Tr}\; W_-(j)W_+(j),
\end{equation}
where 
\begin{equation}
W_\pm(j):=\left(\frac{\beta\kappa}{2M}\right)^{m_j}X_\pm(1,j)X_\pm(2,j)\cdots X_\pm(m_j,j)
\end{equation}
with integer $m_j\ge 0$ and operator $X_\pm(i,j)\in\mathfrak{A}_\pm$ for $i=1,2,\ldots,m_j$.
 
Let $\mathcal{A}_j=\mathcal{A}_j(\tilde{k}_1,\tilde{k}_2,\ldots,\tilde{k}_M)\in\mathfrak{A}_-$ and  
$\mathcal{B}_j=\mathcal{B}_j(\tilde{k}_1,\tilde{k}_2,\ldots,\tilde{k}_M)\in\mathfrak{A}_-$ be 
two sets $\{\mathcal{A}_j\}$ and $\{\mathcal{B}_j\}$ of operator-valued 
functions of $\tilde{k}_1,\tilde{k}_2,\ldots,\tilde{k}_M$. 
By relying on \cite{JP} the positivity ${\rm Tr}\; \mathcal{A}\vartheta(\mathcal{A})\ge 0$ for $\mathcal{A}\in\mathfrak{A}_-$, 
we define an inner product by 
\begin{equation}
\langle \! \langle \{\mathcal{A}_j\},\{\mathcal{B}_j\}\rangle\! \rangle:=
\int d\mu(\tilde{k}_1)\int d\mu(\tilde{k}_2)\cdots\int d\mu(\tilde{k}_M)\sum_j {\rm Tr}\; \mathcal{A}_j\vartheta(\mathcal{B}_j). 
\end{equation}
This yields Schwarz inequality, 
\begin{equation}
\left|\langle \! \langle \{\mathcal{A}_j\},\{\mathcal{B}_j\}\rangle\! \rangle\right|^2
\le \langle \! \langle \{\mathcal{A}_j\},\{\mathcal{A}_j\}\rangle\! \rangle
\langle \! \langle \{\mathcal{B}_j\},\{\mathcal{B}_j\}\rangle\! \rangle.
\end{equation}
Since $W_+(j)=\vartheta(\vartheta(W_+(j)))$ with $\vartheta(W_+(j))\in\mathfrak{A}_-$, the application of 
Schwarz inequality to the right-hand side of (\ref{TrIMM}) yields the desired result (\ref{PRSchwarz}). 
More precisely, the difference between $W_-(j)$ and $W_+(j)$ comes from (\ref{K-}) and (\ref{K+}). 
The two Hamiltonians $\tilde{H}_\pm^{(\Lambda)}(B,h)$ are mapped onto the opposite side by the reflection $\vartheta$, 
and the phase $\Theta$ in the right-hand side of (\ref{K+}) is vanishing. 
\end{proof}

Combining this Proposition~\ref{pro:tildeHBh1d} with the expressions (\ref{tildeU1Hint1}) and (\ref{tildeU1Hintj}), 
one has: 

\begin{coro} For any given set of $d$ real-valued functions $h=\{h_1,h_2,\ldots,h_m\}$, the following inequality is valid: 
\begin{equation}
\label{PRSchwarzCoro}
\left\{{\rm Tr} \exp[-\beta \tilde{H}^{(\Lambda)}(B,h)]\right\}^2
\le{\rm Tr} \exp[-\beta\tilde{H}^{(\Lambda)}(B,h^-)]\times
{\rm Tr} \exp[-\beta\tilde{H}^{(\Lambda)}(B,h^+)],
\end{equation}
where the functions $h^+=\{h_1^+,h_2^+,\ldots,h_d^+\}$ are given by  
\begin{equation}
h_1^+(x)=
\begin{cases}
h_1(x), & \mbox{for \ } x\in \Lambda_+ \ \mbox{and} \ x^{(1)}\ne L;\\
-h_1(\vartheta(x+e_1)), & \mbox{for \ }x\in \Lambda_- \ \mbox{and} \ x^{(1)}\ne 0,
\end{cases}
\end{equation}
\begin{equation}
h_1^+(x_1^0)=h_1^+(x_1^+)=0,
\end{equation}
and when $m=2,3,\ldots,d$, 
\begin{equation}
h_m^+(x)=
\begin{cases}
h_m(x), & \mbox{for \ } x\in\Lambda_+;\\
h_m(\vartheta(x)), & \mbox{for \ } x\in\Lambda_-.
\end{cases}
\end{equation}
Similarly, the functions $h^-=\{h_1^-,h_2^-,\ldots,h_d^-\}$ are given by 
\begin{equation}
h_1^-(x)=
\begin{cases}
h_1(x) & \mbox{for \ } x\in \Lambda_- \ \mbox{and \ } x^{(1)}\ne 0;\\
-h_1(\vartheta(x+e_1)) & \mbox{for \ } x\in \Lambda_+ \ \mbox{and} \ x^{(1)}\ne L,
\end{cases}
\end{equation}
\begin{equation}
h_1^-(x_1^0)=h_1^-(x_1^+)=0,
\end{equation}
and if $m=2,3,\ldots,d$, 
\begin{equation}
h_m^-(x)=
\begin{cases}
h_m(x), & \mbox{for \ } x\in\Lambda_-;\\
h_m(\vartheta(x)), & \mbox{for \ } x\in\Lambda_+.  
\end{cases}
\end{equation}
\end{coro}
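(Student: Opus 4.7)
The plan is to apply Proposition~\ref{pro:tildeHBh1d} and then identify the two factors on the right-hand side of (\ref{PRSchwarz}) with $\mathrm{Tr}\exp[-\beta\tilde H^{(\Lambda)}(B,h^-)]$ and $\mathrm{Tr}\exp[-\beta\tilde H^{(\Lambda)}(B,h^+)]$ respectively, using the decomposition (\ref{tildeHdecomp}) together with the explicit forms (\ref{tildeU1Hint1}) and (\ref{tildeU1Hintj}). For the $+$ factor this amounts to verifying three equalities in the decomposition $\tilde H^{(\Lambda)}(B,h^+)=\tilde H_-^{(\Lambda)}(B,h^+)+\tilde H_+^{(\Lambda)}(B,h^+)+\tilde H_0^{(\Lambda)}(h^+)$: first, $\tilde H_+^{(\Lambda)}(B,h^+)=\tilde H_+^{(\Lambda)}(B,h)$, which is immediate because $h^+$ coincides with $h$ at all interior $\Lambda_+$-sites on which the $+$-Hamiltonian depends; second, $\tilde H_0^{(\Lambda)}(h^+)=\tilde H_0^{(\Lambda)}(0)$, which follows from the prescribed vanishing $h_1^+(x_1^0)=h_1^+(x_1^+)=0$ together with the fact that $\tilde H_{\mathrm{int},0}^{(\Lambda)}$ of (\ref{tildeHint0}) depends on the background field only through $h_1$ at these two boundary sites; and third, $\tilde H_-^{(\Lambda)}(B,h^+)=\vartheta(\tilde H_+^{(\Lambda)}(B,h))$, which is the substantive identity and is precisely what dictates the explicit form of $h^+$ on $\Lambda_-$.

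To prove the third equality, I would examine each piece of $\tilde H_+^{(\Lambda)}(B,h)$ under the anti-linear reflection $\vartheta$ separately. The hopping and symmetry-breaking-field pieces are independent of $h$ and already satisfy $\vartheta(\tilde H_{\mathrm{hop},\pm}^{(\Lambda)})=\tilde H_{\mathrm{hop},\mp}^{(\Lambda)}$ by (\ref{tildeHhop1reflec}) and its analogue for $j\ge 2$, and $\vartheta(\tilde H_{\mathrm{SBF},\pm}^{(\Lambda)}(B))=\tilde H_{\mathrm{SBF},\mp}^{(\Lambda)}(B)$ by (\ref{tildeHSBFdecomp}), so these transport to $\Lambda_-$ automatically. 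For the interaction pieces I would argue bond by bond. When $m\ge 2$, a bond $\{x,x+e_m\}\subset\Lambda_+$ reflects to $\{\vartheta(x),\vartheta(x)+e_m\}\subset\Lambda_-$ since $\vartheta$ fixes $e_m$, and the symmetric square $(\Gamma_x^{(1)}+\Gamma_{x+e_m}^{(1)}+\tilde h_m(x))^2$ transforms to $(\Gamma_{\vartheta(x)}^{(1)}+\Gamma_{\vartheta(x)+e_m}^{(1)}+\tilde h_m(x))^2$. Since the prefactor $(-1)^{x^{(2)}+\cdots+x^{(d)}}$ hidden in $\tilde h_m$ is invariant under $\vartheta$, this matches the prescription $h_m^+(\vartheta(x))=h_m(x)$ given in the corollary. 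When $m=1$, a bond $\{x,x+e_1\}\subset\Lambda_+$ reflects to $\{\vartheta(x),\vartheta(x+e_1)\}$; setting $y:=\vartheta(x+e_1)\in\Lambda_-$ one has $y+e_1=\vartheta(x)$, and the antisymmetric square $(\Gamma_x^{(1)}-\Gamma_{x+e_1}^{(1)}+\tilde h_1(x))^2$ transforms to $(\Gamma_{y+e_1}^{(1)}-\Gamma_y^{(1)}+\tilde h_1(\vartheta(y+e_1)))^2=(\Gamma_y^{(1)}-\Gamma_{y+e_1}^{(1)}-\tilde h_1(\vartheta(y+e_1)))^2$, which forces $h_1^+(y)=-h_1(\vartheta(y+e_1))$ on $\Lambda_-$ with $y^{(1)}\ne 0$, exactly as stated. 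The $\Gamma_x^{(2)}$ squares are $h$-independent and reflect to themselves because $\vartheta(\Gamma_x^{(2)})=-\Gamma_{\vartheta(x)}^{(2)}$ makes the sign of both terms of the difference (or sum) flip simultaneously. The argument for the $-$ factor in (\ref{PRSchwarz}) is completely parallel.

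The main obstacle is not conceptual but purely a bookkeeping one: the asymmetric square $(\Gamma_x^{(1)}-\Gamma_{x+e_1}^{(1)})^2$ in the direction-$1$ interaction means that reflecting an interior $x^{(1)}$-bond across $\tilde\Pi$ reverses the roles of the two endpoints and thereby injects a minus sign into the $h_1$-term, while the perpendicular bonds, which carry the symmetric combination $+$, do not. The care is in organizing the case distinctions between interior bonds in $\Lambda_\pm$, the two boundary bonds absorbed into $\tilde H_0^{(\Lambda)}$ (for which the cross-plane value of $h_1^\pm$ is instead set to zero), and the different-direction bonds, and then checking that the prefactors from $\tilde h_m$ are consistent with $\vartheta$ throughout.
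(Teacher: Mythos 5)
Your proposal is correct and follows the same route as the paper, which derives the corollary in one line by combining Proposition~\ref{pro:tildeHBh1d} with the expressions (\ref{tildeU1Hint1}) and (\ref{tildeU1Hintj}); you simply make explicit the three identifications $\tilde H_\pm^{(\Lambda)}(B,h^\pm)=\tilde H_\pm^{(\Lambda)}(B,h)$, $\tilde H_0^{(\Lambda)}(h^\pm)=\tilde H_0^{(\Lambda)}(0)$, and $\tilde H_\mp^{(\Lambda)}(B,h^\pm)=\vartheta(\tilde H_\pm^{(\Lambda)}(B,h))$ that the paper leaves implicit. Your sign bookkeeping for the antisymmetric $e_1$-squares versus the symmetric $e_m$-squares ($m\ge 2$), and the observation that the factor $(-1)^{x^{(2)}+\cdots+x^{(d)}}$ in $\tilde h_m$ is $\vartheta$-invariant, are exactly the checks needed.
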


Further, from the definition (\ref{tildeHdecomp}) of $\tilde{H}^{(\Lambda)}(B,h)$, one has 
\begin{equation}
\label{PRSchwarzCoroP}
\left\{{\rm Tr} \exp[-\beta {H}^{(\Lambda)}(B,h)]\right\}^2
\le{\rm Tr} \exp[-\beta{H}^{(\Lambda)}(B,h^-)]\times
{\rm Tr} \exp[-\beta{H}^{(\Lambda)}(B,h^+)]
\end{equation}
for the original Hamiltonian $H^{(\Lambda)}(B,h)$. 
In the argument in the proof of Proposition~\ref{pro:tildeHBh1d}, the anti-periodic boundary condition 
in the Hamiltonian $H_{\rm hop,1}^{(\Lambda)}$ of (\ref{Hhop1}) is crucial. 
But, as shown in Sec.~\ref{GaugeTransBC}, we can change the locations of the bonds with the opposite sign of the hopping amplitudes 
due to the anti-periodic boundary conditions by using a gauge transformation. 
Further, as shown in Sec.~\ref{GaugeTransHA}, we can also interchange the roles of the hopping amplitudes 
in the $x^{(1)}$ and $x^{(j)}$ directions for all $j=2,3,\ldots,d$. 
Combining these observations with the argument in the proof of Theorem~4.2 in \cite{DLS}, we obtain:  

\begin{theorem}
\label{theorem:GaussDomi}
The following bound is valid: 
\begin{equation}
\label{GaussDomi}
{\rm Tr}\; \exp[-\beta {H}^{(\Lambda)}(B,h)]\le {\rm Tr}\; \exp[-\beta {H}^{(\Lambda)}(B,0)]
\end{equation}
for any real-valued functions $h=\{h_1,h_2,\ldots,h_d\}$. 
\end{theorem}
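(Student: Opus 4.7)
My plan is to upgrade the one-plane reflection bound (\ref{PRSchwarzCoroP}) to (\ref{GaussDomi}) by iterating in every coordinate direction and at every possible reflection plane, exploiting the gauge equivalences of Section~\ref{GaugeTrans} in the spirit of Theorem~4.2 of \cite{DLS}. First I would note that $Z^{(\Lambda)}(B,h):={\rm Tr}\exp[-\beta H^{(\Lambda)}(B,h)]$ is a smooth, strictly positive function of $h=\{h_m(x)\}$ on $\re^{d|\Lambda|}$ and that $Z^{(\Lambda)}(B,h)\to 0$ as $\|h\|\to\infty$: the square $(\Gamma_x^{(1)}+\Gamma_{x+e_m}^{(1)}+(-1)^{x^{(1)}+\cdots+x^{(d)}}h_m(x))^2$ inside $H_{\rm int}^{(\Lambda)}(h)$ dominates the bounded $\Gamma^{(2)}$-terms and yields an operator lower bound of the form $c\|h\|^2-C$. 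Hence $Z^{*}:=\sup_h Z^{(\Lambda)}(B,h)$ is attained at some maximizer $h^*$.

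Next, I would apply (\ref{PRSchwarzCoroP}) to $h^*$: from $(Z^{(\Lambda)}(B,h^*))^2\le Z^{(\Lambda)}(B,h^{*,-})\,Z^{(\Lambda)}(B,h^{*,+})$ and the maximality $Z^{(\Lambda)}(B,h^{*,\pm})\le Z^{*}$, both $h^{*,\pm}$ are themselves maximizers. Using the gauge $U_{{\rm BC},1}^{(\Lambda)}(L\to\ell)$ of Sec.~\ref{GaugeTransBC}, the reflection plane can be placed between any two adjacent layers normal to $x^{(1)}$; using $U_{\rm HA}^{(\Lambda)}(j\to 1)$ of Sec.~\ref{GaugeTransHA}, the argument transfers to planes normal to any direction $x^{(j)}$. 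Iterating in the chessboard fashion of \cite{DLS} then forces some maximizer $h^*$ to be constant, $h_m^*(x)\equiv c_m$.

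For a constant $h_m(x)\equiv c_m$, expanding the squares in $H_{\rm int}^{(\Lambda)}(c)$ kills the cross-term $(g/2)\sum_m c_m\sum_{x\in\Lambda}(-1)^{x^{(1)}+\cdots+x^{(d)}}(\Gamma_x^{(1)}+\Gamma_{x+e_m}^{(1)})$: the substitution $y=x+e_m$ flips the staggering sign under the periodic boundary conditions of $H_{\rm int}^{(\Lambda)}$, so this sum vanishes. One is left with
$$
H_{\rm int}^{(\Lambda)}(c)=H_{\rm int}^{(\Lambda)}(0)+\frac{g|\Lambda|}{4}\sum_{m=1}^d c_m^2,
$$
and consequently $Z^{(\Lambda)}(B,c)=\exp\bigl[-\beta g|\Lambda|\sum_m c_m^2/4\bigr]\,Z^{(\Lambda)}(B,0)\le Z^{(\Lambda)}(B,0)$, with equality only at $c=0$. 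Combined with the previous step, this proves (\ref{GaussDomi}).

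The main obstacle is the iteration: each reflected inequality pertains to one specific plane and direction, and to move the plane one must conjugate by the appropriate gauge and re-establish the three-part decomposition (\ref{tildeHdecomp}) underpinning Proposition~\ref{pro:tildeHBh1d}. Verifying that $U_{{\rm BC},1}^{(\Lambda)}(L\to\ell)$ and $U_{\rm HA}^{(\Lambda)}(j\to 1)$ leave $\Gamma^{(1)}$ (and hence the $h$-coupling) invariant, and checking that finitely many such reflections suffice to force $h^*$ to be constant, is the delicate bookkeeping that carries the proof; the remainder follows the classical DLS pattern.
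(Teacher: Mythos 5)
Your proposal is correct and follows essentially the same route as the paper, which likewise derives (\ref{GaussDomi}) by combining the one-plane bound (\ref{PRSchwarzCoroP}) with the gauge transformations of Section~\ref{GaugeTrans} and then invoking the iteration argument of Theorem~4.2 of \cite{DLS}; you have simply spelled out the DLS maximizer/chessboard step and the final computation for constant $h$ that the paper leaves to the citation.
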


Note that 
$$
\Gamma_{x_1^0}^{(1)}\Gamma_{x_1^1}^{(1)}=(a_{x_1^0,\uparrow}^\dagger a_{x_1^0,\downarrow}^\dagger 
+a_{x_1^0,\downarrow}a_{x_1^0,\uparrow})(a_{x_1^1,\uparrow}^\dagger a_{x_1^1,\downarrow}^\dagger 
+a_{x_1^1,\downarrow}a_{x_1^1,\uparrow})=\Gamma_{x_1^0}^{(1)}\vartheta(\Gamma_{x_1^0}^{(1)}),
$$
where $x_1^0=(0,x^{(2)},\ldots,x^{(d)})$ and $x_1^1=(1,x^{(2)},\ldots,x^{(d)})$. 
Therefore, the argument in the proof of Proposition~\ref{pro:tildeHBh1d} also yields 
$$
{\rm Tr}\; \left[\Gamma_{x_1^0}^{(1)}\Gamma_{x_1^1}^{(1)}e^{-\beta\tilde{H}^{(\Lambda)}(B,0)}\right]\ge 0.
$$
This implies 
\begin{equation}
\label{GammaGammanegative}
\langle \Gamma_x^{(1)}\Gamma_y^{(1)}\rangle_{\beta,B}^{(\Lambda)}\le 0
\end{equation}
for any $x,y$ satisfying $|x-y|=1$. Similarly, one has 
\begin{equation}
\langle \Gamma_x^{(2)}\Gamma_y^{(2)}\rangle_{\beta,B}^{(\Lambda)}\le 0
\end{equation}
for any $x,y$ satisfying $|x-y|=1$. 

\Section{Long-range order}

By using the Gaussian domination bound (\ref{GaussDomi}), we prove the existence of the long-range order of 
the superconductivity in this section.  

We write
\begin{equation}
\label{breveHam}
\breve{H}^{(\Lambda)}(B,h):=(U_{\rm odd}^{(\Lambda)})^\dagger H^{(\Lambda)}(B,h)U_{\rm odd}^{(\Lambda)}.
\end{equation}
Then, the bound (\ref{GaussDomi}) implies 
\begin{equation}
\label{GaussDomiodd}
{\rm Tr}\; \exp[-\beta \breve{H}^{(\Lambda)}(B,h)]\le {\rm Tr}\; \exp[-\beta \breve{H}^{(\Lambda)}(B,0)]. 
\end{equation}
For two operators, $\mathcal{A}$ and $\mathcal{B}$, the Duhamel two-point function is defined by 
\begin{equation}
(\!(\mathcal{A},\mathcal{B})\!)_{\beta,B}^{(\Lambda)}:=\frac{1}{\breve{Z}^{(\Lambda)}(B,0)}
\int_0^1 ds\; {\rm Tr} \left[e^{-s\beta \breve{H}^{(\Lambda)}(B,0)}\mathcal{A}e^{-(1-s)\beta\breve{H}^{(\Lambda)}(B,0)}
\mathcal{B}\right]
\end{equation}
with the partition function $\breve{Z}^{(\Lambda)}(B,0)={\rm Tr}\; e^{-\beta\breve{H}^{(\Lambda)}(B,0)}$. 
The bound (\ref{GaussDomiodd}) yields \cite{DLS} 
\begin{equation}
\label{preIRB}
(\!(\Gamma^{(1)}[\overline{\sum_m\partial_m h_m}],\Gamma^{(1)}[\sum_m\partial_m h_m])\!)_{\beta,B}^{(\Lambda)}
\le \frac{1}{\beta g}\sum_m \sum_{x\in\Lambda}\left|h_m(x)\right|^2 
\end{equation}
for $g>0$, where $\overline{(\cdots)}$ stands for complex conjugate, and we have written 
$$
\partial_m h_m(x)=h_m(x)-h_m(x-e_m)
$$
and 
$$
\Gamma^{(1)}[\psi]=\sum_{x\in\Lambda}\Gamma_x^{(1)}\psi(x)
$$
for a function $\psi$ on $\Lambda$. 
This bound has been proved for real-valued functions $h_m$, but it extends to complex-valued functions $h_m$ 
as proved in \cite{DLS}. The bound (\ref{preIRB}) is obtained as follows: 
By using
\begin{equation}
\label{U2transGamma}
(U_{\rm odd}^{(\Lambda)})^\dagger \Gamma_x^{(1)}U_{\rm odd}^{(\Lambda)}=(-1)^{x^{(1)}+\cdots+x^{(d)}}\Gamma_x^{(1)}
\quad \mbox{and} \quad (U_{\rm odd}^{(\Lambda)})^\dagger \Gamma_x^{(2)}U_{\rm odd}^{(\Lambda)}=\Gamma_x^{(2)},
\end{equation}
one has 
\begin{eqnarray}
(U_{\rm odd}^{(\Lambda)})^\dagger H_{\rm int}^{(\Lambda)}(h)U_{\rm odd}^{(\Lambda)}
&=&\frac{g}{4}\sum_{x\in\Lambda}\sum_{m=1}^d [\Gamma_x^{(1)}-\Gamma_{x+e_m}^{(1)}+h_m(x)]^2
-\frac{g}{4}\sum_{x\in\Lambda}\sum_{m=1}^d [\Gamma_x^{(2)}-\Gamma_{x+e_m}^{(2)}]^2\ret
&-&\frac{dg}{2}\sum_{x\in\Lambda}\left\{[\Gamma_x^{(1)}]^2-[\Gamma_x^{(2)}]^2\right\}
\end{eqnarray}
{from} the expression (\ref{Hinth}) of $H_{\rm int}^{(\Lambda)}(h)$. 
The summand of the first sum in the right-hand side is written 
\begin{equation}
[\Gamma_x^{(1)}-\Gamma_{x+e_m}^{(1)}+h_m(x)]^2
=[\Gamma_x^{(1)}-\Gamma_{x+e_m}^{(1)}]^2 +2(\Gamma_x^{(1)}-\Gamma_{x+e_m}^{(1)})h_m(x)
+[h_m(x)]^2. 
\end{equation}
Note that 
$$
\sum_{x\in\Lambda}\Gamma_{x+e_m}^{(1)}h_m(x)=\sum_{x\in\Lambda}\Gamma_x^{(1)}h_m(x-e_m). 
$$
This yields 
$$
\sum_{x\in\Lambda}(\Gamma_x^{(1)}-\Gamma_{x+e_m}^{(1)})h_m(x)
=\sum_{x\in\Lambda}\Gamma_x^{(1)}[h_m(x)-h_m(x-e_m)].
$$
{From} these observations and (\ref{GaussDomi}), one has the bound (\ref{preIRB}).  
 
We choose 
\begin{equation}
h_m(x)=\frac{1}{\sqrt{|\Lambda|}}[e^{ip\cdot (x+e_m)}-e^{ip\cdot x}],
\end{equation}
where $p=(p^{(1)},p^{(2)},\ldots,p^{(d)})\in (-\pi,\pi]^d$ is the wavevector. 
Then, one has 
$$
\partial_m h_m(x)=h_m(x)-h_m(x-e_m)=\frac{-2}{\sqrt{|\Lambda|}}e^{ip\cdot x}(1-\cos p^{(m)})
$$
and 
$$
\sum_{x\in\Lambda}\sum_m |h_m(x)|^2=2E_p,
$$
where we have written 
$$
E_p=\sum_m (1-\cos p^{(m)}).
$$
Substituting these into the bound (\ref{preIRB}), one obtain
\begin{equation}
(\!(\hat{\Gamma}_{p}^{(1)},\hat{\Gamma}_{-p}^{(1)})\!)_{\beta,B}^{(\Lambda)}\le \frac{1}{2\beta g E_p},
\end{equation}
where we have written 
\begin{equation}
\label{hatGammap1}
\hat{\Gamma}_p^{(1)}:=\frac{1}{\sqrt{|\Lambda|}}\sum_{x\in\Lambda}\Gamma_x^{(1)}e^{-ip\cdot x}.
\end{equation}
By using (\ref{breveHam}) and (\ref{U2transGamma}), we have 
\begin{equation}
\label{bpupbound}
(\hat{\Gamma}_{p}^{(1)},\hat{\Gamma}_{-p}^{(1)})\le \frac{1}{2\beta g E_{p+Q}},
\end{equation}
where $Q=(\pi,\ldots,\pi)$ and  
\begin{equation}
(\mathcal{A},\mathcal{B})_{\beta,B}^{(\Lambda)}:=\frac{1}{Z_{\beta,B}^{(\Lambda)}}
\int_0^1 ds\; {\rm Tr} \left[e^{-s\beta {H}^{(\Lambda)}(B,0)}\mathcal{A}e^{-(1-s)\beta{H}^{(\Lambda)}(B,0)}
\mathcal{B}\right]
\end{equation}
with the partition function $Z_{\beta,B}^{(\Lambda)}={\rm Tr}\; e^{-\beta {H}^{(\Lambda)}(B,0)}$ 
for two operators, $\mathcal{A}$ and $\mathcal{B}$. 

We write
\begin{equation}
\label{gp}
\mathfrak{g}_p=\frac{1}{2}\left[\langle \hat{\Gamma}_p^{(1)}\hat{\Gamma}_{-p}^{(1)}\rangle_{\beta,B}^{(\Lambda)}
+\langle \hat{\Gamma}_{-p}^{(1)}\hat{\Gamma}_p^{(1)}\rangle_{\beta,B}^{(\Lambda)}\right],
\end{equation} 
\begin{equation}
\label{bp}
\mathfrak{b}_p=(\hat{\Gamma}_{p}^{(1)},\hat{\Gamma}_{-p}^{(1)})_{\beta,B}^{(\Lambda)},
\end{equation}
and 
\begin{equation}
\label{cp}
\mathfrak{c}_p=\langle [\hat{\Gamma}_{-p}^{(1)},[H^{(\Lambda)}(B,0),\hat{\Gamma}_p^{(1)}]]\rangle_{\beta,B}^{(\Lambda)},
\end{equation}
where $\langle \cdots \rangle_{\beta,B}^{(\Lambda)}$ is the thermal expectation value given by (\ref{TEV}). 
Then, the following inequality is valid: \cite{FB,DLS,JNFP} 
\begin{equation}
\label{DLSboundO}
\mathfrak{g}_p\le \frac{1}{2}\left[\mathfrak{b}_p
+\sqrt{\mathfrak{b}_p^2 +\beta \mathfrak{b}_p\mathfrak{c}_p}\right]
\le \frac{1}{2\beta g E_{p+Q}}+\frac{1}{2}\sqrt{ \frac{1}{2gE_{p+Q}} \mathfrak{c}_p}, 
\end{equation}
where we have used (\ref{bpupbound}) and (\ref{bp}), and the inequality $\sqrt{a+b}\le\sqrt{a}+\sqrt{b}$ 
for $a,b\ge 0$. 

In order to prove the existence of the long-range order of the superconductivity, we consider 
the case with $B=0$ in the dimensions $d\ge 3$. 
Following \cite{KLS,KLS2}, we have 
\begin{multline}
\label{DLSbound}
\frac{1}{|\Lambda|}\sum_p \langle \hat{\Gamma}_p^{(1)}\hat{\Gamma}_{-p}^{(1)}\rangle_{\beta,0}^{(\Lambda)}
\Bigl[-\frac{1}{d}\sum_{i=1}^d\cos p^{(i)}\Bigr] 
\le \frac{1}{|\Lambda|}\sum_{p\ne Q}\frac{1}{2d\beta g E_{p+Q}} \Bigl\{-\sum_{i=1}^d\cos p^{(i)}\Bigr\}_+ \\
+\frac{1}{|\Lambda|}\sum_{p\ne Q}\sqrt{\frac{\mathfrak{c}_p}{8d^2gE_{p+Q}}} \Bigl\{-\sum_{i=1}^d\cos p^{(i)}\Bigr\}_+
+\frac{1}{|\Lambda|}\langle \hat{\Gamma}_Q^{(1)}\hat{\Gamma}_{Q}^{(1)}\rangle_{\beta,0}^{(\Lambda)}
\end{multline}
from the inequality (\ref{DLSboundO}) and the definition (\ref{gp}) of $\mathfrak{g}_p$, 
where $\{a\}_+=\max\{a,0\}$. The last term in the right-hand side is equal to the long-range order. 
In fact, the long-range order is defined by 
\begin{equation}
m_{\rm LRO}^{(\Lambda)}:=\left[\frac{1}{|\Lambda|^2}\sum_{x,y\in\Lambda}
 \langle \Gamma_x^{(1)}\Gamma_y^{(1)}\rangle_{\beta,0}^{(\Lambda)}\times 
(-1)^{x^{(1)}+\cdots+x^{(d)}}\times (-1)^{y^{(1)}+\cdots+y^{(d)}}  \right]^{1/2},
\end{equation}
and 
\begin{equation}
(m_{\rm LRO}^{(\Lambda)})^2=\frac{1}{|\Lambda|}\langle \hat{\Gamma}_Q^{(1)}\hat{\Gamma}_{Q}^{(1)}\rangle_{\beta,0}^{(\Lambda)}.
\end{equation}
The left-hand side of (\ref{DLSbound}) can be written 
\begin{equation}
\label{sumruleKLS}
-\frac{1}{d|\Lambda|}\sum_p \sum_{i=1}^d \langle \hat{\Gamma}_p^{(1)}\hat{\Gamma}_{-p}^{(1)}\rangle_{\beta,0}^{(\Lambda)}\cos p^{(i)}
=-\frac{1}{2|\Lambda|}\sum_{x\in\Lambda}\langle \Gamma_x^{(1)}\Gamma_{x-e_1}^{(1)}
+\Gamma_x^{(1)}\Gamma_{x+e_1}^{(1)}\rangle_{\beta,0}^{(\Lambda)}
\end{equation}
{from} the expression (\ref{hatGammap1}) and (\ref{direcIndep}). 
In the present case, the first sum in the right-hand side of (\ref{DLSbound}) gives the small contribution for a large $\beta$. 
For this contribution, we write 
\begin{equation}
\delta(\beta)=\frac{1}{|\Lambda|}\sum_{p\ne Q}\frac{1}{2d\beta g E_{p+Q}}\Bigl\{-\sum_{i=1}^d\cos p^{(i)}\Bigr\}_+.
\end{equation}
By using the Schwarz inequality and the positivity of $\mathfrak{c}_p$, 
the second term in the right-hand side of (\ref{DLSbound}) can be estimated as 
\begin{multline}
\frac{1}{|\Lambda|}\sum_{p\ne Q}\sqrt{\frac{\mathfrak{c}_p}{8d^2gE_{p+Q}}} \Bigl\{-\sum_i\cos p^{(i)}\Bigr\}_+\\
\qquad \qquad \qquad \quad \quad 
\le \frac{1}{2d\sqrt{2g}}\sqrt{\frac{1}{|\Lambda|}\sum_{p\ne Q}\frac{(\{-\sum_i\cos p^{(i)}\}_+)^2}{E_{p+Q}}}
\times \sqrt{\frac{1}{|\Lambda|}\sum_{p\ne Q}\mathfrak{c}_p }\\
\le \frac{1}{2d\sqrt{2g}}\sqrt{\frac{1}{|\Lambda|}\sum_{p\ne Q}\frac{(\{-\sum_i\cos p^{(i)}\}_+)^2}{E_{p+Q}}}
\times \sqrt{\frac{1}{|\Lambda|}\sum_{p}\mathfrak{c}_p }.
\end{multline}
{From} these observations, one has 
\begin{equation}
\label{KLSbound}
-\frac{1}{2|\Lambda|}\sum_{x\in\Lambda}\langle \Gamma_x^{(1)}\Gamma_{x-e_1}^{(1)}
+\Gamma_x^{(1)}\Gamma_{x+e_1}^{(1)}\rangle_{\beta,0}^{(\Lambda)}
\le\delta(\beta)+\frac{I_d}{2\sqrt{2dg}}\times \sqrt{\frac{1}{|\Lambda|}\sum_{p}\mathfrak{c}_p }+
(m_{\rm LRO}^{(\Lambda)})^2,
\end{equation}
where we have written 
\begin{equation}
\label{Id}
I_d:=\sqrt{\frac{1}{d|\Lambda|}\sum_{p\ne Q}\frac{(\{-\sum_i\cos p^{(i)}\}_+)^2}{E_{p+Q}}}.
\end{equation}

Next, let us calculate the double commutator in $\mathfrak{c}_p$ of (\ref{cp}). 
Since $H^{(\Lambda)}(0,0)=H^{(\Lambda)}(0)=H_{\rm hop}^{(\Lambda)}+H_{\rm int}^{(\Lambda)}$, 
we have 
\begin{equation}
\label{GpHdoubleCommu}
[\hat{\Gamma}_{-p}^{(1)},[H^{(\Lambda)}(0,0),\hat{\Gamma}_p^{(1)}]]
=[\hat{\Gamma}_{-p}^{(1)},[H_{\rm hop}^{(\Lambda)},\hat{\Gamma}_p^{(1)}]]
+[\hat{\Gamma}_{-p}^{(1)},[H_{\rm int}^{(\Lambda)},\hat{\Gamma}_p^{(1)}]].
\end{equation}
Note that 
$$
[\Gamma_x^{(1)},a_{x,\uparrow}^\dagger a_{y,\uparrow}-a_{y,\uparrow}^\dagger a_{x,\uparrow}]
=a_{y,\uparrow}^\dagger a_{x,\downarrow}^\dagger +a_{x,\downarrow}a_{y,\uparrow},
$$
$$
[\Gamma_x^{(1)},a_{y,\uparrow}^\dagger a_{x,\downarrow}^\dagger +a_{x,\downarrow}a_{y,\uparrow}]
=a_{x,\uparrow}^\dagger a_{y,\uparrow}-a_{y,\uparrow}^\dagger a_{x,\uparrow}
$$
and 
$$
\left\Vert a_{x,\uparrow}^\dagger a_{y,\uparrow}-a_{y,\uparrow}^\dagger a_{x,\uparrow}\right\Vert=1.
$$
Combining these with the expressions (\ref{Hhop}), (\ref{Hhop1}), (\ref{Hhop2}) and (\ref{hatGammap1}), 
the first term in the right-hand side of (\ref{GpHdoubleCommu}) can be estimated as 
\begin{equation}
\label{DcommuGammaHhop}
\left\Vert [\hat{\Gamma}_{-p}^{(1)},[H_{\rm hop}^{(\Lambda)},\hat{\Gamma}_p^{(1)}]]\right\Vert
\le 8d|\kappa|.
\end{equation}
Therefore, when $|\kappa|/g$ is sufficiently small, the corresponding contribution is small.    
In the following, we will consider this case. 

Next, consider the second term in the right-hand side of (\ref{GpHdoubleCommu}). From  
the expression (\ref{HintGamma12}) of $H_{\rm int}^{(\Lambda)}$, one has 
\begin{eqnarray}
[H_{\rm int}^{(\Lambda)},\hat{\Gamma}_p^{(1)}]&=&
\frac{1}{\sqrt{|\Lambda|}}\sum_{x\in\Lambda}e^{-ip\cdot x}[H_{\rm int}^{(\Lambda)},\Gamma_x^{(1)}]\ret 
&=&\frac{g}{2}\frac{1}{\sqrt{|\Lambda|}}\sum_{x\in\Lambda}\sum_{y\in\Lambda}\sum_{m=1}^d e^{-ip\cdot x}
[\Gamma_y^{(1)}\Gamma_{y+e_m}^{(1)}+\Gamma_y^{(2)}\Gamma_{y+e_m}^{(2)},\Gamma_x^{(1)}]\ret
&=&-ig\frac{1}{\sqrt{|\Lambda|}}\sum_{x\in\Lambda}\sum_{m=1}^d e^{-ip\cdot x}
\Gamma_x^{(3)}(\Gamma_{x+e_m}^{(2)}+\Gamma_{x-e_m}^{(2)}), 
\end{eqnarray}
where we have used the commutation relation $[\Gamma_x^{(1)},\Gamma_x^{(2)}]=2i\Gamma_x^{(3)}$. 
Further, 
\begin{eqnarray}
[\hat{\Gamma}_{-p}^{(1)},[H_{\rm int}^{(\Lambda)},\hat{\Gamma}_p^{(1)}]]&=&
-ig \frac{1}{|\Lambda|}\sum_{z\in\Lambda}\sum_{x\in\Lambda}\sum_{m=1}^d e^{ip\cdot z}e^{-ip\cdot x}
[\Gamma_z^{(1)},\Gamma_x^{(3)}(\Gamma_{x+e_m}^{(2)}+\Gamma_{x-e_m}^{(2)})]\ret
&=&-2g\frac{1}{|\Lambda|}\sum_{x\in\Lambda}\sum_{m=1}^d (\Gamma_x^{(2)}\Gamma_{x+e_m}^{(2)}+\Gamma_x^{(2)}\Gamma_{x-e_m}^{(2)})\ret
&+&2g \frac{1}{|\Lambda|}\sum_{x\in\Lambda}\sum_{m=1}^d 
(\Gamma_x^{(3)}\Gamma_{x+e_m}^{(3)}e^{ip^{(m)}}+\Gamma_x^{(3)}\Gamma_{x-e_m}^{(3)}e^{-ip^{(m)}}),
\end{eqnarray}
where we have used the above commutation relation and $[\Gamma_x^{(3)},\Gamma_x^{(1)}]=2i\Gamma_x^{(2)}$. 
This yields 
\begin{equation}
\frac{1}{|\Lambda|}\sum_p \langle [\hat{\Gamma}_{-p}^{(1)},[H_{\rm int}^{(\Lambda)},\hat{\Gamma}_p^{(1)}]]\rangle_{\beta,0}^{(\Lambda)}
=-2g\frac{1}{|\Lambda|}\sum_{x\in\Lambda}\sum_{m=1}^d 
\langle \Gamma_x^{(2)}\Gamma_{x+e_m}^{(2)}+\Gamma_x^{(2)}\Gamma_{x-e_m}^{(2)}\rangle_{\beta,0}^{(\Lambda)}. 
\end{equation}
By combining this with (\ref{DcommuGammaHhop}), we obtain 
\begin{equation}
\label{cpbound}
\frac{1}{|\Lambda|}\sum_p \mathfrak{c}_p\le 
8d|\kappa|-2g\frac{1}{|\Lambda|}\sum_{x\in\Lambda}\sum_{m=1}^d 
\langle \Gamma_x^{(2)}\Gamma_{x+e_m}^{(2)}+\Gamma_x^{(2)}\Gamma_{x-e_m}^{(2)}\rangle_{\beta,0}^{(\Lambda)}. 
\end{equation}
By using (\ref{direcIndep}) and the rotation given by (\ref{U1rotation}) 
in Appendix~\ref{appendix:U(1)rotation}, the sum in the right-hand side can be written  
\begin{equation}
\sum_{x\in\Lambda}\sum_{m=1}^d 
\langle \Gamma_x^{(2)}\Gamma_{x+e_m}^{(2)}+\Gamma_x^{(2)}\Gamma_{x-e_m}^{(2)}\rangle_{\beta,0}^{(\Lambda)}
=d \sum_{x\in\Lambda}\langle \Gamma_x^{(1)}\Gamma_{x+e_1}^{(1)}+\Gamma_x^{(1)}\Gamma_{x-e_1}^{(1)}\rangle_{\beta,0}^{(\Lambda)},
\end{equation}
where we have used the rotational symmetry of the Hamiltonian $H^{(\Lambda)}(0)$ with $B=0$. 
We write 
\begin{equation}
\label{E1}
\mathcal{E}_1^{(\Lambda)}:=-\frac{1}{2|\Lambda|}
\sum_{x\in\Lambda}\langle \Gamma_x^{(1)}\Gamma_{x+e_1}^{(1)}+\Gamma_x^{(1)}\Gamma_{x-e_1}^{(1)}\rangle_{\beta,0}^{(\Lambda)}.
\end{equation}
{From} the inequality (\ref{GammaGammanegative}), one has $\mathcal{E}_1^{(\Lambda)}\ge 0$. 
Then, the above inequality (\ref{cpbound}) is written  
\begin{equation}
\label{sumcp}
\frac{1}{|\Lambda|}\sum_p \mathfrak{c}_p\le 
8d|\kappa|+4dg\mathcal{E}_1^{(\Lambda)}.
\end{equation}
Substituting this into (\ref{KLSbound}), we obtain 
\begin{equation}
\mathcal{E}_1^{(\Lambda)}
\le\delta(\beta)+I_d\sqrt{{|\kappa|}/{g}}+I_d\sqrt{{\mathcal{E}_1^{(\Lambda)}}/{2}}
+(m_{\rm LRO}^{(\Lambda)})^2,
\end{equation}
where we have used $\sqrt{a+b}\le\sqrt{a}+\sqrt{b}$ for $a,b\ge 0$. Further, this can be written 
\begin{equation}
\sqrt{\mathcal{E}_1}(\sqrt{\mathcal{E}_1}-I_d/\sqrt{2})-\delta(\beta)-I_d\sqrt{{|\kappa|}/{g}}
\le (m_{\rm LRO})^2,
\end{equation}
where we have written 
$$
\mathcal{E}_1:=\lim_{\Lambda\nearrow\ze^d}\mathcal{E}_1^{(\Lambda)}
$$
and 
$$
m_{\rm LRO}:=\lim_{\Lambda\nearrow\ze^d}m_{\rm LRO}^{(\Lambda)}. 
$$
The quantity $\mathcal{E}_1$ satisfies 
\begin{equation}
\label{E1bound}
\mathcal{E}_1\ge \frac{1}{2} -\frac{\tilde{\delta}(\beta)}{dg}-\frac{|\kappa|}{g}, 
\end{equation}
where $\tilde{\delta}(\beta)$ becomes small for a large $\beta$. This inequality is proved in Appendix~\ref{Appendix:meanenergy}. 
Consequently, we obtain the following result: 
When the constant $I_d$, which depends on the dimension $d$, satisfies $I_d<1$ in the infinite-volume limit, 
the long-range order exists for a large $\beta$ and a small $|\kappa|/g$.   
Our numerical computations show $I_3=0.68\cdots$ and $I_4=0.44\cdots$. 
Thus, the long-range order exists in the dimensions $d=3,4$.  

In order to prove the existence of the long-range order in the dimension $d\ge 5$, we need to estimate 
$I_d$ of (\ref{Id}). {From} the expression (\ref{Id}) of $I_d$, one has 
\begin{equation}
(I_d)^2=\int dp\; \frac{\Bigl[\{-\frac{1}{d}\sum_{m=1}^d \cos p^{(m)}\}_+\Bigr]^2}{1+\frac{1}{d}\sum_{m=1}^d \cos p^{(m)}    }
\end{equation}
in the infinite-volume limit, where we have written 
$$
\int dp:=\frac{1}{(2\pi)^d}\int_{-\pi}^{+\pi} dp^{(1)}\int_{-\pi}^{+\pi} dp^{(2)}\cdots \int_{-\pi}^{+\pi} dp^{(d)}.
$$
Following \cite{KLS2}, we prove $I_d\le I_3$ for all $d\ge 4$. 

Consider a function $F(s)$ of $s\in[0,1]$ which is given by 
$$
F(s)=\frac{s^2}{1-s}.
$$
One can show that this function $F$ is monotone increasing and convex. 
We write 
$$
Y(p):=-\frac{1}{d}\sum_{m=1}^d \cos p^{(m)}.
$$ 
Then, $(I_d)^2$ can be written as 
\begin{equation}
(I_d)^2=\int dp\; F(\{Y(p)\}_+). 
\end{equation}
We also introduce 
$$
Y_{ijk}(p)=-\frac{1}{3}(\cos p^{(i)}+\cos p^{(j)}+\cos p^{(k)})
$$
for $\{i,j,k\; |\; i\ne j,j\ne k, k\ne i\}\subset\{1,2,\ldots,d\}$. Then, one has 
$$
Y(p)=\frac{1}{ _dC_3}\sum_{\{i,j,k\}}Y_{ijk}(p),
$$
where ${ _dC_3}=d(d-1)(d-2)/6$, and the sum is over all the above sets which contain three elements. 
By using $\{a+b\}_+\le\{a\}_++\{b\}_+$ for $a,b\in\re$, one has  
$$
\{Y(p)\}_+\le \frac{1}{ _dC_3}\sum_{\{i,j,k\}}\{Y_{ijk}(p)\}_+.
$$
Therefore, from the monotonicity and convexity of the function $F$, we have 
\begin{equation}
F(\{Y(p)\}_+)\le F\Bigl(\frac{1}{ _dC_3}\sum_{\{i,j,k\}}\{Y_{ijk}(p)\}_+\Bigr)
\le\frac{1}{ _dC_3}\sum_{\{i,j,k\}}F(\{Y_{ijk}(p)\}_+). 
\end{equation}
By integrating the both sides over $p$, we obtain the desired result $(I_d)^2\le (I_3)^2$ for all $d\ge 3$.

\Section{Reflection positivity -- Spin space}

The present system has also a reflection positivity on the spin degrees of freedom \cite{Lieb,KuboKishi}. 
This property yields some inequalities about correlations. 
For example, see (\ref{supercorbound}) and (\ref{ncorbound}) below. 

Recall the expression (\ref{Hint}) of the interaction Hamiltonian   
$$
H_{\rm int}^{(\Lambda)}=g\sum_{|x-y|=1} (a_{x,\uparrow}^\dagger a_{x,\downarrow}^\dagger a_{y,\downarrow}a_{y,\uparrow}
+a_{y,\uparrow}^\dagger a_{y,\downarrow}^\dagger a_{x,\downarrow}a_{x,\uparrow})
$$
in the case of $g'=0$, i.e., with no Coulomb repulsion. Clearly, the two terms of the summand can be written as 
$$
a_{x,\uparrow}^\dagger a_{x,\downarrow}^\dagger a_{y,\downarrow}a_{y,\uparrow}
=a_{x,\uparrow}^\dagger a_{y,\uparrow}\times a_{x,\downarrow}^\dagger a_{y,\downarrow}
$$
and 
$$
a_{y,\uparrow}^\dagger a_{y,\downarrow}^\dagger a_{x,\downarrow}a_{x,\uparrow}
=
a_{y,\uparrow}^\dagger a_{x,\uparrow}\times a_{y,\downarrow}^\dagger a_{x,\downarrow}.
$$
We define an anti-linear map $\vartheta_{\rm spin}$ by 
\begin{equation}
\vartheta_{\rm spin}(a_{x,\uparrow})=a_{x,\downarrow}
\end{equation}
for $x\in\Lambda$. Then, the above two terms can be written 
\begin{equation}
a_{x,\uparrow}^\dagger a_{x,\downarrow}^\dagger a_{y,\downarrow}a_{y,\uparrow}
=a_{x,\uparrow}^\dagger a_{y,\uparrow}\times\vartheta_{\rm spin}(a_{x,\uparrow}^\dagger a_{y,\uparrow})
\end{equation}
and 
\begin{equation}
a_{y,\uparrow}^\dagger a_{y,\downarrow}^\dagger a_{x,\downarrow}a_{x,\uparrow}
=
a_{y,\uparrow}^\dagger a_{x,\uparrow}\times\vartheta_{\rm spin}(a_{y,\uparrow}^\dagger a_{x,\uparrow}).
\end{equation}

Due to a technical reason, we introduce a transformation $a_{x,\sigma}\rightarrow ia_{x,\sigma}$ for 
$x\in\Lambda_{\rm odd}$, $\sigma=\uparrow,\downarrow$. The unitary operator is given by 
\begin{equation}
U_{{\rm odd},\pi/2}^{(\Lambda)}=\prod_{x\in\Lambda_{\rm odd},\sigma=\uparrow,\downarrow}e^{(i\pi/2)n_{x,\sigma}}.
\end{equation}
One has 
\begin{multline}
\check{H}_{\rm int}^{(\Lambda)}:=(U_{{\rm odd},\pi/2}^{(\Lambda)})^\dagger H_{\rm int}^{(\Lambda)}
U_{{\rm odd},\pi/2}^{(\Lambda)}
=-g\sum_{|x-y|=1} (a_{x,\uparrow}^\dagger a_{x,\downarrow}^\dagger a_{y,\downarrow}a_{y,\uparrow}
+a_{y,\uparrow}^\dagger a_{y,\downarrow}^\dagger a_{x,\downarrow}a_{x,\uparrow})\\
=-g\sum_{|x-y|=1} [a_{x,\uparrow}^\dagger a_{y,\uparrow}\vartheta_{\rm spin}(a_{x,\uparrow}^\dagger a_{y,\uparrow})
+a_{y,\uparrow}^\dagger a_{x,\uparrow}\vartheta_{\rm spin}(a_{y,\uparrow}^\dagger a_{x,\uparrow})].
\end{multline}
For each hopping term in the hoping Hamiltonian $H_{\rm hop}^{(\Lambda)}$ of (\ref{Hhop}), we have 
$$
(U_{{\rm odd},\pi/2}^{(\Lambda)})^\dagger i(a_{x,\sigma}^\dagger a_{y,\sigma}-a_{y,\sigma}^\dagger a_{x,\sigma})
U_{{\rm odd},\pi/2}^{(\Lambda)}=
\begin{cases}
a_{x,\sigma}^\dagger a_{y,\sigma}+a_{y,\sigma}^\dagger a_{x,\sigma} & \mbox{for \ } x\in\Lambda_{\rm odd};\\
-(a_{x,\sigma}^\dagger a_{y,\sigma}+a_{y,\sigma}^\dagger a_{x,\sigma}) & \mbox{for \ } y\in\Lambda_{\rm odd},
\end{cases}
$$
where $x,y$ satisfy $|x-y|=1$. We write 
$$
\check{H}_{\rm hop}^{(\Lambda)}:=(U_{{\rm odd},\pi/2}^{(\Lambda)})^\dagger H_{\rm hop}U_{{\rm odd},\pi/2}^{(\Lambda)}. 
$$
Then, the Hamiltonian $\check{H}_{\rm hop}^{(\Lambda)}$ can be decomposed into two parts, 
$$
\check{H}_{\rm hop}^{(\Lambda)}=\check{H}_{{\rm hop},\uparrow}^{(\Lambda)}+\check{H}_{{\rm hop},\downarrow}^{(\Lambda)},
$$
where the two terms in the right-hand side satisfy 
$$
\vartheta_{\rm spin}(\check{H}_{{\rm hop},\downarrow}^{(\Lambda)})=\check{H}_{{\rm hop},\uparrow}^{(\Lambda)}.
$$

The term of the symmetry-breaking field is given by 
$$
H_{\rm SBF}^{(\Lambda)}(B)=-B\sum_{x\in\Lambda} (-1)^{x^{(1)}+\cdots+x^{(d)}}i(a_{x,\uparrow}^\dagger a_{x,\downarrow}^\dagger 
-a_{x,\downarrow}a_{x,\uparrow}).
$$
{from} (\ref{HamB}) and (\ref{O}). One has 
\begin{eqnarray}
\check{H}_{\rm SBF}^{(\Lambda)}(B)
:=(U_{{\rm odd},\pi/2}^{(\Lambda)})^\dagger H_{\rm SBF}^{(\Lambda)}(B)U_{{\rm odd},\pi/2}^{(\Lambda)}
&=&
-B\sum_{x\in\Lambda} i(a_{x,\uparrow}^\dagger a_{x,\downarrow}^\dagger +a_{x,\uparrow}a_{x,\downarrow})\ret
&=&-B\sum_{x\in\Lambda} i[a_{x,\uparrow}^\dagger \vartheta_{\rm spin}(a_{x,\uparrow}^\dagger)
+a_{x,\uparrow}\vartheta_{\rm spin}(a_{x,\uparrow})].\ret
\end{eqnarray}

We write  
$$
\check{H}^{(\Lambda)}(B):=\check{H}_{\rm hop}^{(\Lambda)}+\check{H}_{\rm int}^{(\Lambda)}+\check{H}_{\rm SBF}^{(\Lambda)}(B).
$$
Similarly to the case of the real space, we define  
\begin{eqnarray}
I_{M, {\rm spin}}&:=&\left\{1+\frac{\beta B}{M}
\sum_{x\in\Lambda} i[a_{x,\uparrow}^\dagger \vartheta_{\rm spin}(a_{x,\uparrow}^\dagger) 
+a_{x,\uparrow}\vartheta_{\rm spin}(a_{x,\uparrow})] \right\}\ret
& &\times\left\{1+\frac{\beta g}{M}  
\sum_{|x-y|=1} [a_{x,\uparrow}^\dagger a_{y,\uparrow}\vartheta_{\rm spin}(a_{x,\uparrow}^\dagger a_{y,\uparrow})
+a_{y,\uparrow}^\dagger a_{x,\uparrow}\vartheta_{\rm spin}(a_{y,\uparrow}^\dagger a_{x,\uparrow})]\right\}\ret
& &\times \exp\Bigl[-\frac{\beta}{M}\check{H}_{{\rm hop},\uparrow}^{(\Lambda)}\Bigr]
\exp\Bigl[-\frac{\beta}{M}\vartheta_{\rm spin}(\check{H}_{{\rm hop},\uparrow}^{(\Lambda)})\Bigr].
\end{eqnarray}
Then, 
$$
e^{-\beta\check{H}^{(\Lambda)}}=\lim_{M\nearrow\infty}(I_{M, {\rm spin}})^M. 
$$
The argument in the proof of Proposition~\ref{pro:tildeHBh1d} yields 
$$
{\rm Tr}\; \mathcal{A}_\uparrow \vartheta_{\rm spin}(\mathcal{A}_\uparrow)e^{-\beta\check{H}^{(\Lambda)}(B)}\ge 0
$$
for any operator $\mathcal{A}_\uparrow$ which has even fermion parity and is written in terms of 
the fermion operators $a_{x,\uparrow}$ and $a_{x,\uparrow}^\dagger$ with only the up spin $\uparrow$. 
Therefore, we have 
\begin{equation}
\langle U_{{\rm odd},\pi/2}^{(\Lambda)}\mathcal{A}_\uparrow \vartheta_{\rm spin}(\mathcal{A}_\uparrow)
(U_{{\rm odd},\pi/2}^{(\Lambda)})^\dagger\rangle_{\beta,B}^{(\Lambda)}\ge 0.
\end{equation}
For example, one has 
\begin{equation}
\label{supercorbound}
(-1)^{x^{(1)}+\cdots+x^{(d)}+y^{(1)}+\cdots+y^{(d)}}\langle a_{x,\uparrow}^\dagger a_{x,\downarrow}^\dagger 
a_{y,\downarrow}a_{y,\uparrow}\rangle_{\beta,B}^{(\Lambda)}\ge 0
\end{equation}
for any $x,y\in\Lambda$, and 
\begin{equation}
\label{ncorbound}
\langle (n_{x,\uparrow}-\zeta)(n_{x,\downarrow}-\overline{\zeta})\rangle_{\beta,B}^{(\Lambda)}\ge 0
\end{equation}
for any complex number $\zeta$. 
 
Let us examine the second inequality (\ref{ncorbound}) in more detail. 
Consider the particle-hole transformation, 
\begin{equation}
\label{PHtrans}
U_{\rm PH}^{(\Lambda)}:=\prod_{\sigma=\uparrow,\downarrow}\prod_{x\in\Lambda}u_{x,\sigma},
\end{equation}
where $u_{x,\sigma}$ is given by (\ref{uxsigma}). Then, one has 
$$
(U_{\rm PH}^{(\Lambda)})^\dagger a_{x,\sigma} U_{\rm PH}^{(\Lambda)}=a_{x,\sigma}^\dagger 
$$
for all $x\in\Lambda$ and all $\sigma=\uparrow,\downarrow$. 

First, we show that the Hamiltonian $H^{(\Lambda)}(B)$ of (\ref{HamB}) is invariant under this transformation 
$U_{\rm PH}^{(\Lambda)}$. Actually, one has 
$$
(U_{\rm PH}^{(\Lambda)})^\dagger(a_{x,\sigma}^\dagger a_{y,\sigma}-a_{y,\sigma}^\dagger a_{x,\sigma})U_{\rm PH}^{(\Lambda)}
=a_{x,\sigma} a_{y,\sigma}^\dagger-a_{y,\sigma} a_{x,\sigma}^\dagger
=a_{x,\sigma}^\dagger a_{y,\sigma}-a_{y,\sigma}^\dagger a_{x,\sigma}
$$ 
for the hopping term, 
\begin{eqnarray*}
(U_{\rm PH}^{(\Lambda)})^\dagger(a_{x,\uparrow}^\dagger a_{x,\downarrow}^\dagger a_{y,\downarrow}a_{y,\uparrow}
+a_{y,\uparrow}^\dagger a_{y,\downarrow}^\dagger a_{x,\downarrow}a_{x,\uparrow})U_{\rm PH}^{(\Lambda)}
&=&a_{x,\uparrow} a_{x,\downarrow} a_{y,\downarrow}^\dagger a_{y,\uparrow}^\dagger
+a_{y,\uparrow} a_{y,\downarrow} a_{x,\downarrow}^\dagger a_{x,\uparrow}^\dagger\\
&=&a_{x,\uparrow}^\dagger a_{x,\downarrow}^\dagger a_{y,\downarrow}a_{y,\uparrow}
+a_{y,\uparrow}^\dagger a_{y,\downarrow}^\dagger a_{x,\downarrow}a_{x,\uparrow}
\end{eqnarray*}
for the interaction term, and  
$$
(U_{\rm PH}^{(\Lambda)})^\dagger \Gamma_x^{(2)}U_{\rm PH}^{(\Lambda)}
=(U_{\rm PH}^{(\Lambda)})^\dagger i(a_{x,\uparrow}^\dagger a_{x,\downarrow}^\dagger - a_{x,\downarrow}a_{x,\uparrow})
U_{\rm PH}^{(\Lambda)}
=i(a_{x,\uparrow}a_{x,\downarrow} - a_{x,\downarrow}^\dagger a_{x,\uparrow}^\dagger)=\Gamma_x^{(2)}. 
$$
{From} this invariance of the Hamiltonian, one has 
$$
\langle n_{x,\sigma}\rangle_{\beta,B}^{(\Lambda)}
=\langle (U_{\rm PH}^{(\Lambda)})^\dagger n_{x,\sigma}U_{\rm PH}^{(\Lambda)}\rangle_{\beta,B}^{(\Lambda)}
=\langle (1-n_{x,\sigma})\rangle_{\beta,B}^{(\Lambda)}. 
$$
This implies 
\begin{equation}
\langle n_{x,\sigma}\rangle_{\beta,B}^{(\Lambda)}=\frac{1}{2}
\end{equation}
for all $x\in\Lambda$ and all $\sigma=\uparrow,\downarrow$. 

By choosing $\zeta=\langle n_{x,\uparrow}\rangle_{\beta,B}^{(\Lambda)}=\langle n_{x,\downarrow}\rangle_{\beta,B}^{(\Lambda)}=1/2$ 
in the inequality (\ref{ncorbound}), we obtain 
\begin{equation}
\langle n_{x,\uparrow}n_{x,\downarrow}\rangle_{\beta,B}^{(\Lambda)}\ge 
\langle n_{x,\uparrow}\rangle_{\beta,B}^{(\Lambda)}\langle n_{x,\downarrow}\rangle_{\beta,B}^{(\Lambda)}=\frac{1}{4}.
\end{equation}
Further, by combining these observations with 
the expression (\ref{GammaSq}), we have 
\begin{equation}
\label{Gamma2Lbound}
\langle [\Gamma_x^{(1)}]^2\rangle_{\beta,B}^{(\Lambda)}\ge \frac{1}{2}.
\end{equation}

As an application of the inequality (\ref{Gamma2Lbound}), let us try to use the sum rule \cite{DLS}, 
\begin{equation}
\label{sumruleO}
\frac{1}{|\Lambda|}\sum_p \langle \hat{\Gamma}_p^{(1)}\hat{\Gamma}_{-p}^{(1)}\rangle_{\beta,0}^{(\Lambda)}
=\frac{1}{|\Lambda|}\sum_{x\in\Lambda}\langle [\Gamma_x^{(1)}]^2\rangle_{\beta,0}^{(\Lambda)}, 
\end{equation}
instead of the sum rule (\ref{sumruleKLS}). Although this approach cannot prove the existence of the long-range order 
in the present situation, it is instructive for future studies. In addition, the difference between the spin and fermion systems 
is clarified. 

By using the inequality (\ref{DLSboundO}), one has 
\begin{eqnarray}
\frac{1}{|\Lambda|}\sum_p \langle \hat{\Gamma}_p^{(1)}\hat{\Gamma}_{-p}^{(1)}\rangle_{\beta,0}^{(\Lambda)}
&\le& \frac{1}{|\Lambda|}\sum_{p\ne Q}\frac{1}{2\beta g E_{p+Q}}  
+\frac{1}{|\Lambda|}\sum_{p\ne Q}\sqrt{\frac{\mathfrak{c}_p}{8dgE_{p+Q}}} 
+\frac{1}{|\Lambda|}\langle \hat{\Gamma}_Q^{(1)}\hat{\Gamma}_{Q}^{(1)}\rangle_{\beta,0}^{(\Lambda)}\ret
&\le&\delta'(\beta)+\sqrt{\frac{1}{|\Lambda|}\sum_{p\ne Q}\frac{1}{8dgE_{p+Q}}}
\times \sqrt{\frac{1}{|\Lambda|}\sum_p\mathfrak{c}_p}+(m_{\rm LRO}^{(\Lambda)})^2, 
\end{eqnarray}
where we have used the Schwarz inequality and the positivity of $\mathfrak{c}_p$, and we have written 
$$
\delta'(\beta):=\frac{1}{|\Lambda|}\sum_{p\ne Q}\frac{1}{2\beta g E_{p+Q}}. 
$$
Further, by using the bound (\ref{sumcp}) for $\mathfrak{c}_p$ and the inequality $\sqrt{a+b}\le \sqrt{a}+\sqrt{b}$ 
for $a,b\ge 0$, one has 
$$
\frac{1}{|\Lambda|}\sum_p \langle \hat{\Gamma}_p^{(1)}\hat{\Gamma}_{-p}^{(1)}\rangle_{\beta,0}^{(\Lambda)}
\le \delta'(\beta)+\left[G_d^{(\Lambda)}\right]^{1/2}\times \left[\sqrt{{|\kappa|}/{g}}
+\sqrt{\mathcal{E}_1^{(\Lambda)}/2}\right]+(m_{\rm LRO}^{(\Lambda)})^2,
$$
where 
$$
G_d^{(\Lambda)}:=\frac{1}{|\Lambda|}\sum_{p\ne Q}\sqrt{\frac{1}{E_{p+Q}}}.
$$
Since the interaction Hamiltonian of the present system is very similar to the quantum XY model, 
we can expect that the upper bound $\mathcal{E}_1^{(\Lambda)}\le [d(d+1)]^{1/2}/2$ for the mean energy $\mathcal{E}_1^{(\Lambda)}$ 
in the case of XY model \cite{Anderson,DLS} holds also for the present case.  
If we use this upper bound for $\mathcal{E}_1^{(\Lambda)}$, we have 
$$
\frac{1}{2}\le \frac{1}{|\Lambda|}\sum_p \langle \hat{\Gamma}_p^{(1)}\Gamma_{-p}^{(1)}\rangle_{\beta,0}^{(\Lambda)}
\le \delta''(\beta,|\kappa|/g)+\frac{1}{2}\left\{G_d^{(\Lambda)}[d(d+1)]^{1/2}\right\}^{1/2}
+(m_{\rm LRO}^{(\Lambda)})^2,
$$
where $\delta''(\beta,|\kappa|/g)$ is the small correction for a large $\beta$ and a small $|\kappa|/g$, and  
we have also used (\ref{Gamma2Lbound}) and (\ref{sumruleO}). 
However, the inequality $\lim_{\Lambda\nearrow \ze^d}d G_d^{(\Lambda)}\ge 1$ by \cite{DLS} implies that     
we cannot conclude the existence of the long-range order from the above inequality. 
In the case of the XY model \cite{DLS}, the value $1/2$ in the left-hand side is replaced by $1$, 
and the existence of the long-range order was already proved in \cite{DLS}.

\Section{Nambu-Goldstone modes}

Following \cite{Martin,Momoi,Koma1,Koma3}, we prove the existence of the Nambu-Goldstone mode in this section. 

First, we introduce another interaction,
$$
{H}_{\rm repul}^{(\Lambda)}=g'\sum_{\{x,y\}\subset\Lambda:|x-y|=1}
(n_{x,\uparrow}+n_{x,\downarrow}-1)(n_{y,\uparrow}+n_{y,\downarrow}-1), 
$$
with the coupling constant $g'>0$. This is a nearest-neighbour repulsive Coulomb interaction. For a technical reason, 
we need this interaction. More precisely, we need a bound (\ref{IRBomega}) below for the susceptibility \cite{Momoi} 
for the generator $\Gamma_x^{(3)}$ of the U(1) rotation. This bound is obtained from 
the Gaussian domination bound (\ref{GDBrepul}) below which is 
related to the repulsive Coulomb interaction. Therefore, we assume the coupling constant $g'>0$. 

Clearly, this interaction can be written  
\begin{eqnarray}
H_{\rm repul}^{(\Lambda)}&=&g'\sum_{\{x,y\}\subset\Lambda:|x-y|=1} \Gamma_x^{(3)}\Gamma_y^{(3)} \ret
&=&\frac{g'}{2}\sum_{\{x,y\}\subset\Lambda:|x-y|=1}[\Gamma_x^{(3)}+\Gamma_y^{(3)}]^2 -dg'\sum_{x\in\Lambda} [\Gamma_x^{(3)}]^2 
\end{eqnarray}
in terms of $\Gamma_x^{(3)}=1-n_{x,\uparrow}-n_{x,\downarrow}$. 
Similarly to the case in Sec.~\ref{Sec:RPRS}, we introduce $d$ real-valued functions $h'_m(x)$ on $\Lambda$, and define 
\begin{equation}
\label{Hreplgphp}
H_{\rm repul}^{(\Lambda)}(g',h'):=
\frac{g'}{2}\sum_{x\in\Lambda}\sum_{m=1}^d[\Gamma_x^{(3)}+\Gamma_{x+e_m}^{(3)}+h_m'(x)]^2 
-dg'\sum_{x\in\Lambda} [\Gamma_x^{(3)}]^2.
\end{equation}
We write 
\begin{equation}
H^{(\Lambda)}(B,0,g',h'):=H^{(\Lambda)}(B)+H_{\rm repul}^{(\Lambda)}(g',h'), 
\end{equation}
where $H^{(\Lambda)}(B)$ is the original Hamiltonian given by (\ref{HamB}). 
When $g'/g$ is small, the corresponding contribution by this new interaction 
which appears in the double commutator in $\mathfrak{c}_p$ is small. Therefore, the long-range order still exists, 
and the spontaneous symmetry breaking stably occurs against the weak perturbation.  

The same method as in Sec.~\ref{Sec:RPRS} is applicable again, and one has 
\begin{equation}
\label{GDBrepul}
{\rm Tr}\exp[-\beta H^{(\Lambda)}(B,0,g',h')]\le {\rm Tr}\exp[-\beta H^{(\Lambda)}(B,0,g',0)]. 
\end{equation}
The functions $h'$ are contained only in the Hamiltonian $H_{\rm repul}^{(\Lambda)}(g',h')$ of (\ref{Hreplgphp}).  
The summand of the first sum in the right-hand side of (\ref{Hreplgphp}) is written 
$$
[\Gamma_x^{(3)}+\Gamma_{x+e_m}^{(3)}+h_m'(x)]^2=[\Gamma_x^{(3)}+\Gamma_{x+e_m}^{(3)}]^2 
+2[\Gamma_x^{(3)}+\Gamma_{x+e_m}^{(3)}]h_m'(x)+[h_m'(x)]^2.
$$
For this second term in the right-hand side, one has  
$$
\sum_{x\in\Lambda}\sum_{m=1}^d [\Gamma_x^{(3)}+\Gamma_{x+e_m}^{(3)}]h_m'(x)
=\sum_{x\in\Lambda}\sum_{m=1}^d \Gamma_x^{(3)}[h_m'(x)+h_m'(x-e_m)].
$$
Therefore, one has
\begin{equation} 
\label{IRBrepul}
(\Gamma^{(3)}[\tilde{h}'],\Gamma^{(3)}[\tilde{h}'])_{\beta,B}^{(\Lambda)}(g')\le 
\frac{1}{2\beta g'}\sum_{x\in\Lambda}\sum_{m=1}^d |h_m'(x)|^2
\end{equation}
in the same way as in Sec.~\ref{Sec:RPRS}, where $(\cdots,\cdots)_{\beta,B}^{(\Lambda)}(g')$ is the Duhamel two-point function, 
the function $\tilde{h}'$ is given by 
\begin{equation}
\label{hp}
\tilde{h}'(x):=\sum_{m=1}^d [h_m'(x)+h_m'(x-e_m)], 
\end{equation}
and 
\begin{equation}
\label{Gammatildehp}
\Gamma^{(3)}[\tilde{h'}]=\sum_{x\in\Lambda}\Gamma_x^{(3)}\tilde{h}'(x).
\end{equation}

Note that 
$$
(U_{\rm PH}^{(\Lambda)})^\dagger H^{(\Lambda)}(B,0,g',0)U_{\rm PH}^{(\Lambda)}=H^{(\Lambda)}(B,0,g',0)
$$
and 
$$
(U_{\rm PH}^{(\Lambda)})^\dagger \Gamma_x^{(3)}U_{\rm PH}^{(\Lambda)}=-\Gamma_x^{(3)}
$$
for the unitary operator $U_{\rm PH}^{(\Lambda)}$ of (\ref{PHtrans}). These imply 
\begin{equation}
\label{GammaExpcVal}
\langle\Gamma_x^{(3)}\rangle_{\beta,B}^{(\Lambda)}(g')=0
\end{equation}
where 
$$
\langle\cdots\rangle_{\beta,B}^{(\Lambda)}(g'):=\frac{1}{Z_{\beta,B}^{(\Lambda)}(g')}
{\rm Tr}\;(\cdots) e^{-\beta H^{(\Lambda)}(B,0,g',0)}
$$
with the partition function $Z_{\beta,B}^{(\Lambda)}(g'):={\rm Tr}\;e^{-\beta H^{(\Lambda)}(B,0,g',0)}$. 
We write  
\begin{equation}
\label{omegaBgp}
\omega_{B,g'}^{(\Lambda)}(\cdots):=\frac{1}{q^{(\Lambda)}}\sum_{\mu=1}^{q^{(\Lambda)}}
\langle\Phi_{0,\mu}^{(\Lambda)}(B,g'),(\cdots)\Phi_{0,\mu}^{(\Lambda)}(B,g')\rangle
\end{equation}
for the expectation value in the ground-state sector, where $\Phi_{0,\mu}^{(\Lambda)}(B,g')$, $\mu=1,2,\ldots,q^{(\Lambda)}$, 
are the ground-state eigenvectors of the Hamiltonian $H^{(\Lambda)}(B,0,g',0)$, and $q^{(\Lambda)}$ is the degeneracy.  
Then,   
\begin{equation}
\lim_{\beta\rightarrow\infty}\langle\cdots\rangle_{\beta,B}^{(\Lambda)}(g')=\omega_{B,g'}^{(\Lambda)}(\cdots). 
\end{equation}
In particular, 
\begin{equation}
\omega_{B,g'}^{(\Lambda)}(\Gamma_x^{(3)})=0
\end{equation}
{from} (\ref{GammaExpcVal}). This yields 
\begin{equation}
\label{IRBomega}
\lim_{\beta\rightarrow\infty}
\frac{\beta}{2} (\Gamma^{(3)}[\tilde{h}'],\Gamma^{(3)}[\tilde{h}'])_{\beta,B}^{(\Lambda)}(g')
=\omega_{B,g'}^{(\Lambda)}\bigl(\Gamma^{(3)}[\tilde{h}']\frac{1-P_0^{(\Lambda)}}{\mathcal{H}^{(\Lambda)}(B,g')}
\Gamma^{(3)}[\tilde{h}']\bigr)
\le\frac{1}{4 g'}\sum_{x\in\Lambda}\sum_{m=1}^d |h_m'(x)|^2,
\end{equation}
where we have used the bound (\ref{IRBrepul}), and written 
\begin{equation}
\label{diffHE0}
\mathcal{H}^{(\Lambda)}(B,g'):=H^{(\Lambda)}(B,0,g',0)-E_0^{(\Lambda)}(B,g')
\end{equation} 
with the ground-state energy $E_0^{(\Lambda)}(B,g')$, and $P_0^{(\Lambda)}$ is the spectral projection onto 
the ground-state sector.   

Consider a local order parameter, 
\begin{equation}
\mathcal{A}_R:=\frac{1}{|\Omega_R|}\sum_{x\in\Omega_R}(-1)^{x^{(1)}+\cdots+x^{(d)}}\Gamma_x^{(1)},\quad \mbox{for 
a positive integer\ } R, 
\end{equation}
where 
\begin{equation}
\Omega_R:=\{x\in\ze^d\; |\; |x|_\infty\le R\}
\end{equation}
with the norm $|x|_\infty:=\max_{1\le i\le d}\{|x^{(i)}|\}$ for $x\in\ze^d$. 
By acting the local operator $\mathcal{A}_R$ 
on the ground state $\omega_{B,g'}^{(\Lambda)}(\cdots)$ of (\ref{omegaBgp}), we can construct an excited state above the ground state. 
Following \cite{Koma1}, we consider a trial state, 
\begin{equation}
\label{triallowenergy}
\varphi_{B,\epsilon,R,g'}^{(\Lambda)}(\cdots)
:=\frac{\omega_{B,g'}^{(\Lambda)}(\mathcal{A}_R^\dagger[\mathcal{H}^{(\Lambda)}(B,g')]^{\epsilon/2}(\cdots)
[\mathcal{H}^{(\Lambda)}(B,g')]^{\epsilon/2}
\mathcal{A}_R)}{\omega_{B,g'}^{(\Lambda)}(\mathcal{A}_R^\dagger[\mathcal{H}^{(\Lambda)}(B,g')]^\epsilon\mathcal{A}_R)},
\end{equation}
where $\mathcal{H}^{(\Lambda)}(B,g')$ is given by (\ref{diffHE0}). 
When an excitation energy appeared in this state is very close to the ground-state energy $E_0^{(\Lambda)}(B,g')$, 
the contribution becomes very small due to the factor $[\mathcal{H}^{(\Lambda)}(B,g')]^{\epsilon/2}$. 
Thus, we can eliminate the contributions of the undesired low-lying eigenstates which yield a set of 
ground states in the infinite-volume limit. 
The energy expectation value for the trial state is given by  
\begin{equation}
\label{lowenergy}
\varphi_{B,\epsilon,R,g'}^{(\Lambda)}(\mathcal{H}^{(\Lambda)}(B,g'))
:=\frac{\omega_{B,g'}^{(\Lambda)}(\mathcal{A}_R^\dagger[\mathcal{H}^{(\Lambda)}(B,g')]^{1+\epsilon}
\mathcal{A}_R)}{\omega_{B,g'}^{(\Lambda)}(\mathcal{A}_R^\dagger[\mathcal{H}^{(\Lambda)}(B,g')]^\epsilon\mathcal{A}_R)}.
\end{equation}

\subsection{Estimate of the denominator of the right-hand side of (\ref{lowenergy})}

In order to estimate the denominator of the right-hand side in (\ref{lowenergy}), 
we use the following Kennedy-Lieb-Shastry type inequality \cite{KLS,Koma1}:

\begin{lemma}
\label{lem:BogolyIneq}
Let $\mathcal{A}, \mathcal{C}$ be operators on $\Lambda$, and let $\epsilon$ be a positive small parameter. 
Then, the following bound is valid: 
\begin{eqnarray}
\label{BogolyIneq}
|\omega_{B,g'}^{(\Lambda)}([\mathcal{C},\mathcal{A}])|^2
&\le& \sqrt{D_{B,g'}^{(\Lambda)}(\mathcal{C})}
\sqrt{\varkappa(\epsilon)\;\omega_{B,g'}^{(\Lambda)}(\{\mathcal{C},\mathcal{C}^\ast\})
+\omega_{B,g'}^{(\Lambda)}([[\mathcal{C}^\ast,H^{(\Lambda)}(B,0,g',0)],\mathcal{C}])}\nonumber \\
&\times& \Bigl\{\omega_{B,g'}^{(\Lambda)}(\mathcal{A}[\mathcal{H}^{(\Lambda)}(B,g')]^\epsilon \mathcal{A}^\ast)
+\omega_{B,g'}^{(\Lambda)}(\mathcal{A}^\ast [\mathcal{H}^{(\Lambda)}(B,g')]^\epsilon \mathcal{A})\Bigr\},
\end{eqnarray}
where  
\begin{equation}
D_{B,g'}^{(\Lambda)}(\mathcal{C}):=
\omega_{B,g'}^{(\Lambda)}(\mathcal{C}P_{\rm ex}^{(\Lambda)}(B)[\mathcal{H}^{(\Lambda)}(B,g')]^{-1}\mathcal{C}^\ast) 
+\omega_{B,g'}^{(\Lambda)}(\mathcal{C}^\ast P_{\rm ex}^{(\Lambda)}(B)[\mathcal{H}^{(\Lambda)}(B,g')]^{-1}\mathcal{C})
\label{tildeD}
\end{equation}
with
\begin{equation} 
P_{\rm ex}^{(\Lambda)}:=1-P_0^{(\Lambda)},
\end{equation}
and $\varkappa(\epsilon)$ is a positive function of the parameter $\epsilon$ such that $\varkappa(\epsilon)\rightarrow 0$ as 
$\epsilon\rightarrow 0$. 
\end{lemma}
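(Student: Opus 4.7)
The plan is to derive the bound by a two-step Cauchy-Schwarz argument in the spectral decomposition $\mathcal{H}^{(\Lambda)}(B,g')=\sum_{n\ge 1}\lambda_n P_n$ (with $\lambda_n=E_n-E_0>0$), combined with the elementary Young-type inequality $\lambda^{1-2\epsilon}\le 2\epsilon+\lambda$ for $\lambda\ge 0$ and $\epsilon\in(0,1/2]$. The product structure $\sqrt{D}\sqrt{\cdots}\cdot\{\cdots\}$ on the right-hand side will arise by applying Cauchy-Schwarz twice: first to separate $\mathcal{C}$ from $\mathcal{A}$ and deliver the $\mathcal{H}^{\epsilon}$ factor on the $\mathcal{A}$-side, and then to split the $\mathcal{C}$-side into a Duhamel-type ``susceptibility'' carrying $\mathcal{H}^{-1}$ (producing $D$) and a ``stiffness'' carrying $\mathcal{H}^{1-2\epsilon}$ (which Young's inequality converts into the double commutator up to a $\varkappa(\epsilon)$-correction).

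First I would reduce the problem to the excited sector. Inserting $I=P_0^{(\Lambda)}+P_{\rm ex}^{(\Lambda)}$ into $\omega_{B,g'}^{(\Lambda)}([\mathcal{C},\mathcal{A}])$, the ground-sector contribution $\omega(\mathcal{C}P_0\mathcal{A})-\omega(\mathcal{A}P_0\mathcal{C})$ equals the commutator of two finite matrices traced over the ground-state sector, which vanishes by cyclicity; thus the full commutator equals its $P_{\rm ex}^{(\Lambda)}$-part, and the insertion of $P_{\rm ex}^{(\Lambda)}$ in $D_{B,g'}^{(\Lambda)}(\mathcal{C})$ is both automatic and necessary (since $\mathcal{H}^{-1}$ is otherwise ill-defined on the degenerate ground sector). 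I would then apply Cauchy-Schwarz on the $(\mu,n)$-double-index representation of $\omega(\mathcal{C}P_{\rm ex}\mathcal{A})-\omega(\mathcal{A}P_{\rm ex}\mathcal{C})$ with the weight split $\lambda_n^{-\epsilon/2}\cdot\lambda_n^{\epsilon/2}$ between the $\mathcal{C}$ and $\mathcal{A}$ matrix elements, followed by $|x+y|^2\le 2(|x|^2+|y|^2)$ or the sharper $|xy+uv|^2\le(|x|^2+|u|^2)(|y|^2+|v|^2)$, to obtain
\begin{equation*}
|\omega([\mathcal{C},\mathcal{A}])|^2 \le \bigl[\omega(\mathcal{C}P_{\rm ex}\mathcal{H}^{-\epsilon}\mathcal{C}^*)+\omega(\mathcal{C}^*P_{\rm ex}\mathcal{H}^{-\epsilon}\mathcal{C})\bigr]\cdot\bigl[\omega(\mathcal{A}^*\mathcal{H}^\epsilon\mathcal{A})+\omega(\mathcal{A}\mathcal{H}^\epsilon\mathcal{A}^*)\bigr],
\end{equation*}
the second bracket being exactly the $\mathcal{A}$-factor in the lemma.

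Next I would apply a second Cauchy-Schwarz to each term in the first bracket via the factorization $\lambda_n^{-\epsilon}=\lambda_n^{-1/2}\cdot\lambda_n^{1/2-\epsilon}$, giving $\sqrt{\omega(\mathcal{C}P_{\rm ex}\mathcal{H}^{-1}\mathcal{C}^*)}\,\sqrt{\omega(\mathcal{C}P_{\rm ex}\mathcal{H}^{1-2\epsilon}\mathcal{C}^*)}$ and its $\mathcal{C}\leftrightarrow\mathcal{C}^*$ companion. Summing these two symmetric contributions via $\sqrt{a_1b_1}+\sqrt{a_2b_2}\le\sqrt{a_1+a_2}\,\sqrt{b_1+b_2}$ produces $\sqrt{D_{B,g'}^{(\Lambda)}(\mathcal{C})}$ together with $\sqrt{\omega(\mathcal{C}P_{\rm ex}\mathcal{H}^{1-2\epsilon}\mathcal{C}^*)+\omega(\mathcal{C}^*P_{\rm ex}\mathcal{H}^{1-2\epsilon}\mathcal{C})}$. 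Applying the scalar inequality $\lambda^{1-2\epsilon}\le 2\epsilon+\lambda$ term by term and using the identity $\omega(\mathcal{C}\mathcal{H}\mathcal{C}^*)+\omega(\mathcal{C}^*\mathcal{H}\mathcal{C})=\omega([[\mathcal{C}^*,H^{(\Lambda)}(B,0,g',0)],\mathcal{C}])$ (which follows from $H|\mu\rangle=E_0|\mu\rangle$ on each ground-state basis vector) then converts the second square root into the announced $\sqrt{\varkappa(\epsilon)\,\omega(\{\mathcal{C},\mathcal{C}^*\})+\omega([[\mathcal{C}^*,H],\mathcal{C}])}$, with the explicit choice $\varkappa(\epsilon)=2\epsilon$.

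The main obstacle, I expect, is the \emph{compatibility} of the two interpolation exponents rather than any single step. The exponent $\epsilon$ on the $\mathcal{A}$-side is forced by the trial-state construction \eqref{triallowenergy}; once fixed, the first Schwarz yields $\mathcal{H}^{-\epsilon}$ on the $\mathcal{C}$-side, which is too weak a negative power to identify directly with the susceptibility $D$. The rescuing trick is the second Schwarz with split $(-1/2,\,1/2-\epsilon)$, which is the unique compromise that (i) produces the $\mathcal{H}^{-1}$ required for $D$, and (ii) leaves a residual power $\lambda^{1-2\epsilon}$ slightly below $\lambda$, so that Young's inequality gives a double commutator plus a correction of size $\varkappa(\epsilon)=2\epsilon\to 0$. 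Once this three-way compatibility between $\mathcal{H}^\epsilon$, $\mathcal{H}^{-1}$, and $\mathcal{H}^{1-2\epsilon}$ is identified, the remainder is routine bookkeeping.
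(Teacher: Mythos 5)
Your proposal is correct, and it reconstructs the argument that the paper itself does not spell out: the paper only cites \cite{Koma1} for the derivation of (\ref{BogolyIneq}). Your route --- vanishing of the ground-sector contribution by cyclicity, a first Schwarz step with the weight split $\lambda_n^{-\epsilon/2}\cdot\lambda_n^{\epsilon/2}$ to isolate the $\mathcal{A}$-factor, a second Schwarz step with the split $\lambda_n^{-1/2}\cdot\lambda_n^{1/2-\epsilon}$ to produce $D_{B,g'}^{(\Lambda)}(\mathcal{C})$, and the elementary bound $\lambda^{1-2\epsilon}\le 2\epsilon+\lambda$ yielding $\varkappa(\epsilon)=2\epsilon$ --- is exactly the Kennedy--Lieb--Shastry-type derivation of the cited reference, and all the individual steps check out (including the identity $\omega(\mathcal{C}\mathcal{H}\mathcal{C}^\ast)+\omega(\mathcal{C}^\ast\mathcal{H}\mathcal{C})=\omega([[\mathcal{C}^\ast,H^{(\Lambda)}(B,0,g',0)],\mathcal{C}])$ on the ground-state sector).
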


\noindent
For the derivation of (\ref{BogolyIneq}), see \cite{Koma1}. 

We choose 
\begin{equation}
\mathcal{C}=\Gamma^{(3)}[\tilde{h}']\quad \mbox{and} \quad 
\mathcal{A}=\mathcal{A}_R. 
\end{equation}
We also choose $h_1'(x)=h'(x)$ and $h_m'(x)=0$ for $m=2,3,\ldots,d$, where \cite{Martin} 
\begin{equation}
h'(x):=
\begin{cases}
1, & \mbox{for \ } x\in\Omega_{R+1};\\ 
1-[|x|_\infty-(R+1)]/R, & \mbox{for \ } x\in\Omega_{2R}\backslash\Omega_{R+1};\\
0, & \mbox{otherwise}. 
\end{cases}
\end{equation}
Then, one has 
$$
\Gamma^{(3)}[\tilde{h}']=\sum_{x\in\Omega_{2R}}\Gamma_x^{(3)}h'(x) +\sum_{x\in\Omega_{2R}}\Gamma_x^{(3)}h'(x-e_1), 
$$
\begin{eqnarray}
\label{magnetization}
[\Gamma^{(3)}[\tilde{h}'],\mathcal{A}_R]&=&\frac{2}{|\Omega_R|}\sum_{x\in\Omega_R}(-1)^{x^{(1)}+\cdots+x^{(d)}}
[\Gamma_x^{(3)},\Gamma_x^{(1)}]\ret
&=&\frac{4i}{|\Omega_R|}\sum_{x\in\Omega_R}(-1)^{x^{(1)}+\cdots+x^{(d)}}\Gamma_x^{(2)}
\end{eqnarray}
and 
\begin{equation}
\label{IRBGamma3}
\omega_{B,g'}^{(\Lambda)}\bigl(\Gamma^{(3)}[\tilde{h}']\frac{1-P_0^{(\Lambda)}}{\mathcal{H}^{(\Lambda)}(B,g')}
\Gamma^{(3)}[\tilde{h}']\bigr)
\le\frac{1}{4 g'}(4R+1)^d,
\end{equation} 
where we have used (\ref{hp}), (\ref{Gammatildehp}) and (\ref{IRBomega}). 

\begin{lemma}
The following bound for the double commutator is valid: 
\begin{equation}
\label{DcommuGammaHbound}
\left\Vert\Bigl[\bigl[\Gamma^{(3)}[\tilde{h}'],H^{(\Lambda)}(B,0,g',0)\bigr],\Gamma^{(3)}[\tilde{h}']\Bigr]\right\Vert 
\le \mathcal{K}_1 R^{d-2}+\mathcal{K}_2|B|R^d,
\end{equation}
where $\mathcal{K}_1$ and $\mathcal{K}_2$ are a positive constant. 
\end{lemma}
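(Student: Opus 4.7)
\medskip\noindent
\textbf{Proof proposal.} The plan is to split the Hamiltonian $H^{(\Lambda)}(B,0,g',0)=H_{\rm hop}^{(\Lambda)}+H_{\rm int}^{(\Lambda)}+H_{\rm repul}^{(\Lambda)}-BO^{(\Lambda)}$ and to evaluate the double commutator term by term, exploiting that $\Gamma_x^{(3)}=1-n_{x,\uparrow}-n_{x,\downarrow}$ is the generator of the local U(1) rotation. First I would note that $\Gamma^{(3)}[\tilde h']$ is a linear combination of number operators, so it commutes with every function of the $n_{x,\sigma}$; in particular
\begin{equation*}
\bigl[\Gamma^{(3)}[\tilde h'],H_{\rm repul}^{(\Lambda)}\bigr]=0,
\end{equation*}
and this piece contributes nothing. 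For the hopping and pairing terms, the elementary identities
\begin{equation*}
[\Gamma_x^{(3)},a_{y,\sigma}^\dagger a_{z,\sigma}]=(\delta_{x,z}-\delta_{x,y})a_{y,\sigma}^\dagger a_{z,\sigma},\qquad
[\Gamma_x^{(3)},\Gamma_y^{(\pm)}]=\mp 2\delta_{x,y}\Gamma_y^{(\pm)},
\end{equation*}
give, after summing against $\tilde h'$, discrete-gradient factors of the form $(\tilde h'(z)-\tilde h'(y))$ attached to each nearest-neighbour term. Taking the outer commutator produces a second such factor.

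Next I would put these ingredients together. For every nearest-neighbour bond $\{y,z\}$ in the support of $H_{\rm hop}^{(\Lambda)}+H_{\rm int}^{(\Lambda)}$, the double commutator contributes operators whose norms are uniformly bounded (by $2|\kappa|$ for a hopping bond and by $4g$ for a BCS bond, after using $\Vert a_{y,\sigma}^\dagger a_{z,\sigma}\Vert\le 1$ and $\Vert\Gamma_y^{(\pm)}\Vert\le 1$), multiplied by $(\tilde h'(z)-\tilde h'(y))^2$. By the choice of $h'$, the difference $\tilde h'(z)-\tilde h'(y)$ vanishes for bonds inside $\Omega_{R+1}$ and outside $\Omega_{2R+1}$, and is bounded by $\mathrm{const}/R$ for bonds lying in the annulus $\Omega_{2R+1}\setminus\Omega_{R+1}$. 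Since the number of such bonds is $\mathcal{O}(R^d)$, we get
\begin{equation*}
\Bigl\Vert\bigl[[\Gamma^{(3)}[\tilde h'],H_{\rm hop}^{(\Lambda)}+H_{\rm int}^{(\Lambda)}],\Gamma^{(3)}[\tilde h']\bigr]\Bigr\Vert
\le \mathrm{const}(|\kappa|+g)\cdot R^{-2}\cdot R^{d}= \mathcal{K}_1 R^{d-2}.
\end{equation*}
(For $L$ large enough the support of $\tilde h'$ stays away from the boundary of $\Lambda$, so the anti-periodic boundary hopping terms produce no contribution and can be dropped.)

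Finally, the symmetry-breaking piece is handled by the commutation relation $[\Gamma_x^{(3)},\Gamma_x^{(2)}]=-2i\Gamma_x^{(1)}$ together with $[\Gamma_x^{(1)},\Gamma_x^{(3)}]=-2i\Gamma_x^{(2)}$. A direct calculation gives
\begin{equation*}
\bigl[[\Gamma^{(3)}[\tilde h'],-BO^{(\Lambda)}],\Gamma^{(3)}[\tilde h']\bigr]
=4B\sum_{x\in\Lambda}(-1)^{x^{(1)}+\cdots+x^{(d)}}\tilde h'(x)^{2}\,\Gamma_x^{(2)},
\end{equation*}
whose norm is bounded by $4|B|\cdot\Vert\tilde h'\Vert_\infty^{2}\cdot|\mathrm{supp}\,\tilde h'|\cdot\Vert\Gamma_x^{(2)}\Vert\le \mathcal{K}_2|B|R^{d}$, since $\Vert\tilde h'\Vert_\infty\le 2d$, $|\mathrm{supp}\,\tilde h'|=\mathcal{O}(R^d)$ and $\Vert\Gamma_x^{(2)}\Vert\le 2$. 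Adding the three contributions yields the claimed bound.

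The calculations are essentially bookkeeping once the commutators of $\Gamma_x^{(3)}$ with the fermion bilinears and with $\Gamma_x^{(\pm)}$ are in hand. The one point that requires a little care, and which I regard as the main obstacle, is to verify that the hopping piece really produces a \emph{difference} of $\tilde h'$ (and hence the improved $R^{d-2}$ bound rather than a bulk $R^d$ bound) despite the inhomogeneous $\pi$-flux signs and the anti-periodic boundary condition built into $H_{\rm hop,i}^{(\Lambda)}$; this is ensured because each bond term $i\kappa(a_{x,\sigma}^\dagger a_{y,\sigma}-\mathrm{h.c.})$ conserves particle number locally, so the signs and the anti-periodic identifications factor out of the commutator with $\Gamma^{(3)}[\tilde h']$ and only the coefficients $\tilde h'(x)-\tilde h'(y)$ survive.
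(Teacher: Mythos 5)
Your proposal is correct and follows essentially the same route as the paper: decompose the Hamiltonian term by term, use the local charge conservation of each nearest-neighbour bond operator to extract a discrete gradient of $\tilde h'$ per commutator (giving the $1/R^2$ factor on the $O(R^d)$ bonds where $\tilde h'$ is non-constant), and bound the symmetry-breaking field contribution by $|B|R^d$. The only difference is cosmetic — you phrase the gradient extraction via the eigenoperator identities $[\Gamma_x^{(3)},a_{y,\sigma}^\dagger a_{z,\sigma}]=(\delta_{x,z}-\delta_{x,y})a_{y,\sigma}^\dagger a_{z,\sigma}$, whereas the paper rearranges $h'(x)n_{x,\uparrow}+h'(y)n_{y,\uparrow}$ against $[a_{x,\uparrow}^\dagger a_{y,\uparrow},n_{x,\uparrow}+n_{y,\uparrow}]=0$; the two are equivalent.
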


\begin{proof}{Proof}
Note that  
\begin{equation}
[a_{x,\sigma}^\dagger a_{y,\sigma},n_{x,\sigma}+n_{y,\sigma}]=0
\end{equation}
and
\begin{equation}
h'(x)n_{x,\uparrow}+h'(y)n_{y,\uparrow}=h'(x)(n_{x,\uparrow}+n_{y,\uparrow})
+(h'(y)-h'(x))n_{y,\uparrow}.
\end{equation}
By using these relations, one has 
\begin{eqnarray}
& &[[h'(x)n_{x,\uparrow}+h'(y)n_{y,\uparrow},a_{x,\uparrow}^\dagger a_{y,\uparrow}],h'(x)n_{x,\uparrow}+h'(y)n_{y,\uparrow}]\ret
&=&[[(h'(y)-h'(x))n_{y,\uparrow},a_{x,\uparrow}^\dagger a_{y,\uparrow}],h'(x)n_{x,\uparrow}+h'(y)n_{y,\uparrow}]\ret
&=&[[(h'(y)-h'(x))n_{y,\uparrow},a_{x,\uparrow}^\dagger a_{y,\uparrow}],(h'(y)-h'(x))n_{y,\uparrow}]\ret
&=&|h'(y)-h'(x)|^2[[n_{y,\uparrow},a_{x,\uparrow}^\dagger a_{y,\uparrow}],n_{y,\uparrow}].
\end{eqnarray}
When $x,y$ satisfy $|x-y|=1$, one has $|h'(y)-h'(x)|\le 1/R$ by definition. Therefore, we have 
\begin{equation}
\left\Vert[[h'(x)n_{x,\uparrow}+h'(y)n_{y,\uparrow},a_{x,\uparrow}^\dagger a_{y,\uparrow}],
h'(x)n_{x,\uparrow}+h'(y)n_{y,\uparrow}]\right\Vert\le \frac{4}{R^2}.
\end{equation}

Note that 
\begin{equation}
\Bigl[\bigl[\Gamma^{(3)}[\tilde{h}'],H_{{\rm hop},\uparrow}^{(\Lambda)}\bigr],\Gamma^{(3)}[\tilde{h}']\Bigl]
=\sum_{x,y\in\Omega_{2R}}\Bigl[\bigl[(h'(x)+h'(x-e_1))n_{x,\uparrow},H_{{\rm hop},\uparrow}^{(\Lambda)}\bigr],
(h'(y)+h'(y-e_1))n_{y,\uparrow}\Bigl].
\end{equation}
Since the Hamiltonian $H_{{\rm hop},\uparrow}^{(\Lambda)}$ is written as the sum of $a_{x,\uparrow}^\dagger a_{y,\uparrow}$, 
it is sufficient to estimate 
\begin{equation}
\Bigl[\bigl[\tilde{h}'(x)n_{x,\uparrow}+\tilde{h}'(y)n_{y,\uparrow},a_{x,\uparrow}^\dagger a_{y,\uparrow}\bigr],
\tilde{h}'(x)n_{x,\uparrow}+\tilde{h}'(y)n_{x,\uparrow}\Bigl],
\end{equation}
where $\tilde{h}'(x)=h'(x)+h'(x-e_1)$. In the same way as the above argument, this can be estimated as 
\begin{equation}
\left\Vert\Bigl[\bigl[\tilde{h}'(x)n_{x,\uparrow}+\tilde{h}'(y)n_{y,\uparrow},a_{x,\uparrow}^\dagger a_{y,\uparrow}\bigr],
\tilde{h}'(x)n_{x,\uparrow}+\tilde{h}'(y)n_{x,\uparrow}\Bigl]\right\Vert
\le \frac{16}{R^2}.
\end{equation}
{From} this estimate, one has   
\begin{equation}
\label{DcommuGammaHhop2}
\left\Vert\Bigl[\bigl[\Gamma^{(3)}[\tilde{h}'],H_{{\rm hop},\sigma}^{(\Lambda)}\bigr],\Gamma^{(3)}[\tilde{h}']\Bigl]\right\Vert
\le \frac{32d|\kappa|(4R+1)^d}{R^2}
\end{equation}
for $\sigma=\uparrow,\downarrow$. 

As for the interaction Hamiltonian $H_{\rm int}^{(\Lambda)}$, we use 
$$
a_{x,\uparrow}^\dagger a_{x,\downarrow}^\dagger a_{y,\downarrow}a_{y,\uparrow}
=a_{x,\uparrow}^\dagger a_{y,\uparrow}\cdot a_{x,\downarrow}^\dagger a_{y,\downarrow}.
$$
Note that 
\begin{eqnarray}
& &\Bigl[\bigl[\Gamma^{(3)}[\tilde{h}'],
a_{x,\uparrow}^\dagger a_{y,\uparrow}\cdot a_{x,\downarrow}^\dagger a_{y,\downarrow}\bigr],\Gamma^{(3)}[\tilde{h}']\Bigl]\ret
&=&\Bigl[\bigl[\Gamma^{(3)}[\tilde{h}'],
a_{x,\uparrow}^\dagger a_{y,\uparrow}\bigr]a_{x,\downarrow}^\dagger a_{y,\downarrow}
+a_{x,\uparrow}^\dagger a_{y,\uparrow}\bigl[\Gamma^{(3)}[\tilde{h}'],a_{x,\downarrow}^\dagger a_{y,\downarrow}\bigr]
,\Gamma^{(3)}[\tilde{h}']\Bigl].
\end{eqnarray}
Therefore, we obtain 
\begin{equation}
\left\Vert\Bigl[\bigl[\Gamma^{(3)}[\tilde{h}'],H_{\rm int}^{(\Lambda)}\bigr],\Gamma^{(3)}[\tilde{h}']\Bigl]\right\Vert
\le \frac{128dg(4R+1)^d}{R^2}. 
\end{equation}
We also have 
\begin{equation}
\left\Vert\Bigl[\bigl[\Gamma^{(3)}[\tilde{h}'],B O^{(\Lambda)}\bigr],\Gamma^{(3)}[\tilde{h}']\Bigl]\right\Vert
\le 8|B|(4R+1)^d. 
\end{equation}
By combining these and (\ref{DcommuGammaHhop2}), we obtain the desired bound (\ref{DcommuGammaHbound}). 
\end{proof}

Substituting (\ref{magnetization}), (\ref{IRBGamma3}) and (\ref{DcommuGammaHbound}) 
into the bound (\ref{BogolyIneq}) with $\mathcal{A}=\mathcal{A}_R$ and $\mathcal{C}=\Gamma^{(3)}[\tilde{h}']$, 
we obtain 
\begin{equation}
\label{denomibound}
\frac{8[m_{\rm s}^{(\Lambda)}(B)]^2}{\sqrt{2(4R+1)^d/g'} 
\sqrt{\mathcal{K}_1R^{d-2}+\mathcal{K}_2|B|R^d+8(4R+1)^{2d}\varkappa(\epsilon)}}\le  
\omega_{B,g'}^{(\Lambda)}(\mathcal{A}_R^\dagger [\mathcal{H}^{(\Lambda)}(B,g')]^\epsilon \mathcal{A}_R),
\end{equation}
where we have written 
\begin{equation}
m_{\rm s}^{(\Lambda)}(B):=\frac{1}{|\Omega_R|}\sum_{x\in\Omega_R}\omega_{B,g'}^{(\Lambda)}(\Gamma_x^{(2)})
(-1)^{x^{(1)}+\cdots+x^{(d)}}. 
\end{equation}
In the double limit $B\searrow 0$ and $\Lambda\nearrow\ze^d$, the spontaneous magnetization $m_{\rm s}$ 
for the superconductivity is given by  
\begin{equation}
m_{\rm s}:=\lim_{B\searrow 0}\lim_{\Lambda\nearrow\ze^d}m_{\rm s}^{(\Lambda)}.
\end{equation}
Note that the translational invariance (\ref{transinvGamma}) of the thermal expectation value of $\Gamma_x^{(2)}$ 
also holds in the present situation $g'\ne 0$. Besides, the existence of the long-range order implies 
a non-vanishing spontaneous magnetization in the infinite-volume limit \cite{KomaTasaki1}. 
Therefore, the spontaneous magnetization $m_{\rm s}$ is strictly positive for a large $R$. 

When we choose $\epsilon=0$, we obtain 
\begin{equation}
\frac{8[m_{\rm s}]^2}{\sqrt{2(4R+1)^d/g'} \sqrt{\mathcal{K}_1R^{d-2}}}\le 
\omega_{0,g'}(\mathcal{A}_R^\dagger \mathcal{A}_R)
\end{equation}
from the above inequality (\ref{denomibound}), where we have written 
\begin{equation}
\omega_{0,g'}(\cdots):={\rm weak}^\ast\mbox{-}\lim_{B\searrow 0}{\rm weak}^\ast\mbox{-}
\lim_{\Lambda\nearrow\ze^d}\omega_{B,g'}^{(\Lambda)}(\cdots). 
\end{equation}
This bound rules out the possibility of the rapid decay $o(|x-y|^{-(d-1)})$ for the correlation 
$\omega_{0,g'}(\Gamma_x^{(1)}\Gamma_y^{(1)})$, where $o(\varepsilon)$ denotes a quantity $q(\varepsilon)$ 
such that $q(\varepsilon)/\varepsilon$ is vanishing in the limit $\varepsilon\searrow 0$.   

In passing, we can obtain a similar bound for the decay of the correlation at non-zero temperatures 
by using Bogoliubov inequality \cite{Bogoliubov,DLS}, 
\begin{equation}
\bigl|\langle [\mathcal{C},\mathcal{A}]\rangle_{\beta,B}^{(\Lambda)}(g')\bigr|^2
\le \frac{\beta}{2}\langle [\mathcal{C}^\dagger,[H^{(\Lambda)}(B,0,g',0),\mathcal{C}]]\rangle_{\beta,B}^{(\Lambda)}(g')
\langle \{\mathcal{A},\mathcal{A}^\dagger\}\rangle_{\beta,B}^{(\Lambda)}(g'), 
\end{equation}
for operators, $\mathcal{C}$ and $\mathcal{A}$. Similarly, from (\ref{magnetization}) and (\ref{DcommuGammaHbound}), we have 
\begin{equation}
\frac{16[m_{\rm s}^{(\Lambda)}(\beta,B,g')]^2}{\mathcal{K}_1R^{d-2}+\mathcal{K}_2|B|R^d}
\le \beta \langle \mathcal{A}_R^\dagger \mathcal{A}_R\rangle_{\beta,B}^{(\Lambda)}(g'),
\end{equation}
where we have written 
$$
m_{\rm s}^{(\Lambda)}(\beta,B,g'):=\frac{1}{|\Omega_R|}\sum_{x\in\Omega_R}
(-1)^{x^{(1)}+\cdots+x^{(d)}}\langle \Gamma_x^{(2)}\rangle_{\beta,B}^{(\Lambda)}(g').
$$
We write 
$$
\langle \cdots \rangle_{\beta}(g'):={\rm weak}^\ast\mbox{-}\lim_{B\searrow 0}
{\rm weak}^\ast\mbox{-}\lim_{\Lambda\nearrow\ze^d}\langle\cdots\rangle_{\beta,B}^{(\Lambda)}(g')
$$
and 
$$
m_{\rm s}(\beta,g'):=\lim_{B\searrow 0}\lim_{\Lambda\nearrow\ze^d}m_{\rm s}^{(\Lambda)}(\beta,B,g')
$$
with the use of the same sequences in the double limit as those in the thermal average. 
Then, we have 
\begin{equation}
\frac{16[m_{\rm s}(\beta,g')]^2}{\mathcal{K}_1R^{d-2}}
\le \beta \langle \mathcal{A}_R^\dagger \mathcal{A}_R\rangle_{\beta}(g').
\end{equation}
This rules out the possibility of the rapid decay $o(|x-y|^{-(d-2)})$ for the transverse correlation 
$\langle \Gamma_x^{(1)}\Gamma_y^{(1)}\rangle_{\beta}(g')$ when the spontaneous magnetization $m_{\rm s}(\beta,g')$ 
of the superconductivity is non-vanishing.

\subsection{Estimate of the numerator of the right-hand side of (\ref{lowenergy})}

Next let us estimate the numerator of (\ref{lowenergy}).
For the Hamiltonian $\mathcal{H}^{(\Lambda)}(B,g')$, we denote by $P(E',+\infty)$ the spectral projection 
onto the energies which are larger than $E'>0$. We also write $P[0,E'):=1-P(E',+\infty)$. 
Note that
\begin{eqnarray*}
& &\omega_B^{(\Lambda)}(\mathcal{A}_R[\mathcal{H}^{(\Lambda)}(B,g')]^{1+\epsilon}\mathcal{A}_R)\\
&=&\omega_B^{(\Lambda)}(\mathcal{A}_RP[0,E')[\mathcal{H}^{(\Lambda)}(B,g')]^{1+\epsilon}\mathcal{A}_R)
+\omega_B^{(\Lambda)}(\mathcal{A}_RP(E',+\infty)[\mathcal{H}^{(\Lambda)}(B,g')]^{1+\epsilon}\mathcal{A}_R)\\
&\le&\omega_B^{(\Lambda)}(\mathcal{A}_RP[0,E')\mathcal{H}^{(\Lambda)}(B,g')\mathcal{A}_R)
\times(E')^{\epsilon}\\
&+&\omega_B^{(\Lambda)}(\mathcal{A}_RP(E',+\infty)[\mathcal{H}^{(\Lambda)}(B,g')]^3\mathcal{A}_R)
\times (E')^{\epsilon-2}\\
&\le&\omega_B^{(\Lambda)}(\mathcal{A}_R\mathcal{H}^{(\Lambda)}(B,g')\mathcal{A}_R)
\times(E')^{\epsilon}
+\omega_B^{(\Lambda)}(\mathcal{A}_R[\mathcal{H}^{(\Lambda)}(B,g')]^3\mathcal{A}_R)\times (E')^{\epsilon-2}.
\end{eqnarray*}
The first term in the right-hand side in the last line can be estimated as 
\begin{equation}
\label{omegaBAHAexpvalue}
\omega_B^{(\Lambda)}(\mathcal{A}_R\mathcal{H}^{(\Lambda)}(B,g')\mathcal{A}_R)
=\frac{1}{2}\omega_B^{(\Lambda)}([\mathcal{A}_R,[H^{(\Lambda)}(B,0,g',0),\mathcal{A}_R]])
\le\frac{{\cal K}_3}{R^d}. 
\end{equation}
with the positive constant ${\cal K}_3$. Similarly, the second term is evaluated as 
\begin{eqnarray}
\label{omegaARPEPH3AR}
& &\omega_B^{(\Lambda)}(\mathcal{A}_R[\mathcal{H}^{(\Lambda)}(B,g')]^3\mathcal{A}_R)\ret
&\le&
\omega_B^{(\Lambda)}([\mathcal{A}_R,H^{(\Lambda)}(B,0,g',0)]\mathcal{H}^{(\Lambda)}(B,g')
[H^{(\Lambda)}(B,0,g',0),\mathcal{A}_R])\nonumber\\
&=&\omega_B^{(\Lambda)}(\mathcal{B}_R\mathcal{H}^{(\Lambda)}(B,g')\mathcal{B}_R)\nonumber\\
&=&\frac{1}{2}\omega_B^{(\Lambda)}([\mathcal{B}_R,[H^{(\Lambda)}(B,0,g',0),\mathcal{B}_R]])
\le\frac{{\cal K}_4}{R^d}
\end{eqnarray}
with the positive constant ${\cal K}_4$, 
where we have written $\mathcal{B}_R:=i[H^{(\Lambda)}(B,0,g',0),\mathcal{A}_R]$, and used the assumption that 
the interactions of the Hamiltonian $H^{(\Lambda)}(B,0,g',0)$ are of finite range.  
{From} these observations, we obtain 
\begin{equation}
\label{finalnumebound}
\omega_B^{(\Lambda)}(\mathcal{A}_R[\mathcal{H}^{(\Lambda)}(B,g')]^{1+\epsilon}\mathcal{A}_R)
\le \frac{1}{R^d}\left[\mathcal{K}_3(E')^\epsilon+\mathcal{K}_4(E')^{\epsilon-2}\right]
\le \frac{{\cal K}_3+{\cal K}_4}{R^d},
\end{equation}
where we have chosen $E'=1$. 

\subsection{A gapless excitation above the ground state}

By using (\ref{denomibound}) and  (\ref{finalnumebound}) for estimating the right-hand side of (\ref{lowenergy}), we obtain 
\begin{equation}
\varphi_{B,\epsilon,R,g'}^{(\Lambda)}(\mathcal{H}^{(\Lambda)}(B,g'))
 \le \frac{\sqrt{2(4R+1)^d}\sqrt{\mathcal{K}_1R^{d-2}+\mathcal{K}_2|B|R^d+8(4R+1)^{2d}\varkappa(\epsilon)}}{8\sqrt{g'}
[m_{\rm s}^{(\Lambda)}(B)]^2}\times \frac{{\cal K}_3+{\cal K}_4}{R^d}. 
\end{equation}
We choose the parameter $\epsilon$ so that it satisfies 
$$
8(4R+1)^{2d}\varkappa(\epsilon)\le R^{d-2}
$$
for a given $R$. Then, the above bound can be written as 
\begin{equation}
\varphi_{B,\epsilon,R,g'}^{(\Lambda)}(\mathcal{H}^{(\Lambda)}(B,g'))
 \le \frac{\sqrt{2(4+1/R)^d}\sqrt{1+\mathcal{K}_1+\mathcal{K}_2|B|R^2}}{8\sqrt{g'}
[m_{\rm s}^{(\Lambda)}(B)]^2}\times \frac{{\cal K}_3+{\cal K}_4}{R}. 
\end{equation}
In the double limit $B\searrow 0$ and $\Lambda\nearrow\ze^d$, one has 
\begin{equation}
\lim_{B\searrow 0}\lim_{\Lambda\nearrow\ze^d} \varphi_{B,\epsilon,R,g'}^{(\Lambda)}(\mathcal{H}^{(\Lambda)}(B,g'))
 \le \frac{\sqrt{2(4+1/R)^d}\sqrt{1+\mathcal{K}_1}}{8\sqrt{g'}
[m_{\rm s}]^2}\times \frac{{\cal K}_3+{\cal K}_4}{R}. 
\end{equation}
Since we can take $R$ to be any large positive integer, this inequality implies 
that there exists a gapless local excitation above the infinite-volume ground state $\omega_{0,g'}(\cdots)$.  

Next, we construct a quasi-local operator which creates a low energy excitation 
above the infinite-volume ground state $\omega_{0,g'}(\cdots)$.  
In the same way as in Sec.~5.3 in \cite{Koma1}, we can find a non-negative real-valued 
function $\hat{\chi}\in C_0^\infty(\re)$ such that 
the function $\hat{\chi}$ satisfies ${\rm supp}\;\hat{\chi}\subseteq (0,\Delta E_1)$ with a constant $\Delta E_1>0$ 
and further satisfies the following two bounds: 
\begin{equation}
\label{lowerboundchinorm}
\frac{|m_{\rm s}^{(\Lambda)}(B)|^2}{\mathcal{K}_0R^{d-1}[1+\mathcal{K}_1+\mathcal{K}_2|B|R^2]^{1/2}}
\le 
\omega_{B,g'}^{(\Lambda)}(\mathcal{A}_R^\dagger [\hat{\chi}(\mathcal{H}^{(\Lambda)}(B,g'))]^2\mathcal{A}_R)
\end{equation}
and 
\begin{equation}
\label{upperboundchiH}
\omega_{B,g'}^{(\Lambda)}(\mathcal{A}_R^\dagger \hat{\chi}(\mathcal{H}^{(\Lambda)}(B,g'))
\mathcal{H}^{(\Lambda)}(B,g')\hat{\chi}(\mathcal{H}^{(\Lambda)}(B,g'))\mathcal{A}_R)\le 
\frac{1}{R^d}(\mathcal{K}_3+\mathcal{K}_4). 
\end{equation}
Here, $\mathcal{K}_0$ is a positive constant, and these two bounds are derived from (\ref{denomibound}) 
and (\ref{finalnumebound}), respectively. 
For a local operator $\mathcal{A}$, we define 
\begin{equation}
\tau_{t,B}^{(\Lambda)}(\mathcal{A}):=\exp[iH^{(\Lambda)}(B,0,g',0)t]\mathcal{A}\exp[-iH^{(\Lambda)}(B,0,g',0)t]
\end{equation}
and 
\begin{equation}
\tau_{*\chi,B}^{(\Lambda)}(\mathcal{A}):=\int_{-\infty}^{+\infty}dt\; \chi(t) \tau_{t,B}^{(\Lambda)}(\mathcal{A}),
\end{equation}
where the function $\chi$ is the Fourier transform of $\hat{\chi}$. Then, the following two limits exist: 
\begin{equation}
\tau_{t,0}(\mathcal{A}):=\lim_{B\searrow 0}\lim_{\Lambda\nearrow\ze^d}\tau_{t,B}^{(\Lambda)}(\mathcal{A})
\end{equation}
and 
\begin{equation}
\tau_{*\chi,0}(\mathcal{A}):=\lim_{B\searrow 0}\lim_{\Lambda\nearrow\ze^d}\tau_{*\chi,B}^{(\Lambda)}(\mathcal{A}).
\end{equation}
For the ground-state vector $\Phi_{0,\mu}^{(\Lambda)}(B,g')$ of the Hamiltonian $H^{(\Lambda)}(B,0,g',0)$, 
one has 
\begin{equation}
\tau_{*\chi,B}^{(\Lambda)}(\mathcal{A})\Phi_{0,\mu}^{(\Lambda)}(B,g')
=\hat{\chi}(\mathcal{H}^{(\Lambda)}(B,g'))\mathcal{A}\Phi_{0,\mu}^{(\Lambda)}(B,g').  
\end{equation}
This implies 
\begin{equation}
\label{omegatauchinorm}
\omega_{B,g'}^{(\Lambda)}(\tau_{*\chi,B}^{(\Lambda)}(\mathcal{A}_R^\dagger)\tau_{*\chi,B}^{(\Lambda)}(\mathcal{A}_R))
=\omega_{B,g'}^{(\Lambda)}(\mathcal{A}_R^\dagger [\hat{\chi}(\mathcal{H}^{(\Lambda)}(B,g')]^2\mathcal{A}_R)
\end{equation}
and 
\begin{eqnarray}
\label{omegatauchiH}
& &\omega_{B,g'}^{(\Lambda)}(\tau_{*\chi,B}^{(\Lambda)}(\mathcal{A}_R^\dagger)\mathcal{H}^{(\Lambda)}(B,g')
\tau_{*\chi,B}^{(\Lambda)}(\mathcal{A}_R))\ret
&=&\omega_{B,g'}^{(\Lambda)}(\mathcal{A}_R^\dagger \hat{\chi}(\mathcal{H}^{(\Lambda)}(B,g'))\mathcal{H}^{(\Lambda)}(B,g')
\hat{\chi}(\mathcal{H}^{(\Lambda)}(B,g'))\mathcal{A}_R).
\end{eqnarray}
Since the function $\hat{\chi}$ satisfies ${\rm supp}\;\hat{\chi}\subseteq (0,\Delta E_1)$, 
the factor $\hat{\chi}(\mathcal{H}^{(\Lambda)}(B,g'))$ yields a projection onto the excited states above the sector of 
the ground states.

The following relation of the excitation energies 
between the infinite-volume and the finite-volume ground states is valid: \cite{AL,Koma,Koma1,Koma3} 
\begin{eqnarray}
& &\lim_{\Lambda'\nearrow\ze^d} \frac{\omega_{0,g'}\bigl([\tau_{*\chi,0}(\mathcal{A}_R)]^\dagger 
[H^{(\Lambda')}(0,0,g',0),\tau_{*\chi,0}(\mathcal{A}_R)]\bigr)}{\omega_{0,g'}\bigl([\tau_{*\chi,0}(\mathcal{A}_R)]^\dagger 
\tau_{*\chi,0}(\mathcal{A}_R)\bigr)}\ret
&=&\lim_{B\searrow 0}\lim_{\Lambda\nearrow \ze^d} 
\frac{\omega_{B,g'}^{(\Lambda)}\bigl([\tau_{*\chi,B}^{(\Lambda)}(\mathcal{A}_R]^\dagger 
[H^{(\Lambda)}(B,0,g',0),\tau_{*\chi,B}^{(\Lambda)}(\mathcal{A}_R)]\bigr)}{\omega_{B,g'}^{(\Lambda)}
\bigl([\tau_{*\chi,B}^{(\Lambda)}(\mathcal{A}_R]^\dagger \tau_{*\chi,B}^{(\Lambda)}(\mathcal{A}_R)\bigr)}.    
\end{eqnarray}
Here, the operator $\tau_{*\chi,0}(\mathcal{A}_R)$ is quasi-local because the function $\chi(t)$ rapidly decays for 
large $|t|$ by definition. 
Combining this, (\ref{lowerboundchinorm}), (\ref{upperboundchiH}), (\ref{omegatauchinorm}) and (\ref{omegatauchiH}), we obtain 
the desired result, 
\begin{eqnarray}
& &\lim_{\Lambda'\nearrow\ze^d} \frac{\omega_{0,g'}\bigl([\tau_{*\chi,0}(\mathcal{A}_R)]^\dagger 
[H^{(\Lambda')}(0,0,g',0),\tau_{*\chi,0}(\mathcal{A}_R)]\bigr)}{\omega_{0,g'}\bigl([\tau_{*\chi,0}(\mathcal{A}_R)]^\dagger 
\tau_{*\chi,0}(\mathcal{A}_R)\bigr)}\ret
&=&\lim_{B\searrow 0}\lim_{\Lambda\nearrow \ze^d} 
\frac{\omega_{B,g'}^{(\Lambda)}\bigl([\tau_{*\chi,B}^{(\Lambda)}(\mathcal{A}_R]^\dagger 
[H^{(\Lambda)}(B,0,g',0),\tau_{*\chi,B}^{(\Lambda)}(\mathcal{A}_R)]\bigr)}{\omega_{B,g'}^{(\Lambda)}
\bigl([\tau_{*\chi,B}^{(\Lambda)}(\mathcal{A}_R]^\dagger \tau_{*\chi,B}^{(\Lambda)}(\mathcal{A}_R)\bigr)}
\le \frac{{\rm Const.}}{m_{\rm s}^2 R}.     
\end{eqnarray}

\appendix

\Section{U(1) rotation of the order parameters}
\label{appendix:U(1)rotation}

Consider the transformation, 
$$
\Gamma_x^{(1)}\rightarrow e^{-i\theta \Gamma_x^{(3)}/2}\Gamma_x^{(1)}e^{i\theta\Gamma_x^{(3)}/2},
$$
where $\theta$ is a real variable. Note that 
\begin{eqnarray*}
\frac{d}{d\theta}e^{-i\theta \Gamma_x^{(3)}/2}\Gamma_x^{(1)}e^{i\theta\Gamma_x^{(3)}/2}&=&
-\frac{i}{2}e^{-i\theta \Gamma_x^{(3)}/2}[\Gamma_x^{(3)},\Gamma_x^{(1)}]e^{i\theta\Gamma_x^{(3)}/2}\ret
&=&e^{-i\theta \Gamma_x^{(3)}/2}\Gamma_x^{(2)}e^{i\theta\Gamma_x^{(3)}/2},
\end{eqnarray*}
where we have used the commutation relation $[\Gamma_x^{(3)},\Gamma_x^{(1)}]=2i\Gamma_x^{(2)}$. 
Further, 
\begin{eqnarray*}
\frac{d^2}{d\theta^2}e^{-i\theta \Gamma_x^{(3)}/2}\Gamma_x^{(1)}e^{i\theta\Gamma_x^{(3)}/2}&=&
-\frac{i}{2}e^{-i\theta \Gamma_x^{(3)}/2}[\Gamma_x^{(3)},\Gamma_x^{(2)}]e^{i\theta\Gamma_x^{(3)}/2}\ret
&=&-e^{-i\theta \Gamma_x^{(3)}/2}\Gamma_x^{(1)}e^{i\theta\Gamma_x^{(3)}/2},
\end{eqnarray*}
where we have used $[\Gamma_x^{(2)},\Gamma_x^{(3)}]=2i\Gamma_x^{(1)}$. The solution of this differential equation 
is given by 
$$
e^{-i\theta \Gamma_x^{(3)}/2}\Gamma_x^{(1)}e^{i\theta\Gamma_x^{(3)}/2}
=\Gamma_x^{(1)}\cos\theta+\Gamma_x^{(2)}\sin\theta.
$$
For $\theta=\pm \pi/2$, one has 
\begin{equation}
\label{U1rotation}
e^{-i\pi \Gamma_x^{(3)}/4}\Gamma_x^{(1)}e^{i\pi\Gamma_x^{(3)}/4}=\Gamma_x^{(2)}
\quad \mbox{and}\quad 
e^{-i\pi \Gamma_x^{(3)}/4}\Gamma_x^{(2)}e^{i\pi\Gamma_x^{(3)}/4}=-\Gamma_x^{(1)}.
\end{equation}
The global rotation is given by 
\begin{equation}
\label{GU1rotation}
U_{\rm rot}^{(\Lambda)}(\theta):=\prod_{x\in\Lambda} e^{i\theta\Gamma_x^{(3)}/2}. 
\end{equation}

\Section{Proof of the inequality (\ref{E1bound}) for $\mathcal{E}_1$}    
\label{Appendix:meanenergy}

The thermal expectation value of the interaction Hamiltonian $H_{\rm int}^{(\Lambda)}$ can be written as 
\begin{eqnarray}
\langle H_{\rm int}^{(\Lambda)}\rangle_{\beta,0}^{(\Lambda)}&=&
\frac{g}{2}\sum_{|x-y|=1}\langle\Gamma_x^{(1)}\Gamma_y^{(1)}+\Gamma_x^{(2)}\Gamma_y^{(2)}\rangle_{\beta,0}^{(\Lambda)}\ret
&=&g\sum_{|x-y|=1}\langle\Gamma_x^{(1)}\Gamma_y^{(1)}\rangle_{\beta,0}^{(\Lambda)}\ret
&=&\frac{g}{2}\sum_{x\in\Lambda}\sum_{m=1}^d [\langle \Gamma_x^{(1)}\Gamma_{x+e_m}^{(1)}\rangle_{\beta,0}^{(\Lambda)}+
\langle\Gamma_x^{(1)}\Gamma_{x-e_m}^{(1)}\rangle_{\beta,0}^{(\Lambda)}]\ret
&=&\frac{dg}{2}\sum_{x\in\Lambda} [\langle \Gamma_x^{(1)}\Gamma_{x+e_1}^{(1)}\rangle_{\beta,0}^{(\Lambda)}+
\langle \Gamma_x^{(1)}\Gamma_{x-e_1}^{(1)}\rangle_{\beta,0}^{(\Lambda)}],
\end{eqnarray}
where we have used (\ref{U1rotation}), (\ref{transinvGammaGamma}) and (\ref{direcIndep}).  
As for the hopping term, one has 
\begin{equation}
-d|\kappa||\Lambda|\le \langle H_{\rm hop}^{(\Lambda)}\rangle_{\beta,0}^{(\Lambda)}. 
\end{equation}
Combining these observations with the expression (\ref{E1}) of $\mathcal{E}_1^{(\Lambda)}$, we have 
\begin{equation}
\label{E1E0bound}
-d|\kappa|\le\mathcal{E}_{\beta,0}^{(\Lambda)}+dg\mathcal{E}_1^{(\Lambda)},
\end{equation}
where we have written 
\begin{equation}
\mathcal{E}_{\beta,0}^{(\Lambda)}:=\frac{1}{|\Lambda|}\langle H^{(\Lambda)}(0)\rangle_{\beta,0}^{(\Lambda)}. 
\end{equation}

Next, we estimate the mean energy $\mathcal{E}_{\beta,0}^{(\Lambda)}$. For this purpose, we write   
\begin{equation}
\mathcal{F}_{\beta,0}^{(\Lambda)}
:=-\frac{1}{\beta|\Lambda|}\log Z_{\beta,0}^{(\Lambda)}
\end{equation}
for the free energy, and   
\begin{equation}
\mathcal{S}_{\beta,0}^{(\Lambda)}:=-\frac{1}{|\Lambda|}{\rm Tr}\; \rho_{\beta,0}^{(\Lambda)}\log \rho_{\beta,0}^{(\Lambda)}
\end{equation}
for the entropy, where $Z_{\beta,0}^{(\Lambda)}={\rm Tr}\; e^{-\beta H^{(\Lambda)}(0)}$ and 
$\rho_{\beta,0}^{(\Lambda)}:=e^{-\beta H^{(\Lambda)}(0)}/Z_{\beta,0}^{(\Lambda)}$. 
As is well known, the following relation holds: 
\begin{equation}
\mathcal{F}_{\beta,0}^{(\Lambda)}=\mathcal{E}_{\beta,0}^{(\Lambda)}-\beta^{-1}\mathcal{S}_{\beta,0}^{(\Lambda)}.
\end{equation}
The free energy in the infinite-volume limit $\Lambda\nearrow \ze^d$ exists and does not depend on the boundary conditions, and 
the following zero temperature limit $\beta\nearrow\infty$ exists and equals the mean energy $\mathcal{E}_0$ 
of a ground state $\omega_0$ of the corresponding translationally 
invariant system:\footnote{See, e.g., Chapter~6.2 of the book \cite{BR}.}  
\begin{equation}
\lim_{\beta\nearrow\infty}\lim_{\Lambda\nearrow\ze^d}\mathcal{F}_{\beta,0}^{(\Lambda)}
=\mathcal{E}_0.
\end{equation}
Since the entropy term $\beta^{-1}\mathcal{S}_{\beta,0}^{(\Lambda)}$ is vanishing in the limit $\beta\nearrow\infty$,   
these observations imply that for any given small $\tilde{\delta}>0$, there exists a sufficiently large $\beta$ such that 
\begin{equation}
\Bigl|\lim_{\Lambda\nearrow\ze^d}\mathcal{E}_{\beta,0}^{(\Lambda)}-\mathcal{E}_0\Bigr|\le \tilde{\delta}. 
\end{equation}
In other words, there exists a positive function $\tilde{\delta}(\beta)$ of $\beta$ such that 
the function $\tilde{\delta}(\beta)$ becomes small for a large $\beta$, and that 
\begin{equation}
\mathcal{E}_0- \tilde{\delta}(\beta)\le 
\lim_{\Lambda\nearrow\ze^d}\mathcal{E}_{\beta,0}^{(\Lambda)}\le \mathcal{E}_0+ \tilde{\delta}(\beta). 
\end{equation}
Combining this with (\ref{E1E0bound}), we obtain 
\begin{equation}
\label{E1E0bound2}
-dg\mathcal{E}_1-d|\kappa|\le \mathcal{E}_0+\tilde{\delta}(\beta),
\end{equation}
where we have written 
\begin{equation}
\mathcal{E}_1:=\lim_{\Lambda\nearrow\ze^d}\mathcal{E}_1^{(\Lambda)}. 
\end{equation}

In order to estimate the mean energy $\mathcal{E}_0$, we consider the present system with the periodic boundary condition. 
We write $H_{\rm P}^{(\Lambda)}(0)$ for the Hamiltonian without the external symmetry-breaking field. 
Consider two state vectors, 
\begin{equation}
(1\pm a_{x,\uparrow}^\dagger a_{x,\downarrow}^\dagger)|0\rangle,  
\end{equation}
where $|0\rangle$ is the vacuum for the fermions, i.e., $a_{x,\sigma}|0\rangle=0$ for all $x\in\Lambda$ 
and $\sigma=\uparrow, \downarrow$. 
Then, one has 
\begin{equation}
\label{eigenGamma1}
(a_{x,\uparrow}^\dagger a_{x,\downarrow}^\dagger + a_{x,\downarrow}a_{x,\uparrow})
(1\pm a_{x,\uparrow}^\dagger a_{x,\downarrow}^\dagger)|0\rangle
=\pm (1\pm a_{x,\uparrow}^\dagger a_{x,\downarrow}^\dagger)|0\rangle.
\end{equation}
Namely, these vectors are eigenstates of the operator $\Gamma_x^{(1)}$.  
Further, 
\begin{equation}
\label{exchangeGamma2}
(a_{x,\uparrow}^\dagger a_{x,\downarrow}^\dagger - a_{x,\downarrow}a_{x,\uparrow})
(1\pm a_{x,\uparrow}^\dagger a_{x,\downarrow}^\dagger)|0\rangle
=\mp (1\mp a_{x,\uparrow}^\dagger a_{x,\downarrow}^\dagger)|0\rangle.
\end{equation}
This implies that the operator $\Gamma_x^{(2)}$ exchanges the two vectors. 
In order to estimate the mean energy $\mathcal{E}_0^{(\Lambda)}$ of the finite-volume ground state $\omega_0^{(\Lambda)}$, 
let us consider a variational state, 
\begin{equation}
\Phi_{\rm var}:=\Biggl[\prod_{x\in\Lambda_{\rm odd}}\frac{1}{\sqrt{2}}(1-a_{x,\uparrow}^\dagger a_{x,\downarrow}^\dagger)\Biggr]
\Biggl[\prod_{x\in\Lambda\backslash\Lambda_{\rm odd}}\frac{1}{\sqrt{2}}
(1+a_{x,\uparrow}^\dagger a_{x,\downarrow}^\dagger)\Biggr]|0\rangle. 
\end{equation}
The variational principle yields 
\begin{equation}
\omega_{0}^{(\Lambda)}(H_{\rm p}^{(\Lambda)}(0))\le 
\langle \Phi_{\rm var},H_{\rm }^{(\Lambda)}(0)\Phi_{\rm var}\rangle=-\frac{dg}{2}|\Lambda|
\end{equation}
because the contribution from the hopping Hamiltonian $H_{\rm hop, P}^{(\Lambda)}$ is vanishing 
for the variational state $\Phi_{\rm var}$. Here, we have also used the expression (\ref{HintGamma12}) and 
the above relations, (\ref{eigenGamma1}) and (\ref{exchangeGamma2}), for getting the estimate in the right-hand side. 
Therefore, we have 
\begin{equation}
\mathcal{E}_0\le -\frac{dg}{2}
\end{equation}
in the infinite-volume limit. 
Substituting this into the right-hand side of (\ref{E1E0bound2}), we obtain the desired bound, 
$$
\mathcal{E}_1\ge \frac{1}{2}-\frac{\tilde{\delta}(\beta)}{dg}-\frac{|\kappa|}{g}.
$$


\end{document}